\documentclass[a4paper,UKenglish,cleveref, autoref, thm-restate]{lipics-v2021}
\newcommand{\aI}{\mathcal{I}}

\newcommand{\empt}{\Box}

\newcommand{\kl}{D_\mathrm{KL}}

\newcommand{\RN}{\mathbb{R}}
\newcommand{\NN}{\mathbb{N}}
\newcommand{\ZN}{\mathbb{Z}}

\newcommand{\aR}{\mathcal{R}}

\newcommand{\pse}{\mathrm{PE}}
\newcommand{\lse}{\mathrm{LE}}
\newcommand{\use}{\mathrm{UE}}

\newcommand{\prob}{\mathbb{P}}
\newcommand{\expe}{\mathbb{E}}

\newcommand{\indi}{\mathbf{1}}

\newcommand{\conf}{\delta}

\newcommand{\sgn}{\sigma}
\newcommand{\dsgn}{\omega}

\newcommand{\Var}{\mathrm{Var}}

\newcommand{\eval}[1]{\llbracket  #1 \rrbracket}

\newcommand{\dfp}{r}
\newcommand{\dff}{s}

\newcommand{\dsum}{\mathrm{S}}

\newcommand{\edsum}{\mathbb{S}}
\newcommand{\eset}{\mathcal{U}}
\newcommand{\einter}{\mathcal{I}}

\newcommand{\filt}{\mathcal{F}}

\usepackage{dsfont} % to use \mathds
\usepackage{mathtools} % for \coloneqq
\usepackage{algorithm}
\usepackage{algpseudocode}
\usepackage{bm} % for \bm
\usepackage{stmaryrd} % for \llbracket, \rrbracket
\usepackage{booktabs} % for \toprule, \midrule, \bottomrule

\newcommand{\1}{\mathds{1}} %indicator function

\newcommand{\moni}{\mathcal{M}}
\title{Monitoring Discounted Sum Properties} %TODO Please add

\author{Filip Cano}{Institute of Science and Technology Austria}{filip.cano@ist.ac.at}{}{}

\author{Thomas A. Henzinger}{Institute of Science and Technology Austria}{tah@ist.ac.at}{}{}

\author{Konstantin Kueffner}{Institute of Science and Technology Austria}{konstantin.kueffner@ist.ac.at}{}{}

\author{N. Ege Saraç}{CISPA Helmholtz Center for Information Security, Germany}{ege.sarac@cispa.de}{}{}

\authorrunning{F. Cano, T.A. Henzinger, K. Kueffner, and N.E. Saraç} %TODO mandatory. First: Use abbreviated first/middle names. Second (only in severe cases): Use first author plus 'et al.'

\Copyright{Filip Cano, Thomas A. Henzinger, Konstantin Kueffner, N. Ege Saraç} %TODO mandatory, please use full first names. LIPIcs license is "CC-BY";  http://creativecommons.org/licenses/by/3.0/

\ccsdesc[500]{Software and its engineering~Dynamic analysis}
\ccsdesc[300]{Mathematics of computing~Stochastic processes}
\ccsdesc[300]{Software and its engineering~Specification languages}
\ccsdesc[300]{Theory of computation~Quantitative automata}

\keywords{Runtime Verification, Probabilistic Systems, Quantitative Verification, Approximate Monitoring} %TODO mandatory; please add comma-separated list of keywords

\category{} %optional, e.g. invited paper

\relatedversion{} %optional, e.g. full version hosted on arXiv, HAL, or other respository/website
\acknowledgements{This work has been supported by the European Research Council under Grant No.: ERC-2020-AdG 101020093.}%optional

\nolinenumbers %uncomment to disable line numbering

\EventEditors{Ana Sokolova and Patrick Totzke}
\EventNoEds{2}
\EventLongTitle{37th International Conference on Concurrency Theory (CONCUR 2026)}
\EventShortTitle{CONCUR 2026}
\EventAcronym{CONCUR}
\EventYear{2026}
\EventDate{September 1--4, 2026}
\EventLocation{Liverpool, UK}
\EventLogo{}
\SeriesVolume{391}
\ArticleNo{32}
\begin{document}

\maketitle

%TODO mandatory: add short abstract of the document
\begin{abstract}
Runtime monitoring of quantitative signals faces a fundamental trade-off between volatility and over-aggregation: instantaneous observations are noisy, while long-run averages obscure local structure. Localisation measures such as discounted averages offer a principled middle ground, yet remain poorly understood in runtime verification. This paper studies discounted sums from a monitoring perspective, in both deterministic and stochastic settings. We formalize the discounted monitoring problem and show that exact, sound monitoring of discounted sums cannot be achieved with finite memory. To overcome this impossibility, we introduce $\varepsilon$-approximately sound monitoring, deriving explicit bounds on memory and observation requirements. We then extend the framework to stochastic processes via expected discounted sums, defining pointwise and uniform $(\varepsilon,\delta)$-soundness notions, establishing statistical optimality, and proving impossibility beyond a precision threshold.
We also formalize the resource complexity of deterministic discounted monitoring via affine register machines and prove a tight worst-case lower bound.
Finally, we present a specification language for arithmetic expressions over multiple discounted sums with synchronous and asynchronous semantics, and evaluate our approach on practical scenarios including algorithmic fairness.
\end{abstract}
\section{Introduction}
Runtime monitoring is a central problem in modern computer science, with applications across systems, networks, machine learning, security, and cyber-physical systems~\cite{bartocci2018introduction,sanchez2019survey}. 
Traditionally, it has focused either on instantaneous properties or long-run behaviors. However, for quantitative signals, such as CPU temperature, model accuracy, or algorithmic fairness, there is a tension: single observations are noisy, while long-run aggregates can obscure local structure. We illustrate this trade-off with the following motivating examples.

\begin{enumerate}[(i)]
    \item \emph{CPU temperature:} monitoring the temperature of a CPU is essential to prevent hardware damage; the conundrum: a single spike that dissipates quickly may raise an undue alarm, yet aggregating over long stretches of time can hide sustained overheating.
    \item \emph{Model accuracy:} monitoring a model’s accuracy is required to ensure sustained performance as distributions shift; the conundrum: one bad prediction is not indicative of the model’s performance, yet a long-run average may fail to adapt to distribution shifts.
    \item \emph{Algorithmic fairness:} monitoring the fairness of an arbiter may uncover uneven issuing of grants between two clients: in isolation, every grant issued to one client is maximally unfair to the other, yet a limit average can hide the deliberate starvation of a client.
\end{enumerate}
A natural way to resolve this conundrum is to use \emph{localisation measures}, such as the window average, discounted average, or, more generally, kernel-weighted averages. These are essentially convolutions that smooth volatile signals by weighting values around a given time point inversely with their distance.
While common in time series analysis, signal processing, and image classification, localisation measures are underexplored in runtime verification.
In this paper, we study a (seemingly) trivial and widely used localisation measure—the discounted average—from a runtime verification perspective.
Discounted monitoring faces two fundamentally different sources of uncertainty: epistemic uncertainty from unseen past and future values, and statistical uncertainty from noisy observations. This paper treats both explicitly and shows that each leads to distinct impossibility thresholds.

\vspace{-0.5em}\subparagraph{Deterministic Discounted Monitoring.}
We assume a bi-infinite stream of numerical values $w=(x_i)_{i\in\mathbb Z}$. Centered at time index $t\in\ZN$, we consider a bi-directional discounted sum, with discount factor $r\in [0,1)$ for the past and $s\in [0,1)$ for the future:
\begin{equation}
     S_t^{r,s}(w)\coloneqq
     \underbrace{\textstyle\sum_{i=1}^\infty r^i \cdot x_{t-i}}_{\text{past}}
     \;+\;
     \underbrace{x_t}_{\text{present}}
     \;+\;
     \underbrace{\textstyle\sum_{i=1}^\infty s^i \cdot x_{t+i}}_{\text{future}}.
\end{equation}
\vspace{-1em}
\begin{example}[Decision Fairness]
    The stream $w$ is a sequence of binary decisions, where $x_t=1$ indicates that client~$1$ was granted access to the resource at time $t$, while $x_t=0$ indicates the same for client~$0$.
    Suppose we are at time $t$ and client~$0$ was granted the resource (so $x_t=0$).
    Normalising the discounted sum by  $\lambda^{r,s}\coloneqq \tfrac{r}{1-r}+\tfrac{s}{1-s}+1$, we obtain the discounted average. 
    Then $S_t^{r,s}(w)/\lambda^{r,s}$ quantifies how fairly the arbiter treated the clients around time $t$, where $0.5$ indicates perfect fairness. Setting $s=0$ yields a purely backward-looking notion of fairness, whereas setting $r=0$ yields a forward-looking notion.
    In particular, for forward-looking fairness, if the arbiter initially favours client~$1$, the discount factor dictates how much it must compensate later to remain fair, if possible at all.
\end{example}
In discounted monitoring, time progresses and we slide the discounted sum across the value stream, yielding the sequence of discounted property values.
Without loss of generality, our monitors start observing the stream at time $0$. After each new observation at time $n$, the monitor must assess, for every past index $t\in[0;n]$, whether the discounted sum lies inside or outside a target interval $\aI$:
\begin{align*}
     \cdots,\; S_{t-2}^{r,s}(w),\; S_{t-1}^{r,s}(w),\; &S_t^{r,s}(w),\; S_{t+1}^{r,s}(w),\; S_{t+2}^{r,s}(w),\; \cdots\\
     &\uparrow \text{\small bound error}\\
     w_0, w_1, \cdots, w_{n-1}, w_n \;\rightarrow\; &\text{$\moni$onitor.}\;\rightarrow\; (\in \aI?) \text{-verdict}     
\end{align*}
The fundamental challenge is \emph{epistemic uncertainty}: the monitor must reason about a function over a bi-infinite stream after observing only $n$ values.
Within this setting, we propose a monitor that maintains, for each time index $t$, an \emph{uncertainty set} containing all possible values the full discounted sum at time $t$ could take. 
Although the uncertainty sets shrink as more information becomes available, we show that the time it takes to shrink enough to be completely inside or outside the target interval is unbounded.
As a solution, we study monitors with a tolerance of $\varepsilon$ in their verdict, and obtain an upper bound on the number of observations required to reach a decisive verdict, enabling monitoring with bounded resources at the cost of precision. 
We call these $\varepsilon$-approximate monitors.

\vspace{-0.5em}\subparagraph{Statistical Discounted Monitoring.}
In many applications, the value stream is a realisation of a stochastic process $W=(X_i)_{i\in\mathbb Z}$. 
In such settings, we are often interested in expected quantities rather than noisy observations, e.g., the latent temperature or the arbiter’s underlying propensity.
This leads to our second object of study: the \emph{expected discounted sum}, defined as the bi-directional discounted sum evaluated over conditional expectations,
\begin{equation}
     \edsum_t^{r,s}(W)\coloneqq
     \underbrace{\textstyle\sum_{i=1}^\infty r^i \expe_{t-i-1}(X_{t-i})}_{\text{past}}
     \;+\;
     \underbrace{\expe_{t-1}(X_t)}_{\text{present}}
     \;+\;
     \underbrace{\textstyle\sum_{i=1}^\infty s^i \expe_{t+i-1}(X_{t+i})}_{\text{future}}.
\end{equation}

\begin{example}[Bias Fairness]
    Consider a stochastic arbiter that generates its decision $X_t$ at time $t$ by tossing a coin with probability $P_t$, where $P_t$ is chosen based on the history up to time $t$.
    This yields two processes: the hidden bias process $U=(P_t)_{t\in\ZN}$ and the observed outcome process $W=(X_t)_{t\in\ZN}$, with $\expe_{t-1}(X_t)=P_t$.
    If we want to assess the arbiter around time $t$ independently of chance, we should evaluate discounted sums (or averages) with respect to the bias process, e.g., $S_t^{r,s}(U)=\edsum_t^{r,s}(W)$ rather than $S_t^{r,s}(W)$.
\end{example}

In statistical discounted monitoring, we slide both the expected and observed discounted sums over the stochastic stream, yielding expected and observed discounted property values.
As before, monitors start observing the realised sequence at time $0$. After each new observation at time $n$, the monitor must assess, for every past index $t\in[0;n]$, whether the \emph{expected} discounted sum lies inside or outside a target interval $\aI$ with high probability.
\begin{align*}
   \cdots,\; \edsum_{t-2}^{r,s}(W),\; \edsum_{t-1}^{r,s}(W),\; &\edsum_t^{r,s}(W),\; \edsum_{t+1}^{r,s}(W),\; \edsum_{t+2}^{r,s}(W),\; \cdots\\
   &\uparrow \text{\small bound deviation}\\
   \cdots,\; S_{t-2}^{r,s}(W),\; S_{t-1}^{r,s}(W),\; &S_t^{r,s}(W),\; S_{t+1}^{r,s}(W),\; S_{t+2}^{r,s}(W),\; \cdots\\
   &\uparrow \text{\small bound error}\\
   W_0, W_1, \cdots, W_{n-1}, W_n \;\rightarrow\; &\text{$\moni$onitor.}\;\rightarrow\; (\in \aI?) \text{-verdict}
\end{align*}
Apart from epistemic uncertainty, in this setting the extra challenge is \emph{statistical inference}: the monitor has to estimate also the deviation between the observed and expected discounted sums.
Because of the probabilistic nature of the monitored quantities,
the monitor's verdict is accompanied by an error probability $\delta$.
Within this setting, we extend $\varepsilon$-approximate monitoring to the probabilistic setting, and define three notions of $(\varepsilon,\delta)$-soundness with increasing strength of guarantees: pointwise, local, and uniform; depending on whether the probability of an error is given for each point after a fixed release time (pointwise), for each point with flexible release time (local), or for the whole execution (uniform). As the strength of the guarantee increases, the monitor's precision decreases. We show that the deviation bounds used by our pointwise sound monitors are minimax optimal, which establishes that statistical discounted monitoring is impossible beyond a certain precision.

\vspace{-0.5em}\subparagraph{Generalization to Arithmetic Expressions.}
Many relevant quantitative properties are naturally expressed as relations between discounted sums rather than as single aggregates. 
To support discounted versions of such properties, we introduce a richer specification language in which a monitor observes multiple outcome streams and monitors an arithmetic expression over their discounted sums.
We distinguish between synchronous and asynchronous semantics for combining discounted sums.
\begin{example}[Demographic Parity]
\label{ex:dem-parity}
    So far we have assumed that both clients always request access to the resource. We model the more general setting using the binary sequences representing:
    $w^{r_1}$ the requests of client 1, $w^{g_1}$ the grants client 1, $w^{r_2}$ the requests client 2, and $w^{g_2}$ the grants client 2. All indexed by the same global clock.
    Then we can express the discounted acceptance rate, i.e., a discounted form of demographic parity~\cite{dwork2012fairness}, as the difference between the ratios $\dsum_{t}^{r,s}(w^{g_1})/\dsum_{t}^{r,s}(w^{r_1})$ and 
    $\dsum_{t}^{r,s}(w^{g_2})/\dsum_{t}^{r,s}(w^{r_2})$. 
\end{example}

\vspace{-0.5em}\subparagraph{Register Complexity.}
The $\varepsilon$-approximate monitor from the deterministic setting yields a finite observation horizon; let $\tau$ denote the number of additional observations after which each monitored position receives a sound verdict up to the $\varepsilon$ boundary region.
This horizon is first a delay bound: a verdict for time $t$ may require waiting until time $t+\tau$.
It also serves as a memory bound: during this waiting period, several positions may remain unresolved simultaneously, and the monitor maintains a separate running discounted sum for each of them.
The delay bound, however, only says when verdicts are guaranteed to arrive; it does not characterize how much information must be stored before those verdicts are reached.
This leads to the register-complexity question: can the overlapping discounted sums in general be represented more compactly than by one running sum per pending position?

To make this question precise, we formalize monitors as \emph{affine register machines (ARMs)}: finite-state monitors equipped with real-valued registers, affine updates, and linear guards.
In this model, we answer this question negatively by proving a worst-case lower bound already for future-only monitoring.
For every $k$, there is a future-only monitor with horizon $k$ that cannot be monitored by any ARM with fewer than $k$ registers.

\vspace{-0.5em}\subparagraph{Experimental Evaluation.}
Finally, we implement our monitors on scenarios mirroring the motivating examples, and investigate empirically
how actual register use compares to our theoretical upper bounds, and how register usage differs between synchronous and asynchronous interpretations of arithmetic properties.
\vspace{-0.5em}\subparagraph{Contributions.}
The main contributions of this paper are as follows:
\begin{enumerate}
    \item A formalization of the discounted monitoring problem, and a proof that exact, sound discounted monitoring cannot be achieved with finite memory.
    \item An $\varepsilon$-approximately sound notion of discounted monitoring, accompanied by explicit bounds on the required number of observations and registers.
    \item An extension of discounted monitoring to stochastic processes via expected discounted sums, including pointwise and uniform $(\varepsilon,\delta)$-soundness notions, as well as results on statistical optimality and impossibility beyond a precision threshold.
    \item A specification language for arithmetic expressions over multiple discounted sums, supporting both synchronous and asynchronous semantics.
    \item A formalization of monitors as affine register machines and a tight bound on the number of registers required for sound $\varepsilon$-approximate monitoring in the future-only case.
    \item An implementation and empirical evaluation of the proposed monitors on realistic case studies, analyzing performance and efficiency relative to the proven worst-case guarantees.
\end{enumerate}

\section{Monitoring Discounted Sums}
\label{sec:monitoring-discounted-sums}

\subsection{Discounted Sum Property}

The main object of study in this paper is the discounted sum operator $S_t^{r,s}$, where $r,s\in [0,1)$ are the past and future discount factors and $t\in \ZN$ is a time index.
It acts on bi-infinite sequences $w=(x_t)_{t\in \ZN}$ taking values in a nonempty bounded set $\mathcal R \subset \RN$ of diameter $d_{\mathcal R}\coloneqq \sup \mathcal R - \inf \mathcal R$.
We call a tuple $(\mathcal R, r,s)$ an \emph{input setting}.
Intuitively, $S_t^{r,s}$ is centered at $t$, discounting past values ($i < t$) by $r$ and future values ($j > t$) by $s$.
Formally, for $n_l \leq t \leq n_u$, let $w_{n_l:n_u} = (x_{n_l}, \dots, x_{n_u})$ and define $S_t^{r,s}(w_{n_l:n_u}) \coloneqq \sum_{i=1}^{t-n_l} r^i x_{t-i} + x_t + \sum_{i=1}^{n_u-t} s^i x_{t+i}$.
For a bi-infinite sequence $w$, define $S_t^{r,s}(w) \coloneqq \sum_{i=1}^{\infty} r^i x_{t-i} + x_t + \sum_{i=1}^{\infty} s^i x_{t+i}$.

\vspace{-0.5em}\subparagraph{Basic properties.}
We prove some basic properties of discounted sums. 
First, the discounted sum converges, and consecutive sums differ by at most $d_{\mathcal R}$. 
This follows from the definition.

\begin{lemma}
    \label{lem:convergingsums}
    Let $(\mathcal R, r,s)$ be an input setting, $w\in \mathcal R^{\mathbb Z}$, and $t\in\ZN$.
    Then $S_t^{r,s}(w)$ exists and 
    $\big| S^{r,s}_t(w) - S^{r,s}_{t-1}(w)\big|\leq d_{\mathcal R}$.
\end{lemma}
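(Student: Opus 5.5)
The plan is to prove existence by absolute convergence, and then to write the difference $S^{r,s}_t(w)-S^{r,s}_{t-1}(w)$ as one series over the entries $x_j$. The bound should follow from the sign structure of its coefficients.

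\emph{Existence.} Since $\mathcal R$ is bounded, let $M\coloneqq\sup_{x\in\mathcal R}|x|<\infty$. The past series is dominated by $\sum_{i\ge 1} r^i M = Mr/(1-r)$. The future series is dominated by $Ms/(1-s)$. Both converge absolutely because $r,s\in[0,1)$, so $S_t^{r,s}(w)$ exists for every $t$. Absolute convergence also lets us rearrange and combine the terms of $S_t^{r,s}(w)$ and $S_{t-1}^{r,s}(w)$ freely in the next step.

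\emph{Coefficient computation.} Write $S_t^{r,s}(w)=\sum_{j\in\ZN} c_t(j)\,x_j$, where $c_t(j)=r^{t-j}$ for $j\le t$ and $c_t(j)=s^{j-t}$ for $j\ge t$ (both give $1$ at $j=t$). Then $S_t^{r,s}(w)-S_{t-1}^{r,s}(w)=\sum_j a_j x_j$ with $a_j=c_t(j)-c_{t-1}(j)$. A direct computation gives three cases:
\begin{itemize}
\item $a_j = r^{t-1-j}(r-1)\le 0$ for $j\le t-1$;
\item $a_t = 1-s\ge 0$;
\item $a_j = s^{j-t}(1-s)\ge 0$ for $j>t$.
\end{itemize}
Summing geometric series, the positive coefficients have total mass $(1-s)\sum_{k\ge0}s^k=1$. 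The negative coefficients have total mass $-(1-r)\sum_{k\ge 0}r^k=-1$. Hence $\sum_j a_j=0$.

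\emph{Bound.} Let $c\coloneqq\inf\mathcal R$ and write $x_j=c+y_j$ with $y_j\in[0,d_{\mathcal R}]$. Since $\sum_j a_j=0$, the constant $c$ cancels, leaving $\sum_j a_j y_j$. Dropping the negative terms gives $\sum_j a_j y_j\le d_{\mathcal R}\sum_{a_j>0}a_j=d_{\mathcal R}$. Dropping the positive terms gives $\sum_j a_j y_j\ge -d_{\mathcal R}\sum_{a_j<0}|a_j|=-d_{\mathcal R}$. Together these yield $|S_t^{r,s}(w)-S_{t-1}^{r,s}(w)|\le d_{\mathcal R}$.

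The only delicate point is the coefficient bookkeeping: correctly aligning the indices of the two shifted sums, and handling the edge cases $r=0$ or $s=0$. In those cases $0^0=1$ and the formulas above still hold. Justifying the termwise combination of the two series is immediate from the absolute convergence established in the first step.
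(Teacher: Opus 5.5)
Your proof is correct and follows essentially the same route as the paper: both write $S^{r,s}_t(w)-S^{r,s}_{t-1}(w)$ as a weighted average of $x_t, x_{t+1},\dots$ with weights $(1-s)s^k$ minus a weighted average of $x_{t-1}, x_{t-2},\dots$ with weights $(1-r)r^k$, each of total mass $1$. The paper bounds each average in $[\inf\mathcal R,\sup\mathcal R]$ directly, whereas you shift by $\inf\mathcal R$ and drop the terms of one sign, which is the same estimate.
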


Second, we characterize the values a discounted sum can take.
Formally, let $\mathrm{Sums}_{\mathcal R}^{r,s}\coloneqq\{ S_t^{r,s}(w)\::\: w\in\mathcal R^{\mathbb Z}\}$; note that this set does not depend on $t$.
Its shape depends on whether $\mathcal R$ has gaps that are too large relative to the discount factors.

\begin{definition}[Proper input setting]
    Let $(\mathcal R, r,s)$ be an input setting. 
    The \emph{normalization factor} $\lambda^{r,s}$ and the \emph{maximum gap} $\Delta_\mathcal R$ are defined as
    \begin{equation}
    \lambda^{r,s} \coloneqq 1+\frac{r}{1-r}+\frac{s}{1-s}, \qquad
    \Delta_\mathcal R \coloneqq
    \sup\Bigl(\{0\}\cup
    \{\,b-a:\ [a,b]\subseteq [\inf\mathcal R,\sup\mathcal R]\setminus \mathcal R\,\}\Bigr).
    \end{equation}
    We say that $(\mathcal R, r,s)$ is \emph{proper} if $\Delta_{\mathcal R}\leq \min\!\left\{\frac{r}{1-r},\frac{s}{1-s}\right\} d_{\mathcal R}$.
\end{definition}

The convention $\sup\{0\}=0$ ensures $\Delta_{\mathcal R}=0$ whenever $\mathcal R$ has no gaps in its interval hull, so interval-valued domains are automatically proper.
For non-interval domains such as $\mathcal R=\{0,1\}$, properness is a substantive condition: it requires the discount factors to be large enough to interpolate across the gaps.

\begin{lemma}
\label{lem:possible-sums}
    Let $(\mathcal R, r,s)$ be an input setting.
    \begin{enumerate}
    \item[(1)] $\mathrm{Sums}_{\mathcal R}^{r,s}\subseteq 
    \lambda^{r,s} \cdot [\inf\mathcal R,\, \sup\mathcal R]$.
    \item[(2)] If $(\mathcal R, r,s)$ is proper, then 
    $\mathrm{Closure}(\mathrm{Sums}_{\mathcal R}^{r,s}) =
    \lambda^{r,s} \cdot [\inf\mathcal R,\, \sup\mathcal R]$.
    \item[(3)] If $(\mathcal{R}, r, s)$ is proper and $\mathcal{R}$ is compact, then $\mathrm{Sums}_{\mathcal{R}}^{r,s} = \lambda^{r,s} \cdot [\inf \mathcal{R}, \sup \mathcal{R}]$.
    \end{enumerate}
\end{lemma}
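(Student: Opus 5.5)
Part (1) is immediate from the definition: each coefficient $r^i$, $1$, $s^i$ is nonnegative and they sum to $\sum_{i\ge1}r^i+1+\sum_{i\ge1}s^i=\lambda^{r,s}$. So bounding every $x_j$ between $\inf\mathcal R$ and $\sup\mathcal R$ and using convergence from Lemma~\ref{lem:convergingsums} gives $\lambda^{r,s}\inf\mathcal R\le S_t^{r,s}(w)\le\lambda^{r,s}\sup\mathcal R$. Since $\mathrm{Sums}$ does not depend on $t$, we fix $t=0$ throughout.

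For (2) and (3), the plan is to reduce to the compact case and then prove (3) with a greedy, digit-expansion style construction. Write $a=\inf\mathcal R$, $b=\sup\mathcal R$, $d=b-a$. For (2), the closure of the sum set lies inside the interval by (1), so only density remains. Given $\eta>0$, replace each chosen value $x_j\in\overline{\mathcal R}$ by a point of $\mathcal R$ within $\eta$ of it. This changes the sum by at most $\lambda^{r,s}\eta$. The closure $\overline{\mathcal R}$ has the same inf, sup and maximum gap (a closed gap interval $[c,e]$ avoiding $\overline{\mathcal R}$ also avoids $\mathcal R$, and conversely the open interior of a gap of $\mathcal R$ avoids $\overline{\mathcal R}$ up to endpoints, so the suprema agree). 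Hence $(\overline{\mathcal R},r,s)$ is proper, and (2) follows from (3) applied to $\overline{\mathcal R}$.

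For (3), take a target $y\in\lambda^{r,s}[a,b]$. Shifting by $a$ (which shifts every sum by $\lambda^{r,s}a$), assume $a=0$, so $\mathcal R\subseteq[0,d]$ is compact and contains $0$ and $d$. I split the sum into the present-plus-future part $F=x_0+\sum_{i\ge1}s^ix_i\in[0,d/(1-s)]$ and the past part $P=\sum_{i\ge1}r^ix_{-i}\in[0,dr/(1-r)]$. It suffices to show that each part separately attains every value in its range, because the two ranges then add up to the full interval.

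To show the future part attains every $z\in[0,d/(1-s)]$, I would use the greedy choice: pick $x_0\in\mathcal R$ so that the remainder $z_1=(z-x_0)/s$ lies in $[0,d/(1-s)]$, and iterate. Then the partial sums converge to $z$, because the remainders stay bounded and are multiplied by $s^n\to0$. The past part is handled the same way with $r$ in place of $s$, after factoring out one $r$: write $P=r(x_{-1}+r x_{-2}+\cdots)$, so the inner series is of the same form as $F$ with ratio $r$.

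The main obstacle is the existence of a valid digit at each greedy step. We need $x\in\mathcal R$ with $z-x\in[0,sd/(1-s)]$, that is, $\mathcal R\cap[z-sd/(1-s),\,z]\neq\emptyset$ for every $z\in[0,d/(1-s)]$. When $z\ge d$ the point $x=d$ works, and small $z$ is covered by $x=0$. For intermediate $z$, take $x=\max(\mathcal R\cap[0,z])$, which exists by compactness. If $z-x>sd/(1-s)$, then the open interval $(x,z]$ avoids $\mathcal R$, and so does a closed interval inside it of length approaching $z-x$. This would make $\Delta_{\mathcal R}\ge z-x>sd/(1-s)$, contradicting properness; the boundary case needs a little care with closed intervals and the $\sup$. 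For the past part, the same argument uses the bound $\Delta_{\mathcal R}\le\frac{r}{1-r}d$ from properness.
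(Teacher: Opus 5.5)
Your proof is correct. Parts (1) and (2) match the paper: the direct bound, then passing to $\overline{\mathcal R}$ and approximating coordinatewise. For part (3) you take a genuinely different route.

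The paper identifies the one-sided tail set $A_r(\mathcal R)=\{\sum_{i\ge1}x_ir^i\}$ as the unique fixed point of the contraction $K\mapsto r(\mathcal R+K)$ on nonempty compact sets with the Hausdorff metric, using Banach's theorem and Tychonoff for compactness of $A_r(\mathcal R)$. It checks that under properness the interval $[\tfrac{rm}{1-r},\tfrac{rM}{1-r}]$ is also a fixed point, repeats this for $s$, and then adds the centre digit by a separate overlap argument.

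You instead absorb the centre digit into the future series. A greedy digit expansion shows that $x_0+\sum_{i\ge1}s^ix_i$ attains every value of $[0,d/(1-s)]$, and you reuse the same lemma with ratio $r$ for the past after factoring out one $r$. Your route is elementary and constructive: it produces an explicit sequence realising any target value and needs compactness only to take $\max(\mathcal R\cap[0,z])$. It also makes explicit the covering step that the paper handles informally via ``adjacent pieces overlap''. Showing $F_r(I_r)$ is the whole interval is exactly your condition $\mathcal R\cap[z-\ell,z]\neq\emptyset$ for the appropriate piece length $\ell$.

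The paper's fixed-point formulation gives a clean set-level identity for the entire tail set, which it reuses for the tails in Lemma~\ref{lem:uncertainty-intervals}. Your lemma, rescaled by $r^{t+1}$ or $s^{n-t+1}$, serves equally well there.

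Two small remarks. First, the boundary case needs no extra care: you directly get $\Delta_{\mathcal R}\ge z-x>\tfrac{s}{1-s}d$, a strict contradiction with properness. Second, for $s=0$ the recursion $z_1=(z-x_0)/s$ is undefined. In that case your digit-existence argument, with $\tfrac{sd}{1-s}=0$, already gives $z\in\mathcal R$, so you simply take $x_0=z$.
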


The three statements isolate the role of each assumption.
Boundedness of $\mathcal{R}$ gives the interval over-approximation in~(1).
Properness ensures that the closure of the set of sums fills this interval, as stated in~(2).
Compactness lifts closure equality to equality in~(3).
For a non-closed domain, the endpoints need not be attainable; under compactness they are attained by the constant sequences at $\inf \mathcal{R}$ and $\sup \mathcal{R}$.

\vspace{-0.5em}\subparagraph{Sums vs. Averages.}
In some settings it is more natural to think of a discounted average than a discounted sum. 
A weighted average is a weighted sum whose weights are non-negative and sum to $1$.
Since the weights of a discounted sum add up to $\lambda^{r,s}$, an input setting $(\mathcal R, r,s)$ can be reframed for averages by considering the equivalent setting $(\mathcal R/\lambda^{r,s}, r, s)$ and rescaling each sequence $w = (x_t)_{t\in\mathbb Z}\in \mathcal R^{\mathbb Z}$ to $w' = (x_t/\lambda^{r,s})_{t\in\mathbb Z} \in (\mathcal R/\lambda^{r,s})^{\mathbb Z}$.

\subsection{Monitoring Problem}

A monitor observes a bi-infinite stream of values $w=(x_i)_{i\in\mathbb{Z}}$ from time~$0$ onward.
After observing a finite prefix $w_{0:n}$, it is queried only on observed positions $t\in\{0,\ldots,n\}$.
Formally, a monitor computes a function
$\moni\colon \mathcal{R}^* \times \mathbb{N} \to \{\bot, ?, \top\}$.
Only pairs $(u,t)$ with $t<|u|$ are queried.
The verdicts $\bot$ and $\top$ indicate that $S_t^{r,s}(w)$ lies outside or inside the target interval $\mathcal{I}$, respectively, and $?$ indicates that the verdict is inconclusive.

\subsubsection{Sound Monitors}

\begin{definition}[Sound monitor]
    Let $(\mathcal{R}, r, s)$ be an input setting and let $\mathcal{I}$ be a target interval.
    A monitor is \emph{sound} if, for every $w \in \mathcal{R}^{\mathbb{Z}}$, $n \in \mathbb{N}$, and $t \in \{0,\ldots,n\}$,
    \begin{equation}
        \label{eq:soundmonitor}
        \moni(w_{0:n}, t) = \bot \implies S_t^{r,s}(w) \notin \mathcal{I}
        \quad \text{and} \quad
        \moni(w_{0:n}, t) = \top \implies S_t^{r,s}(w) \in \mathcal{I}.
    \end{equation}
\end{definition}

\vspace{-0.5em}\subparagraph{Uncertainty.}
Since the monitor receives values incrementally, it typically returns an inconclusive verdict ($?$) until the observed prefix contains enough information to commit to a final $\top$ or $\bot$.
For $0 \leq t \leq n$, the \emph{uncertainty set} of a finite prefix $w_{0:n}$ collects the values $S_t^{r,s}$ can take across all extensions of $w_{0:n}$:
$
U_t(w_{0:n}) \coloneqq
\left\{ S_t^{r,s}(w') : w'\in \mathcal{R}^{\mathbb{Z}},\ w'_{0:n} = w_{0:n}\right\}.
$
A sound monitor therefore satisfies
$
\moni(w_{0:n}, t) = \top \implies U_t(w_{0:n}) \subseteq \mathcal{I}
$
and
$\moni(w_{0:n}, t) = \bot \implies U_t(w_{0:n}) \cap \mathcal{I} = \emptyset.
$
Similarly as in Lemma~\ref{lem:possible-sums}, for proper input settings the closure of the uncertainty set is an interval; the uncertainty set itself is that interval when $\mathcal{R}$ is compact.
\begin{lemma}
\label{lem:uncertainty-intervals}
    Let $(\mathcal{R}, r, s)$ be an input setting, let $0 \leq t \leq n$, and let $\gamma^{r,s}_{t,n} \coloneqq \frac{r^{t+1}}{1-r} + \frac{s^{n-t+1}}{1-s}$.
    Then, $U^{r,s}_t(w_{0:n}) \subseteq S^{r,s}_{t}(w_{0:n}) + \gamma^{r,s}_{t,n} [\inf \mathcal{R}, \sup \mathcal{R}]$.
    If $(\mathcal{R}, r, s)$ is proper, then $\mathrm{Closure}({U^{r,s}_t(w_{0:n})}) = S^{r,s}_{t}(w_{0:n}) + \gamma^{r,s}_{t,n} [\inf \mathcal{R}, \sup \mathcal{R}]$.
    If, additionally, $\mathcal{R}$ is compact, then equality holds without taking closures.
\end{lemma}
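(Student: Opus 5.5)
The plan is to decompose any completion $w'$ of $w_{0:n}$ into the observed part and two unobserved tails, and then reduce each claim to Lemma~\ref{lem:possible-sums} applied to one-sided sums. For $w'$ with $w'_{0:n}=w_{0:n}$ we can write
\[
S_t^{r,s}(w') = S_t^{r,s}(w_{0:n}) + \sum_{i=t+1}^{\infty} r^i x'_{t-i} + \sum_{j=n-t+1}^{\infty} s^j x'_{t+j},
\]
where the first tail collects the positions before $0$ and the second the positions after $n$. Set $P \coloneqq \sum_{i\ge t+1} r^i x'_{t-i}$ and $F\coloneqq \sum_{j\ge n-t+1} s^j x'_{t+j}$. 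Their weights sum to $\frac{r^{t+1}}{1-r}$ and $\frac{s^{n-t+1}}{1-s}$ respectively, and these add up to $\gamma^{r,s}_{t,n}$. The tails can be chosen independently of each other and of the prefix. Hence $U_t(w_{0:n}) = S^{r,s}_t(w_{0:n}) + (\mathcal P + \mathcal F)$, where $\mathcal P$ and $\mathcal F$ are the sets of values attained by $P$ and $F$ over $\mathcal R^{\mathbb N}$.

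\textbf{Inclusion.} Each term of $P$ and $F$ lies between $\inf\mathcal R$ and $\sup\mathcal R$ times a nonnegative weight. Summing gives $P+F\in\gamma^{r,s}_{t,n}[\inf\mathcal R,\sup\mathcal R]$, which proves the first claim. The degenerate case $r=0$ or $s=0$ is harmless, since $0^{t+1}=0$ when $t\ge 0$.

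\textbf{Closure equality under properness.} Rescale: $P = r^{t+1}\cdot Q$ with $Q=\sum_{k\ge 0} r^k y_k$, and $F = s^{n-t+1}\cdot Q'$ analogously. The one-sided sum $Q$ with $y_k\in\mathcal R$ is essentially the discounted sum $S^{r,0}$ shifted, i.e.\ a case of Lemma~\ref{lem:possible-sums}. The difficulty is that properness is stated with $\min\{r/(1-r),s/(1-s)\}$, so $(\mathcal R,r,0)$ need not be proper: if $s=0$ the condition forces $\Delta_{\mathcal R}=0$, but in general it does not follow. I would therefore reprove the one-sided density directly with a greedy argument.

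Take a target $v\in\frac{1}{1-r}[\inf\mathcal R,\sup\mathcal R]$. Choose $y_0\in\mathcal R$ (up to an arbitrarily small slack, since $\mathcal R$ may not be closed) such that the remainder $v-y_0$ stays in $\frac{r}{1-r}[\inf\mathcal R,\sup\mathcal R]$, and iterate. Such a $y_0$ exists exactly when every gap of $\mathcal R$ has length at most $\frac{r}{1-r}d_{\mathcal R}$, which is what properness guarantees for $r$. The admissible window for $y_0$ is an interval of length $d_{\mathcal R}(\frac1{1-r}-\frac{r}{1-r})=d_{\mathcal R}$ shifted appropriately, and the covering condition amounts to the gap bound.

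With slack $\eta r^k$ at step $k$, the errors sum to $O(\eta)$, so $\mathcal P$ is dense in $\frac{r^{t+1}}{1-r}[\inf\mathcal R,\sup\mathcal R]$, and likewise for $\mathcal F$. The closure of the Minkowski sum of two bounded sets with dense-in-interval closures is the sum of the intervals. Together with the inclusion, this gives the closure equality.

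\textbf{Compact case.} When $\mathcal R$ is compact, the greedy step can pick exact points with no slack, because $\mathcal R$ is closed and meets each admissible window. The endpoints are attained by constant tails. So $\mathcal P$ and $\mathcal F$ are exactly the intervals and equality holds without closures. Alternatively, one can argue that the map $\mathcal R^{\mathbb N}\to\RN$ is continuous in the product topology, so its image is compact and dense, hence the whole interval.

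\textbf{Main obstacle.} The hardest step is the greedy interpolation: checking carefully that properness is precisely the condition making each window meet $\mathcal R$. If Lemma~\ref{lem:possible-sums}(2) can be applied separately to $(\mathcal R,r,r)$ and $(\mathcal R,s,s)$, which are proper whenever the original setting is, I would instead derive the one-sided statements from it by splitting off the symmetric contributions. I expect the direct greedy argument to be cleaner.
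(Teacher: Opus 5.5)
Your proof is correct in substance. It follows the paper for the inclusion but proves the key one-sided fact by a different route. For the closure equality, the paper reuses Step~1 of the proof of Lemma~\ref{lem:possible-sums}, writing $m=\inf\mathcal R$ and $M=\sup\mathcal R$. For compact $\mathcal R$, the tail set $\{\sum_{i\ge1} r^i x_i : x_i\in\mathcal R\}$ and the interval $\frac{r}{1-r}[m,M]$ are both fixed points of the contraction $F_r(K)\coloneqq r(\mathcal R+K)$ on nonempty compact sets under the Hausdorff metric, so they are equal by Banach's theorem. The gap bound $\Delta_{\mathcal R}\le\frac{r}{1-r}d_{\mathcal R}$ is used only to show that the interval is a fixed point. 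Non-closed $\mathcal R$ is reduced to $\mathrm{Closure}(\mathcal R)$ by density and continuity, and the two tails are rescaled and added. You rightly noticed that Lemma~\ref{lem:possible-sums} cannot be invoked as a black box; the paper likewise only reuses its argument. You replace the fixed-point machinery by an explicit greedy expansion with slack. One greedy step is exactly the inclusion $\frac{r}{1-r}[m,M]\subseteq F_r\bigl(\frac{r}{1-r}[m,M]\bigr)$, so the combinatorial content is the same. Your version is more elementary and constructive: it exhibits a completion realising any target value and treats non-closed $\mathcal R$ directly. The paper's version gets compactness and uniqueness for free from existing machinery.

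Two corrections are needed.

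\textbf{The window length.} For a fixed target $v$, the admissible window for $y_0$ is $[v-\frac{rM}{1-r},\,v-\frac{rm}{1-r}]\cap[m,M]$. Its length is at most $\frac{r}{1-r}d_{\mathcal R}$, not $d_{\mathcal R}$; the latter is only the difference of the two interval lengths. This is not cosmetic. A window of length $d_{\mathcal R}$ meeting $[m,M]$ always contains $m$ or $M$, so your argument would need no hypothesis at all. That is false: for $\mathcal R=\{0,1\}$ and $r=\frac{1}{3}$, the set $\{\sum_{k\ge0}3^{-k}y_k : y_k\in\mathcal R\}$ is a Cantor-type set. With the correct length, unclipped windows meet $\mathcal R$ precisely because of the gap bound. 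Clipped windows contain $m$ or $M$, and that is where compactness (or your slack) is used.

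\textbf{The fallback via $(\mathcal R,r,r)$.} This does not work as a deduction. The two-sided conclusion of Lemma~\ref{lem:possible-sums} does not by itself determine the one-sided tail sets, because a sumset can be an interval while its summands are not. Take the same $\mathcal R$ with $r=s=\frac{1}{3}$, and let $C$ be the middle-thirds Cantor set, so $C+C=[0,2]$. Then $\mathrm{Sums}_{\mathcal R}^{r,s}=\{0,1\}+\frac{1}{2}C+\frac{1}{2}C=[0,2]=\lambda^{r,s}[0,1]$, yet each one-sided tail set is $\frac{1}{2}C$. That setting is not proper, but it shows that recovering the one-sided statement would require redoing the one-sided argument anyway. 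Keep the greedy route.
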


A natural quantity is then how long a sound monitor must wait before it can commit to a final verdict at a given time.

\begin{definition}
    Let $\mathcal{I}$ be a target interval, let $t \in \mathbb{N}$, and let $w \in \mathcal{R}^{\mathbb{Z}}$.
    The \emph{minimum required observation time} is
    \begin{equation*}
        \mu_{\mathcal{I}}(t,w)
        \coloneqq
        \min\left(
        \left\{
        n \in \mathbb{N} :
        n \geq t
        \text{ and }
        \left(
        U_t(w_{0:n}) \subseteq \mathcal{I}
        \text{ or }
        U_t(w_{0:n}) \cap \mathcal{I} = \emptyset
        \right)
        \right\}
        \cup
        \{\infty\}
        \right).
    \end{equation*}
\end{definition}

For a fixed time $t$, future observations eliminate the unobserved future contribution, but they do not reveal the values before time~$0$.
Therefore, an irreducible uncertainty of diameter $\frac{r^{t+1}}{1-r} d_{\mathcal{R}}$ may remain, and an exact verdict need not ever become possible.
Even when exact verdicts are possible, their delay has no uniform finite bound in general.

\begin{theorem}[No uniform exact verdict delay]
\label{thm:impossible-sound}
    Let $(\mathcal{R}, r, s)$ be a proper input setting such that $\mathcal{R}$ is compact and $r + s > 0$, and let $J \coloneqq \lambda^{r,s} [\inf \mathcal{R}, \sup \mathcal{R}]$.
    Let $\mathcal{I}$ be a target interval satisfying $J \cap \mathcal{I} \neq \emptyset$ and $J \setminus \mathcal{I} \neq \emptyset$.
    Then, for every $t \in \mathbb{N}$ and every $n \geq t$, there exists $w \in \mathcal{R}^{\mathbb{Z}}$ such that $\mu_{\mathcal{I}}(t,w) > n$.
\end{theorem}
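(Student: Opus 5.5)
The plan is to reduce the claim to one prefix whose uncertainty set straddles the boundary of $\mathcal{I}$, and to use that uncertainty sets can only shrink as more of the stream is observed. Fix $t$ and $n\ge t$, and write $a=\inf\mathcal{R}$, $b=\sup\mathcal{R}$, $d=d_{\mathcal{R}}$. Two preliminary facts are needed.

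\emph{Positive length.} Since $J\cap\mathcal{I}\neq\emptyset$ and $J\setminus\mathcal{I}\neq\emptyset$, $J$ is not a single point, so $d>0$. Since $r+s>0$, at least one term of $\gamma^{r,s}_{t,n}$ is strictly positive. By Lemma~\ref{lem:uncertainty-intervals} (proper and compact case),
\[
U_t(w_{0:m}) = S^{r,s}_t(w_{0:m}) + \gamma^{r,s}_{t,m}[a,b]
\]
is a closed interval of length $\gamma^{r,s}_{t,m}\, d>0$ for every $m\ge t$.

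\emph{Monotonicity.} For $m\le n$, every extension of $w_{0:n}$ is also an extension of $w_{0:m}$. Hence $U_t(w_{0:n})\subseteq U_t(w_{0:m})$. It therefore suffices to find one $w$ with $U_t(w_{0:n})\not\subseteq\mathcal{I}$ and $U_t(w_{0:n})\cap\mathcal{I}\neq\emptyset$. Every larger set $U_t(w_{0:m})$ with $t\le m\le n$ then also meets $\mathcal{I}$ and is not contained in it, which gives $\mu_{\mathcal{I}}(t,w)>n$.

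\emph{Boundary point.} Because $J$ is an interval meeting both $\mathcal{I}$ and its complement, there is a point $c\in J$ that is an endpoint of $\mathcal{I}$ (in the sense of its closure). Every neighbourhood of $c$ within $J$ contains points of $\mathcal{I}$ and points of $J\setminus\mathcal{I}$. More precisely, there are $p\in J\cap\mathcal{I}$ and $q\in J\setminus\mathcal{I}$ with $|p-c|$ and $|q-c|$ arbitrarily small. If $\mathcal{I}$ is the single point $c$, take $p=c$. If $c$ is an endpoint of $J$, one of $p,q$ may have to equal $c$, and the other lies on the interior side.

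\emph{Sliding the uncertainty set.} The goal is to show that, as the prefix $w_{0:n}$ ranges over $\mathcal{R}^{n+1}$, the interval $U_t(w_{0:n})$ can be placed anywhere in $J$. The idea is that the observed part $S^{r,s}_t(w_{0:n})$ fills the interval $\lambda_o[a,b]$, where $\lambda_o=\lambda^{r,s}-\gamma^{r,s}_{t,n}$ is the total weight on observed positions. Then $U_t$ ranges over all windows $x+\gamma^{r,s}_{t,n}[a,b]$ with $x\in\lambda_o[a,b]$, and the union of these windows is $J$. Given such windows, pick one that contains both a point $p\in\mathcal{I}$ and a point $q\notin\mathcal{I}$ near $c$. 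If $c$ is interior to $J$, choose the window so that $c$ is in its interior. If $c$ is an endpoint of $J$, use the extreme window, which has $c$ as an endpoint and extends into $J$.

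\emph{Main obstacle.} The step I expect to be hardest is the surjectivity claim that the finite observed sums fill $\lambda_o[a,b]$. Properness only controls gaps against the infinite tails $\frac{r}{1-r}$ and $\frac{s}{1-s}$, not against finite partial weights. To avoid this, I would not attempt exact surjectivity but argue directly from Lemma~\ref{lem:possible-sums}(3):
\begin{enumerate}
\item Choose $w^*$ with $S^{r,s}_t(w^*)=c$.
\item If $c$ lies in the interior of $U_t(w^*_{0:n})$, the set straddles $c$ and we are done.
\item Otherwise $c$ is an endpoint of $U_t(w^*_{0:n})$. Then all unobserved coordinates of $w^*$ are extremal (all equal to $a$ or all equal to $b$), since $c$ is the minimum or maximum of the uncertainty set. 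If the straddling already happens on the interval side of $c$, we are done.
\item If not, there is still room in $J$ beyond $c$, which forces some observed coordinate to be non-extremal. Modify that coordinate by a small amount, using properness to keep it inside $\mathcal{R}$, so that $c$ moves into the interior of the uncertainty set.
\end{enumerate}
The case analysis in the last step is the delicate part.
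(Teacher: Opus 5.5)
Your preliminary steps are correct. The monotonicity observation ($U_t(w_{0:n})\subseteq U_t(w_{0:m})$ for $t\le m\le n$) also makes explicit a point that the paper's proof leaves implicit.

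The gap is step~4. There you need a prefix whose uncertainty interval contains $c$ together with points on the side of $c$ where membership in $\mathcal{I}$ changes. You propose to get it by moving one observed coordinate by a small amount inside $\mathcal{R}$, and properness does not allow this. It only bounds the gaps of $\mathcal{R}$ by $\min\{\frac{r}{1-r},\frac{s}{1-s}\}\,d_{\mathcal R}$; it does not make $\mathcal{R}$ locally an interval.

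Take $\mathcal{R}=\{0,1\}$ and $r=s=\tfrac12$, which is compact and proper. Every value is extremal, so your claim that some observed coordinate must be non-extremal is false. What is forced is only that some observed coordinate differs from the relevant extreme value. Moreover, changing a coordinate of weight $\omega$ shifts the window by $\omega\, d_{\mathcal R}$, which can far exceed the window length $\gamma^{r,s}_{t,n}d_{\mathcal R}$.

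Concretely, let $t=5$, $n=10$, $\mathcal{I}=(-1,1)$ (so $J=[0,3]$ and $c=1$), and let $w^*$ have $x_5=1$ and all other entries $0$.
\begin{itemize}
\item $S^{r,s}_5(w^*)=c$, and $U_5(w^*_{0:10})=[1,\tfrac{17}{16}]$ is disjoint from $\mathcal{I}$.
\item The only observed coordinate different from $0$ is $x_5$, of weight $1$. Flipping it gives $[0,\tfrac1{16}]$, which misses $c$. Flipping any other coordinate moves the window to the right.
\item A straddling prefix does exist: $x_0=\dots=x_4=1$ with all other observed entries $0$ gives the window $[\tfrac{31}{32},\tfrac{33}{32}]$. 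No single-coordinate modification of $w^*_{0:10}$ reaches it.
\end{itemize}
To push your route through, you would have to show that the finite observed sums $\{S^{r,s}_t(u):u\in\mathcal{R}^{n+1}\}$ have gaps of at most $\gamma^{r,s}_{t,n}d_{\mathcal R}$. Your argument does not do this.

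The missing idea is a limit argument that uses compactness of $\mathcal{R}^{n+1}$ in place of local perturbation; this is how the paper argues.
\begin{itemize}
\item Since $S^{r,s}_t(w)\in U_t(w_{0:n})$ for every $w$, Lemma~\ref{lem:possible-sums}(3) gives $\bigcup_{u\in\mathcal{R}^{n+1}}U_t(u)=J$.
\item Choose $y_k\in J$ converging to $c$ from the side whose membership differs from that of $c$, and prefixes $u_k$ with $y_k\in U_t(u_k)$.
\item If some $U_t(u_k)$ contains $c$, it straddles the boundary. Otherwise each $U_t(u_k)$ lies strictly on the side of $y_k$, with its near endpoint between $y_k$ and $c$.
\item Pass to a convergent subsequence $u_k\to u^*$. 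By continuity of $u\mapsto S^{r,s}_t(u)$ and the fixed window length $\gamma^{r,s}_{t,n}d_{\mathcal R}>0$, the closed interval $U_t(u^*)$ has $c$ as an endpoint and extends into the differing side. So it contains both an accepted and a rejected value.
\end{itemize}
Combined with your monotonicity step, this yields $\mu_{\mathcal I}(t,w)>n$ for every completion $w$ of $u^*$. As written, your perturbation step covers only domains that contain a neighbourhood of the value being moved, e.g.\ interval domains.
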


The proof shows that for any proposed finite delay, one can construct a prefix whose remaining uncertainty crosses a boundary of the target interval: one completion yields a discounted sum in the target, while another yields one outside it
A sound exact monitor must therefore remain inconclusive, which motivates approximate soundness.

\subsubsection{Approximately Sound Monitors}

\begin{definition}[$\varepsilon$-approximately sound monitor]
    Let $(\mathcal{R}, r, s)$ be an input setting, let $\mathcal{I}$ be a target interval, and let $\varepsilon > 0$.
    A monitor $\mathcal{M}$ is \emph{$\varepsilon$-approximately sound} if, for every $w \in \mathcal{R}^{\mathbb{Z}}$, $n \in \mathbb{N}$, and $t \in \{0,\ldots,n\}$,
    \begin{equation}
        \label{eq:epsilonsoundmonitor}
        \moni(w_{0:n}, t) = \bot \implies S_t^{r,s}(w) \notin \mathcal{I}_{-\varepsilon}
        \quad \text{and} \quad
        \moni(w_{0:n}, t) = \top \implies S_t^{r,s}(w) \in \mathcal{I}_{+\varepsilon}.
    \end{equation}
    where, for $\mathcal{I}=(L,U)$,
    $\mathcal{I}_{+\varepsilon} \coloneqq (L-\varepsilon, U+\varepsilon)$
    and
    $\mathcal{I}_{-\varepsilon} \coloneqq (L+\varepsilon, U-\varepsilon)$.
\end{definition}

Although $\aI$, $\aI_{+\varepsilon}$, and $\aI_{-\varepsilon}$ are open, we note that the results of this section hold for arbitrary intervals.
We next give a tight bound on the minimum required observation time for $\varepsilon$-approximately sound monitors, and show how to realise them with finite resources.

\begin{theorem}
\label{thm:approximate-monitors}
    Let $(\mathcal{R}, r, s)$ be an input setting, let $\mathcal{I}$ be a target interval, let $\varepsilon > 0$, and let $T \in \mathbb{N}$.
    Define
    \begin{equation}
        \label{eq:approximate-monitors-tau}
        \tau^*(\mathcal{R}, r, s, \varepsilon, T)
        \coloneqq
        \min\left(
        \left\{
        \tau \in \mathbb{N} :
        d_{\mathcal{R}}
        \left(
        \frac{r^{T+1}}{1-r}
        +
        \frac{s^{\tau+1}}{1-s}
        \right)
        \leq
        2\varepsilon
        \right\}
        \cup
        \{\infty\}
        \right).
    \end{equation}
    Then:
    \begin{enumerate}
        \item If $\tau^* < \infty$, there exists an $\varepsilon$-approximately sound monitor that produces a verdict for every $t \geq T$ within $\tau^*$ steps.
        \item Suppose that $\mathcal{R} = [m, M]$ with $m < M$, and write $J = \lambda^{r,s} [m, M]$.
        If $J \cap \mathcal{I}_{-\varepsilon} \neq \emptyset$ and $J \setminus \mathcal{I}_{+\varepsilon} \neq \emptyset$, then, for every $\varepsilon$-approximately sound monitor and every $\tau < \tau^*$, there exists $w \in \mathcal{R}^{\mathbb{Z}}$ on which the monitor does not produce a verdict for time $T$ after $\tau$ steps.
    \end{enumerate}
\end{theorem}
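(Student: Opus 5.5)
The plan rests on Lemma~\ref{lem:uncertainty-intervals}, which pins the uncertainty set at time $t$ after observing $w_{0:n}$ to an interval of width at most $\gamma^{r,s}_{t,n} d_{\mathcal R}$. It is centred, up to a shift, at the computable partial sum $S^{r,s}_t(w_{0:n})$.

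\textbf{Part~1.} I would define the monitor explicitly. On query $(w_{0:n},t)$, let $A_{t,n} \coloneqq S^{r,s}_t(w_{0:n}) + \gamma^{r,s}_{t,n}[\inf\mathcal R,\sup\mathcal R]$, which by Lemma~\ref{lem:uncertainty-intervals} contains $U_t(w_{0:n})$. Let $c$ be its midpoint. The monitor answers as follows:
\begin{itemize}
\item if $A_{t,n} \subseteq \mathcal I$ or $A_{t,n}\cap\mathcal I=\emptyset$, it answers exactly ($\top$ or $\bot$);
\item otherwise, if $n-t \ge \tau^*$, it answers $\top$ when $c\in\mathcal I$ and $\bot$ when $c\notin\mathcal I$;
\item otherwise it answers $?$.
\end{itemize}

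Soundness of the first case is immediate. For the second case, take $t\ge T$ and $n-t\ge\tau^*$. Then $r^{t+1}\le r^{T+1}$ and $s^{n-t+1}\le s^{\tau^*+1}$, so $A_{t,n}$ has width at most $2\varepsilon$. Hence every $S_t^{r,s}(w)$ lies within $\varepsilon$ of $c$. If $c\in\mathcal I$, then $S_t^{r,s}(w)\in\mathcal I_{+\varepsilon}$. If $c\notin\mathcal I$, then $S_t^{r,s}(w)$ cannot lie in $\mathcal I_{-\varepsilon}$, since the $\varepsilon$-ball around any point of $\mathcal I_{-\varepsilon}$ is contained in $\mathcal I$. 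This works for arbitrary intervals, with care at endpoints. Finally, the monitor only needs the running sums, which are finitely computable, so a verdict for $t\ge T$ is produced by time $t+\tau^*$.

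\textbf{Part~2.} Let $\tau<\tau^*$, so that $\gamma\coloneqq\frac{r^{T+1}}{1-r}+\frac{s^{\tau+1}}{1-s}$ satisfies $\gamma d_{\mathcal R}>2\varepsilon$. Since $\mathcal R=[m,M]$ is proper and compact, Lemma~\ref{lem:uncertainty-intervals} gives $U_T(w_{0:T+\tau})=S_T^{r,s}(w_{0:T+\tau})+\gamma[m,M]$ exactly, an interval of length $>2\varepsilon$.

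The key step is to choose the observed prefix $w_{0:T+\tau}$ so that this interval contains both a point outside $\mathcal I_{+\varepsilon}$ and a point inside $\mathcal I_{-\varepsilon}$. Then any verdict is refuted by some completion. A $\top$ verdict is refuted by a completion realising the outside point, and a $\bot$ verdict by one realising the inside point. So the monitor must answer $?$ at step $\tau$.

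To find such a prefix, note that the centre $S_T^{r,s}(w_{0:T+\tau})$ ranges continuously over $(\lambda^{r,s}-\gamma)[m,M]$ as the prefix varies in $[m,M]^{\tau+1}$. The uncertainty intervals therefore sweep $J$ continuously. Each is a translate of a fixed interval of length $>2\varepsilon$, and the extreme ones touch the endpoints of $J$.

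This sweeping argument is the main obstacle. By hypothesis, $J$ contains a point $p\in\mathcal I_{-\varepsilon}$ and a point $q\notin\mathcal I_{+\varepsilon}$. Assume $q$ lies beyond the upper end of $\mathcal I_{+\varepsilon}$; the other case is symmetric. I would slide the window from the position containing $p$ towards $q$ and stop at the first position whose right end leaves $\mathcal I_{+\varepsilon}$. At that position the left part still meets $\mathcal I_{-\varepsilon}$. This is because the gap between the top of $\mathcal I_{-\varepsilon}$ and the top of $\mathcal I_{+\varepsilon}$ is exactly $2\varepsilon$, which is less than the window length.

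Degenerate cases, such as $\mathcal I_{-\varepsilon}$ being empty or $\mathcal I$ being unbounded, need separate attention. For an empty $\mathcal I_{-\varepsilon}$ the hypothesis already fails. For an unbounded $\mathcal I$, one uses the end of $\mathcal I$ that is finite.
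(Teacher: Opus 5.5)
Your argument is essentially the paper's. For (1), the interval enclosure from Lemma~\ref{lem:uncertainty-intervals} has width at most $2\varepsilon$ once $t\ge T$ and $n-t\ge\tau^*$, and this forces an approximately sound verdict. For (2), you exhibit a prefix whose exact uncertainty interval $U_T(w_{0:T+\tau})$, of length $\gamma d_{\mathcal R}>2\varepsilon$, meets both $\mathcal I_{-\varepsilon}$ and the complement of $\mathcal I_{+\varepsilon}$.

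The differences are cosmetic. The paper's monitor answers $\top$ iff the enclosure lies in $\mathcal I_{+\varepsilon}$ and $\bot$ iff it is disjoint from $\mathcal I_{-\varepsilon}$. It then observes that a point of $\mathcal I_{-\varepsilon}$ and a point outside $\mathcal I_{+\varepsilon}$ are more than $2\varepsilon$ apart, so a short enclosure is always decisive. In (2), after normalizing to $[0,1]$, the paper picks the observed contribution $p$ explicitly. In the left-rejecting case it takes $p\in(L+\varepsilon-\Gamma,\,L-\varepsilon]\cap[0,A]$ with $A=\lambda^{r,s}-\Gamma$; your continuous sliding argument does the same job.

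One slip needs fixing. As you defined it, the midpoint fallback fires whenever $n-t\ge\tau^*$, including for $t<T$. For such $t$ the enclosure has width $d_{\mathcal R}\bigl(\frac{r^{t+1}}{1-r}+\frac{s^{n-t+1}}{1-s}\bigr)$. When $r>0$ this can exceed $2\varepsilon$ for every $n$, and then the midpoint verdict can be wrong by far more than $\varepsilon$. Your monitor would then violate $\varepsilon$-approximate soundness, which is required for all $t\in\{0,\ldots,n\}$, not only $t\ge T$. Gate the fallback by $t\ge T$, or by ``width $\le 2\varepsilon$''. The paper's containment test avoids the issue, because its soundness does not depend on the width at all; the width bound is only used for timeliness.

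Minor typo: in (2) the prefix ranges over $[m,M]^{T+\tau+1}$, not $[m,M]^{\tau+1}$.
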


For the usual case $d_{\mathcal{R}} > 0$ and $0 < s < 1$, let $B_T \coloneqq \left(\frac{2\varepsilon}{d_{\mathcal{R}}} - \frac{r^{T+1}}{1-r}\right)(1-s)$.
If $B_T > 0$, then $\tau^* = \max\{0, \left\lceil\log_s B_T\right\rceil - 1\}$.
If $B_T \leq 0$, then $\tau^* = \infty$.
The semantic definition in Eq.~\eqref{eq:approximate-monitors-tau} also covers the exceptional cases: $d_{\mathcal{R}} = 0$ gives $\tau^* = 0$, while for $s = 0$ we have $\tau^* = 0$ if $d_{\mathcal{R}} \frac{r^{T+1}}{1-r} \leq 2\varepsilon$ and $\tau^* = \infty$ otherwise.
Unless stated otherwise, we assume $0 < \tau^* < \infty$. 

For every bounded input domain, Lemma~\ref{lem:uncertainty-intervals} provides an interval enclosure whose diameter at time $t$, after $\tau$ further observations, is $d_{\mathcal{R}} \left(\frac{r^{t+1}}{1-r} + \frac{s^{\tau+1}}{1-s}\right)$.
A diameter of at most $2\varepsilon$ forces an approximately sound verdict, which proves the upper bound.
For interval-valued input domains, the finite observed contribution can be positioned continuously; under the two nontriviality conditions in point~(2), this gives matching accepting and rejecting completions for every smaller delay.

\subsection{Monitor Construction}

Algorithm~\ref{alg:monitor} constructs an $\varepsilon$-approximately sound monitor.
With $\tau=\tau^*(\mathcal{R},r,s,\varepsilon,T)$ from Eq.~\eqref{eq:approximate-monitors-tau}, it maintains:
\begin{itemize}
    \item a global variable $\mathtt{Psum}$ storing the present-inclusive discounted sum of all past observations;
    \item an array of $\tau$ tuples $(\mathtt{RunningSum}_i, \mathtt{pos}_i)$, one per active monitor;
    \item a Boolean array $\mathtt{in\_use}[1{:}\tau]$ marking which tuples are active.
\end{itemize}
Each active tuple tracks a candidate time $\mathtt{pos}_i$ whose discounted sum is being monitored.
The invariant is that after observing $x_n$, every active tuple with
$p=\mathtt{pos}_j$ satisfies
$
\mathtt{RunningSum}_j
=
S_p^{r,s}(w_{0:n})
=
\sum_{i=1}^{p} r^i x_{p-i}
+
\sum_{i=0}^{n-p} s^i x_{p+i}.
$

\begin{algorithm}[H]
\caption{$\varepsilon$-Approximately Sound Discounted-Sum Monitor}
\label{alg:monitor}
\begin{algorithmic}[1]
\Require Input setting $(\mathcal R, r, s)$, tolerance $\varepsilon$, target interval $\mathcal I$, start time $T$
\State $\tau \gets \tau^*(\mathcal{R},r,s,\varepsilon,T)$ \Comment{Eq.~\eqref{eq:approximate-monitors-tau}}
\State $\mathtt{Psum} \gets 0$
\For{$i = 1$ to $\tau$}
    \State $\mathtt{in\_use}[i] \gets \bot$
\EndFor

\For{$t = 0,1,2,\dots$}
    \State observe $x_t$; \;\;
    $\mathtt{Psum} \gets x_t + r \cdot \mathtt{Psum}$
    \Comment{now $\mathtt{Psum}=x_t+\sum_{i=1}^{t} r^i x_{t-i}$}

    \For{$j = 1$ to $\tau$} \Comment{Update all active registers}
        \If{$\mathtt{in\_use}[j] = \top$}
            \State $\mathtt{RunningSum}_j \gets \mathtt{RunningSum}_j + x_t \cdot s^{\,t-\mathtt{pos}_j}$ 
            \Comment{\footnotesize{future contribution for time $\mathtt{pos}_j$}\normalsize}
            \State compute
            $
            C_{\mathtt{pos}_j}^{r,s} =
            \mathtt{RunningSum}_j
            + \left(
            \frac{r^{\mathtt{pos}_j+1}}{1-r}
            + \frac{s^{t-\mathtt{pos}_j+1}}{1-s}
            \right)
            \cdot [\inf\mathcal R,\, \sup\mathcal R]
            $
            \If{$C_{\mathtt{pos}_j}^{r,s} \subseteq \mathcal I_{+\varepsilon}$}
                \State output verdict $\top$ for time $\mathtt{pos}_j$ ;\;\;
                $\mathtt{in\_use}[j] \gets \bot$
            \ElsIf{$C_{\mathtt{pos}_j}^{r,s} \cap \mathcal I_{-\varepsilon} = \emptyset$}
                \State output verdict $\bot$ for time $\mathtt{pos}_j$ ;\;\;
                $\mathtt{in\_use}[j] \gets \bot$
            \EndIf
        \EndIf
    \EndFor

    \If{$t \geq T$} \Comment{Instantiate new register for current $t$}
        \State select $i$ with $\mathtt{in\_use}[i] = \bot$
        \Comment{Guaranteed to exist by Thm.~\ref{thm:approximate-monitors}}
        \State $\mathtt{RunningSum}_i \gets \mathtt{Psum}$;\; \;
        \Comment{initially $S_t^{r,s}(w_{0:t})$}
        $\mathtt{pos}_i \gets t$;\;\;
        $\mathtt{in\_use}[i] \gets \top$
    \EndIf
    
\EndFor
\end{algorithmic}
\end{algorithm}

\subparagraph{Special cases.}
For future-only sums ($r = 0$), the past uncertainty vanishes.
For past-only sums ($s = 0$), later observations cannot refine the uncertainty of an earlier position.
Accordingly, the uniform bound in Eq.~\eqref{eq:approximate-monitors-tau} is either $0$ or $\infty$: each position is tested once at creation time, and if its enclosure is still inconclusive, no later observation can change that.

\subparagraph{General bounded input domains.}
Neither the upper bound in Theorem~\ref{thm:approximate-monitors}(1) nor Algorithm~\ref{alg:monitor} requires properness.
Both use only the interval enclosure from Lemma~\ref{lem:uncertainty-intervals}.
The interval-domain assumption in Theorem~\ref{thm:approximate-monitors}(2) is used only to position the observed finite contribution continuously.
For input domains with gaps, $\tau^*$ remains a valid upper bound but need not be tight for a fixed target interval.

\section{Statistical Discounted Monitor}
\label{sec:statistical-monitoring}
In statistical discounted  monitoring, we assume that the observation sequence is generated by a stochastic process and the monitors objective is to estimate the value of the expected discounted sum. The statistical monitors are similar to the monitors of Section~\ref{sec:monitoring-discounted-sums}, but for an additional statistical error term which impacts the release condition (and time $\tau^*$).

\vspace{-0.5em}\subparagraph{Setting.}
Assume that the value sequence $w$ is a realisation of the stochastic process $W=(X_t)_{t\in \ZN}$ in $\mathcal{R}$. Let $\filt\coloneqq (\filt_t)_{t\in \ZN}$ be its canonical filtration, i.e., $\filt_t$ is the sigma-algebra generated by $(X_i)_{i\leq t}$. \emph{No further assumptions} are placed on $W$.

\vspace{-0.5em}\subparagraph{Expected discounted sum.}
We are interested in the expected discounted sum, in infinite and finite form. We define the conditional expectation for the integrable random variable $Y$ as $\expe_t(Y)\coloneqq \expe(Y\mid \filt_t)$.
We define $\edsum_t^{r, s}(W) \coloneqq \lim_{n_l\to -\infty} \lim_{ n_u\to +\infty} \edsum_t^{r, s}(W_{n_l:n_u})$  
\begin{align*} 
    \edsum_t^{r, s}(W_{n_l:n_u})
    &\coloneqq
    \sum_{i=1}^{t-n_l} r^{i}\, \expe_{t-i-1}(X_{t-i})
    + \sum_{i=0}^{n_u-t} s^i\, \expe_{t+i-1}(X_{t+i}) \quad  
    \text{for $n_l\leq t\leq n_u$, $r,s\in[0,1)$}.
\end{align*}

\subsection{Monitoring}
Statistical monitors compute the observed discounted sum and bound its deviation from the expected discounted sum. This permits three natural notions of statistical soundness, which differ in how many possible verdicts are protected by the same probability guarantee.

\begin{definition}[Statistical soundness]
    For $(\mathcal{R},r,s)$, let $\aI=(L,U)$ be a target interval, $\varepsilon\geq 0$ a precision, and $\delta\in(0,1)$ an error probability.
    For $t\leq n \in \NN$ define the success event
    \begin{align*}
       E_t(n) \coloneqq
       \left\{
       \moni(W_{0:n}, t)=\bot \Rightarrow \edsum_t^{r,s}(W)\notin \aI_{-\varepsilon}
       \land
       \moni(W_{0:n}, t)=\top \Rightarrow \edsum_t^{r,s}(W)\in \aI_{+\varepsilon}
       \right\},
    \end{align*}
    where $\aI_{+\varepsilon}\coloneqq (L-\varepsilon,U+\varepsilon)$ and $\aI_{-\varepsilon}\coloneqq (L+\varepsilon,U-\varepsilon)$.
    A statistical monitor is
    \begin{align*}
         &\text{(i) pointwise $(\varepsilon,\delta)$-approximately sound if }
         &&\forall t\in\NN\,\forall n\geq t\colon \prob(E_t(n)) \geq 1-\delta;\\
        &\text{(ii) locally $(\varepsilon,\delta)$-approximately sound if }
        &&\forall t\in\NN\colon \prob(\forall n\geq t\colon E_t(n)) \geq 1-\delta;\\
        &\text{(iii) uniformly $(\varepsilon,\delta)$-approximately sound if }
        &&\prob(\forall t\in\NN\,\forall n\geq t\colon E_t(n)) \geq 1-\delta.
    \end{align*}
\end{definition}

The three notions express increasingly stronger guarantees.
\emph{Pointwise soundness} protects one fixed verdict at a fixed time $t$ and observation horizon $n$. Hence, flexible register release is \underline{not} possible and an error probability of $\delta$ must be tolerated for the verdict at each $t$.
\emph{Local soundness} protects one fixed monitored time $t$ for all observation horizons $n\geq t$. Hence, flexible register release is possible, but an error probability of $\delta$ remains for the verdict at each $t$.
\emph{Uniform soundness} protects the entire run, i.e., the invariant ``every verdict issued for every time index and at every observation horizon is correct'' holds with probability at least $1-\delta$. This is the right notion when the correctness of verdicts is imperative.

\subsection{Monitor Construction}
We modify the deterministic monitor by adding a statistical error term to the uncertainty interval (which affects the register release bound $\tau^*$
). The resulting statistical uncertainty interval for the expected discounted sum for time $t$ and observations $n$ is 
\begin{equation}
    \label{eq:stat:unc_set}
    \eset_t^{r,s}(\delta;W_{0:n})
    \coloneqq
    \einter_t^{r,s}(\delta;W_{0:n}) + \gamma_{t,n}^{r,s}\cdot \Gamma  \quad \text{where} \quad  \einter_t^{r,s}(\delta;W_{0:n})  \coloneqq \dsum_t^{r,s}(W_{0:n})
    \pm \beta_{t,n}^{r,s}(\delta) .
\end{equation}
where $\beta_{t,n}^{r,s}(\delta)$ is a statistical error term and $\gamma_{t,n}^{r,s}$ is a tail error as in Lemma~\ref{lem:uncertainty-intervals}, and $\Gamma\coloneqq [\inf \mathcal{R},\, \sup\mathcal{R}]$. 
To obtain soundness it suffices to ensure that $\einter_t^{r,s}(\delta;W_{0:n})$ covers the finite expected sum $\edsum_t^{r,s}(W_{0:n})$ with the desired probability guarantee. Then the deterministic tail term $ \gamma_{t,n}^{r,s} \cdot \Gamma$ lifts this to coverage of the infinite expected sum (see Lemma~\ref{lemma:stat:lift}).

\vspace{-0.5em}\subparagraph{Error bounds.}
For simplicity we focus on data-independent deviation bounds, but the results can be extended to variance adaptive bounds~\cite{howard2021time}. 
Hence, the width of the statistical error term is dictated by the squared discount factors, i.e., for every $n\in \NN$ and $t\leq n$ as
\begin{align*}
    \dsgn_{t,n}^{r,s}
    \coloneqq
    \sum_{i=1}^{t} r^{2i} + \sum_{i=0}^{n-t} s^{2i}
    =
    \frac{r^2(1-r^{2t})}{1-r^2} + \frac{1-s^{2(n-t+1)}}{1-s^2},
\end{align*}
and a uniform upper bound $\sgn$ on the conditional sub-Gaussian norm $\sgn_t$ of $X_t-\expe_{t-1}(X_t)$, which is trivially given by $d_\aR/2$, i.e., $\sgn_t\leq \sgn \leq d_\aR/2$~\cite{vershynin2018high}.
We define the pointwise, local, and uniform error bound, respectively, for $\delta\in (0,1)$ as
\begin{align*}
     \pse_{t,n}^{r,s}(\delta) &\coloneqq \sqrt{2\,\sgn^2\, \dsgn_{t,n}^{r,s}\, \log(2/\delta)}  \quad 
    \lse_{t,n}^{r,s}(\delta) \coloneqq k_1\,\sqrt{V_{t,n}^{r,s} \left(2\log\left(\log_2(V_{t,n}^{r,s} )+1\right) + \log\left(\tfrac{2\pi^2}{6\delta}\right)\right)} \\
     \use_{t,n}^{r,s}(\delta) &\coloneqq  \lse_{t,n}^{r,s}\left(\frac{6\delta}{\pi^2(t+1)^2}\right) 
     \quad \text{where $V_{t,n}^{r,s}\coloneqq \max(1,\sgn^2\,\dsgn_{t,n}^{r,s})$ and $k_1 \coloneqq 2^{1/4}+2^{-1/4}/\sqrt 2$.}
\end{align*}
The pointwise sound monitor leverages the fixed-sample Hoeffding--Azuma deviation bound~\cite{azuma1967weighted}. 
The locally sound monitor leverages the anytime-valid deviation bounds from Howard et al.~\cite{howard2021time}.
The uniformly sound monitor leverages the locally sound statistical error bounds with error level $6\delta/(\pi^2(t+1)^2)$ for time $t$, because a union bound over all monitored times results in a uniform guarantee, i.e., $  \sum_{t=0}^{\infty}6\delta/(\pi^2(t+1)^2)=\delta $.
Unsurprisingly, we can observe that as the guarantees become stronger, the interval width becomes wider.

\begin{restatable}{theorem}{Soundness}
    \label{lemma:stat:soundness}
    Let $(\mathcal{R},r,s)$ be an input setting, $\aI=(L,U)$ a target interval,
    $\varepsilon\geq 0$, and $\delta\in(0,1)$.
    Consider Algorithm~\ref{alg:monitor} where a register
    for time $t$ is released at a fixed ($n$) or at a flexible ( $ \eset_t^{r,s}(\delta;W_{0:n})\subseteq \aI_{+\varepsilon}$ or $\eset_t^{r,s}(\delta;W_{0:n})\cap \aI_{-\varepsilon}=\emptyset$) observation horizon. Then
    
    \noindent
    (i) if $\beta_{t,n}^{r,s}(\delta)=\pse_{t,n}^{r,s}(\delta)$ the monitor is pointwise
        $(\varepsilon,\delta)$-approx.\ sound for fixed release;

    \noindent
    (ii) if $\beta_{t,n}^{r,s}(\delta)=\lse_{t,n}^{r,s}(\delta)$, the monitor is
        locally $(\varepsilon,\delta)$-approx.\ sound
        for flexible release;

    \noindent
    (iii) if $\beta_{t,n}^{r,s}(\delta)=\use_{t,n}^{r,s}(\delta)$, the monitor is
        uniformly $(\varepsilon,\delta)$-approx.\ sound for flexible release.
\end{restatable}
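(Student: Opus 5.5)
The argument splits into a deterministic tail lifting and a statistical coverage statement, with a different concentration tool for each part. As the paper indicates, I would first reduce soundness to coverage of the \emph{finite} expected sum. The relevant result is Lemma~\ref{lemma:stat:lift}, which I would prove inline. Both $\edsum_t^{r,s}(W)$ and $\edsum_t^{r,s}(W_{0:n})$ are discounted sums of conditional expectations, and each conditional expectation lies in $\Gamma=[\inf\mathcal R,\sup\mathcal R]$. Their difference is therefore the tail $\sum_{i>t} r^i\expe_{t-i-1}(X_{t-i})+\sum_{i>n-t}s^i\expe_{t+i-1}(X_{t+i})$, which lies in $\gamma_{t,n}^{r,s}\Gamma$ exactly as in Lemma~\ref{lem:uncertainty-intervals}. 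Hence, on the event $C_t(n)=\{\edsum_t^{r,s}(W_{0:n})\in \einter_t^{r,s}(\delta;W_{0:n})\}$ we have $\edsum_t^{r,s}(W)\in\eset_t^{r,s}(\delta;W_{0:n})$. The verdict rules (output $\top$ only if $\eset\subseteq\aI_{+\varepsilon}$, output $\bot$ only if $\eset\cap\aI_{-\varepsilon}=\emptyset$) then give $C_t(n)\subseteq E_t(n)$. For fixed release the verdict at $(t,n)$ is likewise computed from $\eset_t(\delta;W_{0:n})$. So it suffices to lower-bound $\prob(C_t(n))$, $\prob(\forall n\ge t: C_t(n))$, and $\prob(\forall t\,\forall n\ge t: C_t(n))$, respectively.

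Next I would write the deviation as a martingale. Set $D_j\coloneqq X_j-\expe_{j-1}(X_j)$, and let $a_j$ be the weight of index $j$ in the finite sum: $a_j=r^{t-j}$ for $j<t$ and $a_j=s^{j-t}$ for $j\ge t$. Then $\dsum_t^{r,s}(W_{0:n})-\edsum_t^{r,s}(W_{0:n})=\sum_{j=0}^{n} a_j D_j$. The weights $a_j$ are deterministic, so the partial sums $M_m=\sum_{j=0}^{m}a_jD_j$ form a martingale with respect to $\filt$. Its increments are conditionally $a_j\sgn$-sub-Gaussian, since $|D_j|$ is bounded with conditional range $d_\aR$ (Hoeffding's lemma). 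The cumulative variance proxy at $m=n$ is $\sgn^2\sum_j a_j^2=\sgn^2\dsgn_{t,n}^{r,s}$. For (i), the Azuma--Hoeffding inequality applied to the fixed horizon $n$ gives $\prob(|M_n|\ge \pse_{t,n}^{r,s}(\delta))\le 2\exp(-\pse^2/(2\sgn^2\dsgn))=\delta$, which is exactly (i).

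For (ii), $t$ is fixed and $n\ge t$ ranges, so I need a time-uniform bound over the martingale indexed by $n$. For $n\ge t$ the process $M_n$ is a martingale whose intrinsic variance proxy is the deterministic nondecreasing sequence $\sgn^2\dsgn_{t,n}^{r,s}$. The bound must hold simultaneously over all $n\ge t$, including the initial segment $j<t$ that is already summed. I would cite the stitched sub-Gaussian boundary of Howard et al.~\cite{howard2021time}, applied to both tails with $\delta/2$ each, in the form $|M_n|\le k_1\sqrt{V(2\log(\log_2 V+1)+\log(2\pi^2/(6\delta)))}$ for all $n$ with variance proxy $V=\max(1,\sgn^2\dsgn)$. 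Matching the constants $k_1$ and $2\pi^2/6$ to the stitching parameters ($\eta=2$, $h(k)\propto (k+1)^2$) is the step I expect to be the main obstacle. The issue is that the max with $1$ handles the regime below the first stitch epoch, and the paper's constants must come out exactly. If they do not match, I would fall back to proving the stitched bound directly: peel geometric epochs $V\in[2^k,2^{k+1})$, apply Ville's maximal inequality to the exponential supermartingale on each epoch with error $\delta/((k+1)^2\pi^2/6)$, and sum over $k$. One more detail is needed for flexible release: once a register is released, no later verdict for $t$ is issued. Since $\forall n$ coverage holds, soundness holds at whatever (random) release time occurs.

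For (iii), I apply (ii) at each $t$ with level $\delta_t=6\delta/(\pi^2(t+1)^2)$, which is by definition what $\use_{t,n}^{r,s}(\delta)$ is. A union bound over $t\in\NN$ then gives failure probability at most $\sum_t\delta_t=\delta$, which yields the uniform statement.
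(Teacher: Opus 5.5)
Your proposal is correct and follows the paper's proof essentially step for step. The paper likewise reduces soundness to coverage of the finite expected sum via the tail lifting of Lemma~\ref{lemma:stat:lift}, and then obtains that coverage as follows: for (i), from the martingale MGF bound plus Chernoff (i.e.\ Azuma--Hoeffding); for (ii), from the stitched boundary of Howard et al.\ with $\eta=2$, $m=1$, $h(j)=\frac{\pi^2}{6}(j+1)^2$ and $\delta/2$ per tail; and for (iii), from a union bound with $\delta_t=6\delta/(\pi^2(t+1)^2)$. One minor imprecision: the conditional $a_j\sgn$-sub-Gaussianity of the increments comes from the standing assumption that $\sgn$ bounds the conditional sub-Gaussian norm, not from Hoeffding's lemma, which by itself only gives $a_j d_\aR/2$.
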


\vspace{-0.5em}\subparagraph{Impossibility.}
If we consider the statistical uncertainty intervals 
as $n\to \infty$, we observe that the tail error $\gamma_{t,n}^{r,s}$ converges to $0$ from above and the statistical error $\beta_{t,n}^{r,s}(\delta)$ converges to a non-zero constant from below. In particular, for the pointwise bound we have
\begin{align*}
    \pse_{t,n}^{r,s}(\delta)\to
    \sqrt{2\sgn^2\left(\tfrac{r^2(1-r^{2t})}{1-r^2}
    +\tfrac{1}{1-s^2}\right)\log(2/\delta)}
    \quad\text{and}\quad
    d_\aR\,\gamma_{t,n}^{r,s}\to d_\aR\,\tfrac{r^{t+1}}{1-r}
    \quad\text{as $n\to\infty$.}
\end{align*}
Without additional assumptions this error is unavoidable. Specifically, the pointwise statistical term (the smallest among the soundness notions
) is minimax optimal. Intuitively, this is because the variance of the discounted average converges to a positive constant and to $0$ only as the discount factors approach $1$.

\begin{restatable}{theorem}{StatMinMax}
    \label{trm:minmax}
    Given an input setting $(\mathcal{R},r,s)$ s.t.\ $\mathcal{R}$ contains an interval of length at least $2\sigma$. Fix $n\in\NN$ and $t\leq n$.
    Consider the class $\mathcal{D}_\sgn$ of product measures on $\mathcal{R}^{n+1}$ s.t.\ $X_i-\expe_{i-1}(X_i)$ is conditionally
    $\sgn$-sub-Gaussian.
    We define: the set of all $(1-\delta)$ confidence intervals $\mathsf{CI}_\sgn(\delta)$ for
    $\edsum_t^{r,s}(W_{0:n})$, i.e., all
    $I:\mathcal{R}^{n+1}\to \mathrm{Interval}(\RN)$ satisfying 
    $  \sup_{P\in\mathcal{D}_\sgn}
        \prob_{W\sim P}\left(\edsum_t^{r,s}(W_{0:n})\notin I(W_{0:n})\right)
        \leq \delta$,
    $|I|=\sup I-\inf I$ as interval length, and
    $\eta_{t,n}^{r,s}=\sum_{i=1}^{t}r^i+\sum_{i=0}^{n-t}s^i$.
    There exist universal constants $0<c_0<C_0<\infty$ s.t.\ $\delta\in(0,1/4)$
    \begin{align*}
    c_0 \cdot
    \min\left(\sgn\eta_{t,n}^{r,s}, \pse_{t,n}^{r,s}(\delta)\right)
    &\leq
    \inf_{I\in\mathsf{CI}_\sgn(\delta)}
    \sup_{P\in\mathcal{D}_\sgn}
    \expe_P\left[|I(W_{0:n})|\right] \leq
    C_0 \cdot
    \min\left(d_\aR\eta_{t,n}^{r,s}, \pse_{t,n}^{r,s} (\delta)\right).
    \end{align*}
\end{restatable}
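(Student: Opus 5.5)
Under product measures the conditional expectations reduce to plain means, $\expe_{i-1}(X_i)=\mu_i\coloneqq\expe_P(X_i)$. The target $\edsum_t^{r,s}(W_{0:n})=\sum_i a_i\mu_i$ is therefore a fixed linear functional of the mean vector, with weights $a_{t-i}=r^i$ for $1\le i\le t$ and $a_{t+i}=s^i$ for $0\le i\le n-t$. Note that $\sum_i a_i=\eta_{t,n}^{r,s}$ and $\sum_i a_i^2=\dsgn_{t,n}^{r,s}$. The plan is to prove the two bounds separately: upper bound by explicit construction, lower bound by a two-point (Le Cam) testing argument.

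\textbf{Upper bound.} We give two intervals and take whichever is shorter for the given parameters; this is allowed because the choice depends only on $(t,n,\sgn,\delta,\mathcal R)$, not on the data.
The first is the trivial deterministic interval $\eta_{t,n}^{r,s}\cdot[\inf\mathcal R,\sup\mathcal R]$. It always contains the target and has length $d_\aR\eta_{t,n}^{r,s}$.
The second is the Hoeffding interval $\dsum_t^{r,s}(W_{0:n})\pm\pse_{t,n}^{r,s}(\delta)$. The deviation $\sum_i a_i(X_i-\mu_i)$ is a weighted sum of independent centred $\sgn$-sub-Gaussian terms. It is therefore sub-Gaussian with variance proxy $\sgn^2\dsgn_{t,n}^{r,s}$. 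The two-sided tail bound gives coverage $1-\delta$, and the length is $2\pse_{t,n}^{r,s}(\delta)$. This is exactly the argument behind Theorem~\ref{lemma:stat:soundness}(i), restricted to the finite sum.
Taking the minimum gives the upper bound with $C_0=2$.

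\textbf{Lower bound.} Fix an interval $[a,a+2\sgn]\subseteq\mathcal R$. Consider product measures in which each $X_i$ takes values in $\{a,a+2\sgn\}$ with probability $p_i=\tfrac12+\theta_i$. Any random variable supported on an interval of length $2\sgn$ is $\sgn$-sub-Gaussian (Hoeffding's lemma), so every such measure lies in $\mathcal D_\sgn$.
Let $P_0$ be the measure with all $\theta_i=0$. Let $P_1$ have $\theta_i=h\,a_i/\|a\|_2$, where $\|a\|_2^2=\dsgn_{t,n}^{r,s}$; we keep $|\theta_i|\le 1/4$.
The targets then differ by
\[
\Delta=2\sgn\sum_i a_i\theta_i=2\sgn h\,\|a\|_2=2\sgn h\sqrt{\dsgn_{t,n}^{r,s}}.
\]
For the divergence, $\kl(P_1\|P_0)\lesssim\sum_i\theta_i^2=h^2$. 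By Bretagnolle--Huber, $\mathrm{TV}(P_0,P_1)\le 1-\tfrac12 e^{-\kl}$.
Now suppose some $I\in\mathsf{CI}_\sgn(\delta)$ had $\expe_{P_j}|I|<\Delta/2$ for both $j$. By Markov's inequality, under each $P_j$ the interval covers the true target and has length below $\Delta$ with substantial probability. On that event it cannot cover the other target. So $I$ yields a test between $P_0$ and $P_1$ with both errors at most of order $\delta$.
This forces $\delta\gtrsim e^{-\kl}$, which we contrapose: choosing $h\asymp\sqrt{\log(1/\delta)}$ makes testing possible only if $\expe|I|\gtrsim\sgn\sqrt{\dsgn\log(1/\delta)}\asymp\pse_{t,n}^{r,s}(\delta)$. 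Here $\delta<1/4$ ensures that $\log(1/\delta)\asymp\log(2/\delta)$.

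\textbf{Main obstacle: the cap.} The constraint $|\theta_i|\le1/4$ limits $h$ to at most $\|a\|_2/(4\max_i a_i)$. When this binds, we switch to the perturbation $\theta_i=1/4$ for all $i$, i.e.\ $P_1$ pushed to a constant shift. Then the target gap is $\Delta=\sgn\eta_{t,n}^{r,s}/2$. The two measures may now be far apart in divergence, so we instead argue that the optimal length is at least of order the gap whenever $\kl\lesssim\log(1/\delta)$. When the divergence is too large for that, we interpolate $\theta$ continuously: choose the largest uniform $\theta\le1/4$ with $\kl\asymp\log(1/\delta)$. This gives a bound of order $\min(\sgn\eta,\pse)$ up to constants.

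The delicate part is reconciling these two regimes. One is the $\ell_2$-direction perturbation, which gives the $\pse$ term. The other is the uniform perturbation, which gives $\eta=\|a\|_1$ and satisfies $\sgn\eta\ge\sgn\|a\|_2$. I will handle this by a case split on whether $\sqrt{\log(1/\delta)}\lesssim\|a\|_2$ (the $\pse$ term is the smaller one) or not. In the second case I use the uniform shift capped at $1/4$ and then check that the resulting bound is $\gtrsim\sgn\eta$, possibly only after using $\eta\le\|a\|_2\sqrt{n+1}$. If that check fails, I fall back on the weaker but still valid $\sgn\|a\|_2$ bound and verify that the constants can absorb the difference.
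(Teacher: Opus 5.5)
Your upper bound is the paper's: the range interval versus the Hoeffding interval, with the choice depending only on the parameters. It is fine. The lower bound, however, has two genuine gaps.

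\textbf{Expected length to testing.} The reduction from expected length to a test fails as written. Write $\mu_j$ for the target under $P_j$ and suppose $\expe_{P_0}|I|<\Delta/2$. Markov then only gives $\prob_{P_0}(|I|\ge\Delta)<1/2$. So the test $\indi[\mu_1\in I]$ errs with probability at most $\delta$ under $P_1$, but only at most $\delta+1/2$ under $P_0$. The errors are not both of order $\delta$. Since the Bretagnolle--Huber lower bound $\tfrac12 e^{-\kl}$ never exceeds $1/2$, no contradiction follows. Shrinking the Markov threshold until both errors are $O(\delta)$ costs a factor polynomial in $\delta$ and destroys the $\sqrt{\log(1/\delta)}$.

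The observation the paper uses instead is that $\mathcal D_\sgn$ contains every Dirac product measure, since a deterministic coordinate has $X_i-\expe_{i-1}(X_i)=0$. Hence $\sup_{P\in\mathcal D_\sgn}\expe_P|I(W_{0:n})|=\sup_x|I(x)|$ (Lemma~\ref{lemma:exp-vs-wc}). If this worst-case length is below $\Delta$, the test errs with probability at most $\delta$ under both hypotheses. Bretagnolle--Huber (or Pinsker when $\log(1/(4\delta))<1$) then gives the contradiction. Alternatively, keep Markov with a smaller threshold such as $\Delta/8$, but replace Bretagnolle--Huber by data processing: $\kl(P_0\|P_1)\ge\kl\bigl(\mathrm{Ber}(\prob_{P_0}(A))\,\|\,\mathrm{Ber}(\prob_{P_1}(A))\bigr)$ with $A=\{\mu_1\notin I\}$. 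This tolerates one error being merely bounded away from $1$.

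\textbf{The cap regime.} Your treatment of the regime where $|\theta_i|\le 1/4$ binds does not work.
\begin{itemize}
\item A uniform shift spends divergence on every coordinate ($\kl\asymp(n+1)\theta^2$), although small-weight coordinates barely move the target. With budget $\log(1/\delta)$ its separation is only of order $\sgn\eta_{t,n}^{r,s}\min\{1/4,\sqrt{\log(1/\delta)/(n+1)}\}$. This tends to $0$ as $n\to\infty$ (e.g.\ $r=0$, with $s,\delta$ fixed), whereas $\min(\sgn\eta_{t,n}^{r,s},\pse_{t,n}^{r,s}(\delta))\ge\sgn$ always.
\item The inequality $\eta\le\|a\|_2\sqrt{n+1}$ points the wrong way for a lower bound.
\item The fallback $\sgn\|a\|_2$ cannot be absorbed into constants. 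Where the cap binds ($\log(1/\delta)$ large compared with $\dsgn$), the target is of order $\sgn\eta$, and $\eta/\|a\|_2$ is unbounded: it is $\asymp(1-s)^{-1/2}$ for $r=0$ and $n-t\gg 1/(1-s)$.
\end{itemize}

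The paper avoids the case split with one clipped perturbation, $u_i=\kappa\min\bigl(a_i\sqrt B/\|a\|_2,\,1\bigr)$, where $B=1\vee\log(1/(4\delta))$ and $\kappa=1/8$. It is proportional to the weights where they are small and saturated where they are large. So $\kl$ stays below a constant times $B$, while for discounted weights the separation is of order $\sgn\min(\eta,\sqrt{\dsgn B})$.

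Your own scheme can also be repaired without any uniform shift. All weights lie in $[0,1]$ with $a_t=1$, and one checks that $\dsgn_{t,n}^{r,s}\le\eta_{t,n}^{r,s}\le 3\dsgn_{t,n}^{r,s}$. Take $h=\min(\sqrt{\dsgn}/4,\,c\sqrt B)$ in your $\ell_2$-direction perturbation. This keeps $|\theta_i|\le 1/4$ and gives $\kl\le 4h^2\le 4c^2B$. It separates the targets by $2\sgn h\sqrt{\dsgn}$, which is at least a constant times $\min(\sgn\eta,\pse)$.
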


\section{General Discounted Properties}

Many quantitative properties involve arithmetic relations between multiple discounted aggregates.
A canonical example is (group) fairness, where one compares \emph{acceptance rates} across groups; such rates are quotients of discounted counts of accepted and total events.
This section shows how to use the monitors of Section~\ref{sec:monitoring-discounted-sums} as building blocks for monitoring multi-aggregate expressions.

\subparagraph{Setting.}
Let $\mathcal N \coloneqq [n]$, $n\in \mathbb N^{+}$, be the set of event types, and let
$(\mathcal R,r,s)$ be an input setting.
Let $\mathcal R_{\empt}\coloneqq \mathcal R\cup\{\empt\}$ extend $\mathcal R$ with a
distinguished empty symbol $\empt$.
An \emph{event} is a vector $\bm e\in \mathcal E\coloneqq (\mathcal R_{\empt})^{n}$,
where the $k$-th component $e^{(k)}\in \mathcal R_{\empt}$ is the \emph{atomic event of type $k$}.
We consider bi-infinite event streams $w=(\bm e_t)_{t\in\mathbb Z}\in \mathcal E^{\mathbb Z}$.

\subsection{Specification Language}
\label{subsec:spec-language}

We introduce a simple expression language whose atoms are (bi-directional) discounted sums over individual event types and whose connectives are the standard arithmetic operations.
We give two interpretations---\emph{synchronous} and \emph{asynchronous}---differing in how empty atomic events affect discounting.

\subparagraph{Syntax.}
Formulas are built from discounted-sum atoms by scalar multiplication, addition, multiplication, and division:
\begin{equation}
\label{eq:syntax-beyond-sums}
\varphi \Coloneqq\;
\mathcal S^{(k)}_{r,s}
\;\mid\;
c\cdot \varphi
\;\mid\;
\varphi+\varphi
\;\mid\;
\varphi\cdot \varphi
\;\mid\;
\varphi \div \varphi,
\qquad
\text{where } k\in\mathcal N,\; c\in\mathbb R.
\end{equation}

\subparagraph{Semantics: atoms.}
Consider an event stream $w=(\bm e_t)_{t\in\mathbb Z}$ and time $t\in\mathbb Z$.
We map empty atomic events to zero via $x_t^{(k)} \coloneqq \1[e_t^{(k)}\neq \empt]\cdot e_t^{(k)}$ for $t\in\mathbb Z$, $k\in\mathcal N$.
The value of $\mathcal S^{(k)}_{r,s}$ at time $t$ is a bi-directional discounted sum over the $k$-typed atomic values, with discount exponents depending on the interpretation:
\begin{equation}
\label{eq:atom-semantics}
\eval{\mathcal S^{(k)}_{r,s}}_{sync/async}(w,t)
\;\coloneqq\;
\underbrace{\textstyle\sum_{i=1}^{\infty} r^{\tau^{(k)}_{t,-i}}\; x_{t-i}^{(k)}}_{\text{past}}
\;+\;
\underbrace{x_t^{(k)}}_{\text{present}}
\;+\;
\underbrace{\textstyle\sum_{i=1}^{\infty} s^{\tau^{(k)}_{t,+i}}\; x_{t+i}^{(k)}}_{\text{future}}.
\end{equation}
The discount exponents $\tau^{(k)}_{t,\pm i}$ are defined as
$\tau_{t,\pm i}^{(k)} \coloneqq i$ in the synchronous interpretation, and $\tau_{t,\pm i}^{(k)} \coloneqq  \sum_{j=1}^{i}\1[e_{t\pm j}^{(k)}\neq \empt]$ in the asynchronous one.
Thus, in the synchronous interpretation, every time step advances discounting, even if the $k$-typed atomic event is empty.
In the asynchronous interpretation, only \emph{non-empty} atomic events advance discounting for type $k$; empty entries are ignored for the purpose of discount progression.

Note that, if $0\in \mathcal R$, any stream $w\in \mathcal E^{\mathbb Z}$ can be converted to $w'\in \mathcal E^{\mathbb Z}$ such that $\eval{\mathcal S^{(k)}_{r,s}}_{sync}(w,t) = \eval{\mathcal S^{(k)}_{r,s}}_{async}(w',t)$, and vice versa.

\subparagraph{Semantics: expressions.}
The semantics extends naturally to arithmetic operations:
for a stream $w$, time $t$, and expressions $\psi,\chi$,
\[
\eval{c\cdot \psi}(w,t) \;=\; c\cdot \eval{\psi}(w,t),
\qquad
\eval{\psi\circ\chi}(w,t)\;=\;\eval{\psi}(w,t)\circ \eval{\chi}(w,t)
\quad(\circ\in\{+,\cdot,\div\}).
\]
Note that division is undefined when the denominator evaluates to $0$.

\subsection{Uncertainty Propagation Through Interval Arithmetic}
\label{subsec:uncertainty-propagation}

Algorithm~\ref{alg:monitor} naturally extends to arbitrary formulas by instantiating a register for each atomic discounted sum and propagating uncertainty intervals via standard interval arithmetic.

Theorem~\ref{thm:approximate-monitors} bounds the observation time needed for an $\varepsilon$-approximately sound verdict on a single discounted sum.
A natural question is whether an analogous bound holds for arbitrary formulas.
The proof of Theorem~\ref{thm:approximate-monitors} relies on the fact that, for fixed $t$, the uncertainty interval of a discounted sum has length $\gamma^{r,s}_{t,t+\tau}(\sup\mathcal R-\inf\mathcal R)$ and depends only on the observation count $\tau$, allowing to identify the earliest $\tau$ at which uncertainty falls below $2\varepsilon$.

The first observation is that this type of reasoning can only be applied for the synchronous interpretation, as in the asynchronous interpretation, a stream can contain arbitrary many empty events for each atom, which make it impossible to give a minimum required observation time.
For the synchronous interpretation, we have to develop a theory for different fragments of the whole language.
In a nutshell, 
analogous results to Thm.~\ref{thm:approximate-monitors} (1) exist for expressions that contain no divisions, by considering how the uncertainty intervals of sums and products grow through interval arithmetic. This avoids the undefined divisions by 0.
When $\mathcal R$ is an interval and $\varphi$ contains only sums and scalar products, the analogous result to Thm.~\ref{thm:approximate-monitors} (2) also holds, as monitoring such an expression is equivalent to monitoring an atom on a larger input setting.

We start by defining interval arithmetic as ususal. Let $I_1 = [L_1, U_1]$, $I_2 = [L_2, U_2]$, and $c\in \mathbb R$. Then
\begin{itemize}
    \item $c\cdot I_1 = [cL_1, cU_1]$ if $c\geq 0$,  $c\cdot I_1 = [cU_1, cL_1]$ if $c<0$.
    \item $I_1 + I_2 = [L_1+L_2, U_1 + U_2]$.
    \item $I_1\cdot I_2 = [\min\{L_1\cdot L_2, L_1\cdot U_2, U_1\cdot L_2, U_1\cdot U_2 \}, 
    \max\{L_1\cdot L_2, L_1\cdot U_2, U_1\cdot L_2, U_1\cdot U_2 \}]$.
    \item If $0\notin I_2$, then $I_1/ I_2 = [\min\{L_1/ L_2, L_1/ U_2, U_1/ L_2, U_1/ U_2 \}, 
    \max\{L_1/ L_2, L_1/ U_2, U_1/ L_2, U_1/ U_2 \}]$.
\end{itemize}
The analogous definitions apply when the intervals are open on either end.

\begin{definition}[Linear and multiplicative properties]
    Let $\varphi$ be a general discounted property as defined in Eq.~\ref{eq:syntax-beyond-sums}.
    We say that $\varphi$ is \emph{linear} if it contains only atoms, additions, and scalar multiplications.
    We say that $\varphi$ is \emph{multiplicative} if it contains only atoms, additions, scalar multiplication, and products.
\end{definition}
We define the spread of a formula as the range of values it can take at each timestep.
\begin{definition}[Spread]
    Let $(\mathcal R, r,s)$ be an input setting and $\nu = \inf_{x\in \mathcal R}|x|$.
    For the spread of a formula $\varphi$ to be well defined, we assume that either $\varphi$ is multiplicative or $\nu > 0$.
    The spread of a formula $\varphi$ is defined recursively as:
    \begin{itemize}
        \item If $\varphi$ is an atom, $\mathrm{Spread}_{\mathcal R}(\varphi) = \mathcal R$.
        \item If $\varphi = c\cdot \psi$ for $c\in \mathbb R$, $\mathrm{Spread}_{\mathcal R}(\varphi) = c\cdot \mathrm{Spread}_{\mathcal R}(\psi)$.
        \item If $\varphi = \psi \circ \chi$, then $\mathrm{Spread}_{\mathcal R}(\varphi) = \mathrm{Spread}_{\mathcal R}(\psi)\circ\mathrm{Spread}_{\mathcal R}(\chi)$, for $\circ \in \circ\in\{+,\cdot,\div\}$.
    \end{itemize}
\end{definition}
\begin{remark}
    Because of the interval arithmetic, the expression of $\mathrm{Spread}(\varphi)$ is generally convoluted. 
    A special case is when $\varphi$ is linear. If $\varphi$ is linear
    it can be written as
    \[
    \varphi = \sum_{i=1}^{m^-} \alpha_i^-\cdot x_i^- +
    \sum_{i=1}^{m^+} \alpha_i^+\cdot x_i^+.
    \]
    Therefore, for $\mathcal R = [\inf R, \sup R]$, its spread can be written as
    \[
    \mathrm{Spread}_{\mathcal R}(\varphi) = \left[
    \sup R\cdot \sum_{i=1}^{m^-} \alpha_i^- + \inf R\cdot \sum_{i=1}^{m^+} \alpha_i^+,
    \inf R\cdot \sum_{i=1}^{m^-} \alpha_i^- + \sup R\cdot \sum_{i=1}^{m^+} \alpha_i^+
    \right].
    \]
\end{remark}

Using synchronous semantics, monitoring a linear expression $\varphi$ on an input setting $(\mathcal R, r,s)$ is equivalent to monitoring an atom on the input setting $(\mathcal R_\varphi, r,s)$ where $\mathcal R_\varphi = \mathrm{Spread}_{\mathcal R\cup \{0\}}(\varphi)$.
Therefore, we have the following corollary to Theorem~\ref{thm:approximate-monitors}.

\begin{corollary}
    \label{cor:approximate-monitors}
    Let $(\mathcal R, r,s)$ be an input setting, $\varepsilon>0$ a tolerance value, $T\in \mathbb N$ a point in time, $\varphi$ a linear formula.
    The following holds for $\tau^*$ defined as:
    \begin{equation}
    \label{eq:approximate-monitors-tau-cor1}
        \tau^*(\mathcal R, r,s,\varepsilon,T, \varphi) \coloneqq \left\lceil \log_s\left[\left(\frac{2\varepsilon}{\sup\mathcal R_\varphi - \inf \mathcal R_\varphi} - \frac{r^{T+1}}{1-r}\right)(1-s)\right]\right\rceil-1.
    \end{equation}
    \begin{enumerate}
        \item There exists an $\varepsilon$-approximately sound monitor $\moni^*$ that produces a verdict for all $t\geq T$ in no more than $\tau^*$ steps for the property $\varphi$ under the synchronous interpretation.
        \item If $(\mathcal R_\varphi, r,s)$, is proper and $\mathcal R$ is an interval,
        then for every $\varepsilon$-approximately sound monitor $\moni$ and every $\tau < \tau^*$, there exists an event sequence $w=(\bm e_t)_{t\in\mathbb Z}\in \mathcal E^{\mathbb Z}$ on which $\moni$ does not produce a verdict for time $T$ after $\tau$ steps.
    \end{enumerate}
\end{corollary}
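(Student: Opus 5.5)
The plan is to reduce Corollary~\ref{cor:approximate-monitors} to Theorem~\ref{thm:approximate-monitors}, via the observation that under the synchronous interpretation a linear formula is itself a single discounted sum over a derived scalar stream.

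\textbf{Step 1: collapse the formula to one atom.} Because $\varphi$ is linear, I first unfold scalar multiplications and additions and group identical atoms. This gives $\varphi=\sum_k c_k\,\mathcal S^{(k)}_{r,s}$ with real coefficients $c_k$.

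In the synchronous interpretation every atom uses the same exponents $\tau^{(k)}_{t,\pm i}=i$. Linearity of the series, which converge absolutely by Lemma~\ref{lem:convergingsums}, then gives
\[
\eval{\varphi}_{sync}(w,t)=S^{r,s}_t(y), \qquad y_t\coloneqq \textstyle\sum_k c_k x^{(k)}_t .
\]
Since $x^{(k)}_t\in\mathcal R\cup\{0\}$, the recursive definition of spread shows $y_t\in\mathcal R_\varphi=\mathrm{Spread}_{\mathcal R\cup\{0\}}(\varphi)$. Moreover, $y_t$ is computable online from $\bm e_t$.

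\textbf{Step 2: the upper bound.} For part~(1), I run Algorithm~\ref{alg:monitor} on the derived stream $y$ with input setting $(\mathcal R_\varphi,r,s)$. Only boundedness of $\mathcal R_\varphi$ is needed, and interval arithmetic guarantees it.

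Theorem~\ref{thm:approximate-monitors}(1) then yields $\varepsilon$-approximate soundness with delay $\tau^*(\mathcal R_\varphi,r,s,\varepsilon,T)$. By the closed-form remark after that theorem, this equals the expression in Eq.~\eqref{eq:approximate-monitors-tau-cor1} with $d_{\mathcal R_\varphi}=\sup\mathcal R_\varphi-\inf\mathcal R_\varphi$, in the generic case $0<\tau^*<\infty$ the paper assumes.

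Soundness transfers directly. The monitor's enclosure contains every $S_t(y')$ with $y'\in\mathcal R_\varphi^{\mathbb Z}$ extending the observed prefix, and this set contains every value $\eval{\varphi}(w',t)$ over event extensions $w'$.

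\textbf{Step 3: the lower bound.} For part~(2), I would invoke Theorem~\ref{thm:approximate-monitors}(2) on $(\mathcal R_\varphi,r,s)$. If $\mathcal R$ is an interval, then $\mathcal R\cup\{0\}$ may not be, but $\mathcal R_\varphi$ is obtained through interval arithmetic on hulls and so is treated as an interval. I would also read the nontriviality conditions on $J$ and $\mathcal I_{\pm\varepsilon}$ as implicit in the hypotheses.

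Theorem~\ref{thm:approximate-monitors}(2) gives a hard stream $y\in\mathcal R_\varphi^{\mathbb Z}$. The remaining task is to lift $y$ back to an event stream. For each value $v\in\mathcal R_\varphi$, I need an event $\bm e$ whose combined value $\sum_k c_k x^{(k)}$ equals $v$. I would construct it by moving each atomic coordinate continuously between its extreme choices, so that the intermediate value theorem, together with the fact that $\mathcal R$ is an interval, covers the whole of $\mathcal R_\varphi$.

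With such a surjective map from events onto $\mathcal R_\varphi$, any event-level monitor $\moni$ induces a monitor on $y$-streams: pick preimages of the observed values and run $\moni$ on them. This induced monitor is $\varepsilon$-approximately sound for the atom. Since Theorem~\ref{thm:approximate-monitors}(2) says it cannot decide time $T$ within $\tau<\tau^*$ steps on the hard stream, neither can $\moni$ on the corresponding preimage event stream.

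\textbf{Main obstacle.} I expect this surjectivity and lifting step to be the hardest part. Two points need care: the role of $0$ and the empty symbol $\empt$ when $0\notin\mathcal R$, and whether the spread over $\mathcal R\cup\{0\}$ equals the exact image rather than only its hull. I would first try to establish that the spread is an interval, using the properness hypothesis, and then pick preimages coordinatewise.
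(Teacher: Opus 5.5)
Your reduction is the paper's own: its entire justification is the sentence before the corollary asserting that, under synchronous semantics, monitoring a linear $\varphi$ is equivalent to monitoring an atom on $(\mathcal R_\varphi,r,s)$. Part~(1) of your argument is fine. Interval arithmetic over-approximates, so $y_t$ always lies in $[\inf\mathcal R_\varphi,\sup\mathcal R_\varphi]$, and that is all the enclosure of Algorithm~\ref{alg:monitor} uses.

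The lifting step in part~(2), which you flag, is a genuine gap. It cannot be closed by the intermediate value theorem or by properness, because part~(2) is false as stated, for two reasons.

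\emph{Repeated atoms.} Interval arithmetic treats repeated occurrences of an atom as independent. Take $\mathcal R=[0,1]$ and $\varphi=\mathcal S^{(1)}_{r,s}+(-1)\cdot\mathcal S^{(1)}_{r,s}$. Then $\eval{\varphi}_{sync}\equiv 0$, so a sound monitor can always decide at delay $0$. Yet $\mathcal R_\varphi=[-1,1]$ gives $\tau^*=2$ for $r=0$, $s=\tfrac12$, $T=0$, $\varepsilon=\tfrac14$, $\mathcal I=(1,3)$. Events do not map onto $\mathcal R_\varphi$ at all here.

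\emph{Zero outside $\mathcal R$.} If $0\notin\mathcal R$, the derived values range over the disconnected set $\mathcal R\cup\{0\}$. Treating $\mathcal R_\varphi$ as its hull does not make the derived stream fill the gap. Take $\mathcal R=[1,2]$, $\varphi=\mathcal S^{(1)}_{1/2,1/2}$, $T=0$, $\varepsilon=\tfrac{11}{8}$, $\mathcal I=(\tfrac74,5)$. Then $(\{0\}\cup[1,2],\tfrac12,\tfrac12)$ is proper, $J=[0,6]$ satisfies both nontriviality conditions, and $\tau^*=2$. After one step the observed contribution $p=x_0+\tfrac12x_1$ is either $0$ or at least $\tfrac12$, and the enclosure is $[p,p+3]$. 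In the first case it is disjoint from $\mathcal I_{-\varepsilon}=(\tfrac{25}{8},\tfrac{29}{8})$, so $\bot$ is sound. In the second it lies inside $\mathcal I_{+\varepsilon}=(\tfrac38,\tfrac{51}{8})$, so $\top$ is sound. Hence a sound monitor decides time $0$ after one step on every stream.

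Properness cannot rescue this. It only makes the \emph{infinite} sums fill an interval, whereas the proof of Theorem~\ref{thm:approximate-monitors}(2) needs the \emph{finite} observed contribution to range over an interval. The paper itself remarks that for domains with gaps $\tau^*$ need not be tight.

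Your argument becomes correct once you add hypotheses such as the following.
(i) Compute $\mathcal R_\varphi$ from the grouped form $\sum_k c_k\mathcal S^{(k)}_{r,s}$ of your Step~1, or assume no atom occurs with coefficients of opposite signs.
(ii) Take $\mathcal R=[m,M]$ with $m\le 0\le M$, so that $\mathcal R\cup\{0\}=\mathcal R$.
(iii) Assume the nontriviality conditions of Theorem~\ref{thm:approximate-monitors}(2) for $J=\lambda^{r,s}[\inf\mathcal R_\varphi,\sup\mathcal R_\varphi]$. These are genuinely needed rather than implicit: without them a monitor can decide immediately.
Under these hypotheses, $\bm e\mapsto\sum_k c_kx^{(k)}$ is a continuous map from the connected set $[m,M]^n$ onto the interval $\sum_k c_k[m,M]=\mathcal R_\varphi$, so a section exists. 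Your induced-monitor argument then goes through verbatim, with properness automatic and unused.
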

To obtain an analogous result for multiplicative expressions, we need first to define the nesting depth of a multiplicative property.

\begin{definition}[Nesting depth]
Let $\varphi$ be a multiplicative property.
The \emph{nesting depth} of $\varphi$ is defined recursively as follows:
\begin{itemize}
    \item If $\varphi$ is an atom, $\mathrm{Depth}(\varphi) = 1$.
    \item If $\varphi=c\cdot \psi$ for $c\in\mathbb R$, $\mathrm{Depth}(\varphi) = \mathrm{Depth}(\psi)$.
    \item If $\varphi = \psi + \chi$, then $\mathrm{Depth}(\varphi) = \max\{ \mathrm{Depth}(\psi), \mathrm{Depth}(\chi) \}$.
    \item If $\varphi = \psi \cdot \chi$, then $\mathrm{Depth}(\varphi) = \max\{ \mathrm{Depth}(\psi), \mathrm{Depth}(\chi) \}+1$.
\end{itemize}
\end{definition}
With multiplicative properties, we loose the equivalence to monitoring atoms we had for linear properties. Therefore, while we can bound uncertainty intervals (obtaining an analogous to Thm.~\ref{thm:approximate-monitors} (1), it is no longer always possible to construct a stream of events that fills the uncertainty interval, which is required in the proof of Thm.~\ref{thm:approximate-monitors} (2).
\begin{corollary}
    \label{cor:approximate-monitors-multi}
    Let $(\mathcal R, r,s)$ be an input setting, $\varepsilon>0$ a tolerance value, $T\in \mathbb N$ a point in time, $\varphi$ a linear formula.
    The following holds for $\tau^*$ defined as:
    \begin{equation}
    \label{eq:approximate-monitors-tau-cor2}
        \tau^*(\mathcal R, r,s,\varepsilon,T, \varphi) \coloneqq \left\lceil \log_s\left[\left(\frac{\sqrt[\mathrm{Depth}(\varphi)]{2\varepsilon}}{\sup\mathcal R_\varphi - \inf \mathcal R_\varphi} - \frac{r^{T+1}}{1-r}\right)(1-s)\right]\right\rceil-1.
    \end{equation}
        There exists an $\varepsilon$-approximately sound monitor $\moni^*$ that produces a verdict for all $t\geq T$ in no more than $\tau^*$ steps for the property $\varphi$ under the synchronous interpretation.
\end{corollary}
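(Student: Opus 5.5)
The plan is to run Algorithm~\ref{alg:monitor} with one register per atom occurrence, all synchronised on the same position $t$. The atom enclosures from Lemma~\ref{lem:uncertainty-intervals} are then combined by interval arithmetic along the syntax tree of $\varphi$, and the monitor outputs $\top$ or $\bot$ exactly as in the single-sum case, when the combined enclosure lies in $\aI_{+\varepsilon}$ or misses $\aI_{-\varepsilon}$. Soundness is immediate: interval arithmetic is inclusion-monotone, so the true value $\eval{\varphi}(w,t)$ always lies in the propagated enclosure. The real content is a bound on the diameter of that enclosure after $\tau$ further observations, which must be shown to be at most $2\varepsilon$ once $\tau\geq\tau^*$. (Although the statement says "linear", the depth parameter only matters for multiplicative $\varphi$, so the argument is written for multiplicative $\varphi$.)

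\emph{Step 1 (atoms).} Under the synchronous semantics, each atom enclosure at time $t$ after $n=t+\tau$ observations is $S^{(k)}_t(w_{0:n})+\gamma^{r,s}_{t,n}[\inf\mathcal R,\sup\mathcal R]$. Its width is $g\,d_{\mathcal R}$ with $g\coloneqq\gamma^{r,s}_{t,t+\tau}\leq \frac{r^{T+1}}{1-r}+\frac{s^{\tau+1}}{1-s}$ for all $t\geq T$. The same $g$ applies to every atom, since all atoms share the clock. It is convenient to normalise: every sum lies in $\lambda^{r,s}\cdot\mathrm{conv}(\mathcal R\cup\{0\})$, and the enclosure is a subinterval whose relative width is $g$.

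\emph{Step 2 (induction on the syntax tree).} For each subformula $\psi$, the claim to prove is that the enclosure width is at most $(g\cdot D_\psi)^{\mathrm{Depth}(\psi)}$, up to the normalisation built into $\sup\mathcal R_\varphi-\inf\mathcal R_\varphi$. Here $D_\psi$ is a scale controlled by the spread, i.e.\ the enclosure behaves like $\mathrm{Spread}(\psi)$ contracted by the factor $g$ per multiplication level. Scalar multiplication and addition preserve depth, and widths add (or scale) linearly, consistent with the linear corollary. For a product $\psi\cdot\chi$, the standard estimate $|I_1I_2|\leq |I_1|\sup|I_2|+\sup|I_1|\,|I_2|$ is the starting point. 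The aim is the form stated in Eq.~\eqref{eq:approximate-monitors-tau-cor2}: the diameter is at most $\bigl((\sup\mathcal R_\varphi-\inf\mathcal R_\varphi)\,g\bigr)^{\mathrm{Depth}(\varphi)}$.

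\emph{Step 3 (solving for $\tau$).} Requiring $\bigl((\sup\mathcal R_\varphi-\inf\mathcal R_\varphi)g\bigr)^{\mathrm{Depth}(\varphi)}\leq 2\varepsilon$ gives $g\leq \sqrt[\mathrm{Depth}(\varphi)]{2\varepsilon}/(\sup\mathcal R_\varphi-\inf\mathcal R_\varphi)$. Substituting the bound on $g$ and solving for $s^{\tau+1}$ exactly as in Theorem~\ref{thm:approximate-monitors}(1) yields the expression for $\tau^*$. A monitor that checks at every step therefore releases each $t\geq T$ within $\tau^*$ steps.

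The main obstacle is Step 2 for products. The width of a product interval is not multiplicative in the widths: it carries cross terms proportional to the magnitudes of the factors, so a clean $g^{\mathrm{Depth}}$ bound fails when factors are large and $g$ is not small. The first attempt will be to bound each factor's enclosure width by $g$ times its spread diameter and its magnitude by the spread diameter, which gives a bound of the form $c_\varphi\, g$ with a depth-dependent constant. The stated $\mathrm{Depth}$-th root form would then follow in the regime $g\leq 1$ after absorbing the constants into $\mathcal R_\varphi$. If this does not match, the fallback is to verify that $\mathrm{Spread}$ already encodes the product magnitudes, so that the recursive inequality closes directly.
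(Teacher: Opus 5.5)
Step~2 cannot be completed, because the bound it targets is false for products, and no choice of normalisation rescues it. The width of a product of enclosures has a term linear in each factor's width, with the \emph{magnitude} of the other factor as coefficient. After discounting, these magnitudes are of order $\lambda^{r,s}\max|\mathrm{conv}(\mathcal R\cup\{0\})|$, a quantity that does not appear in Eq.~\eqref{eq:approximate-monitors-tau-cor2}.

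Concretely, take $\mathcal R=[0,1]$, $r=0$, $s=\tfrac12$, $\varepsilon=\tfrac18$ and $\varphi=\mathcal S^{(1)}_{0,1/2}\cdot\mathcal S^{(2)}_{0,1/2}$. Then $\mathcal R_\varphi=[0,1]$, $\mathrm{Depth}(\varphi)=2$, and the formula gives $\tau^*=\lceil\log_{1/2}\tfrac14\rceil-1=1$. Let the events at positions $t$ and $t+1$ both be $(1,1)$. Each atom then has observed part $\tfrac32$ and an unobserved future contribution ranging over $[0,\tfrac12]$, independently of the other atom since the two event types can be filled independently. So $\eval{\varphi}(w,t)$ can take every value in $[\tfrac94,4]$, an interval of width $\tfrac74>2\varepsilon$. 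For $\mathcal I=(3,10)$, both $2.5\notin\mathcal I_{+\varepsilon}$ and $3.5\in\mathcal I_{-\varepsilon}$ are feasible. Hence no $\varepsilon$-approximately sound monitor can give a verdict after $\tau^*$ steps.

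In general, the worst-case enclosure diameter is of order $g$, whereas $(cg)^{\mathrm{Depth}(\varphi)}\ll g$ for small $g$. Since $(2\varepsilon)^{1/\mathrm{Depth}(\varphi)}\gg 2\varepsilon$ for small $\varepsilon$, the stated horizon is too short.

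Your proposed repair also runs in the wrong direction. From $\mathrm{width}\le c_\varphi g$ you get $\mathrm{width}\le 2\varepsilon$ only when $g\le 2\varepsilon/c_\varphi$. For $g\le 1$ one has $g^{\mathrm{Depth}(\varphi)}\le g$, so the power bound is the \emph{stronger} statement and cannot be obtained from the linear one by absorbing constants. The fallback fails as well, since $\mathrm{Spread}_{\mathcal R\cup\{0\}}$ is computed from single-step values and never sees $\lambda^{r,s}$. What interval arithmetic actually yields is a horizon determined by $g\le 2\varepsilon/C_\varphi$. Here $C_\varphi$ is built recursively from widths and magnitudes via $\mathrm{width}(I_1I_2)\le\mathrm{width}(I_1)\max_{y\in I_2}|y|+\max_{x\in I_1}|x|\,\mathrm{width}(I_2)$. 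That threshold is linear in $\varepsilon$, not $(2\varepsilon)^{1/\mathrm{Depth}(\varphi)}$.

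The statement as literally written takes $\varphi$ linear. Then $\mathrm{Depth}(\varphi)=1$, Eq.~\eqref{eq:approximate-monitors-tau-cor2} coincides with Eq.~\eqref{eq:approximate-monitors-tau-cor1}, and the claim is exactly Corollary~\ref{cor:approximate-monitors}(1). Your Steps~1 and~3 alone prove that version: widths scale by $|c|$ and add under $+$, giving diameter $g(\sup\mathcal R_\varphi-\inf\mathcal R_\varphi)$. You need to use $\mathrm{conv}(\mathcal R\cup\{0\})$ rather than $[\inf\mathcal R,\sup\mathcal R]$ in the atom enclosures, to account for empty events. The paper gives no proof of this corollary beyond remarking that interval arithmetic bounds the enclosures. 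The multiplicative generalisation you chose to prove cannot be established.
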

\begin{remark}
    Corollary~\ref{cor:approximate-monitors-multi} cannot be extended to non-multiplicative expressions or expressions with the asynchronous interpretation.
    \begin{itemize}
        \item If $\varphi$ is not multiplicative, then $\mathcal R_\varphi$ is not defined, because $\mathcal R_\varphi$ is a spread on an input setting containing $0$, so divisions are not defined.
        \item If $\varphi$ is interpreted asynchronously, then it can contain arbitrary many empty events for each atom, which make it impossible to give a minimum required observation time.
    \end{itemize}
\end{remark}

\section{Register Complexity of Discounted-Sum Monitoring} \label{sec:register-complexity}

The quantity $\tau$ plays two roles in~\Cref{alg:monitor}: it bounds both the \emph{verdict delay}, i.e., the number of steps required to resolve the uncertainty interval for a position, and the \emph{register count}, i.e., the number of running sums maintained in parallel.
These roles have different origins.
The delay is inherent to discounted sums: contributions from unobserved values decay geometrically, and $\tau$ is the first horizon where this is resolved for sure.
The register count, by contrast, is an implementation cost. Since the running sums are related, it is natural to ask whether we can track the $\tau$ pending positions with less memory.

We show that no such reduction is possible in the worst case in the future-only setting.
To make the question precise, we formalize monitors as \emph{affine register machines} (ARMs): finite-state machines equipped with real-valued registers, affine updates, and strict affine guards~\cite{DBLP:journals/tcs/KaminskiF94,DBLP:conf/lics/AlurDDRY13,DBLP:conf/lics/FerrereHS18}.
By~\Cref{thm:approximate-monitors}, every monitored position admits an $\varepsilon$-approximately sound verdict by its horizon, so we encode these deadline verdicts as a safety language over $\mathcal{R}^\omega$.
We build, for every $\tau \geq 1$, a monitoring instance with horizon $\tau$ whose language cannot be recognized by any ARM with fewer than $\tau$ registers.

\subparagraph{Safety Formulation of Approximate Monitoring.} \label{sec:safety-formulation}

We first recast approximate monitoring as a safety-language recognition problem.
Throughout this section we work over the normalized input domain $\mathcal R = [0,1]$.
This is without loss of generality for interval domains: any bounded interval $[m,M]$ can be mapped to $[0,1]$ by an affine transformation, and discounted sums, target intervals, and tolerances rescale accordingly.

A \emph{monitoring instance} is a tuple $\mathcal P = (\mathcal R, r,s,\mathcal I,\varepsilon)$, where $(\mathcal R,r,s)$ is a proper input setting with $\mathcal R=[0,1]$ and $r,s\in[0,1)$, where $\mathcal I=(L,U)$ is a target interval and $\varepsilon>0$.
The \emph{horizon} of $\mathcal P$ is $\tau(\mathcal P):=\tau^*(\mathcal R,r,s,\varepsilon,0)$, as in~\Cref{thm:approximate-monitors}.
We say that $\mathcal P$ is \emph{future only} if $r=0$ and $s>0$, and \emph{past only} if $r>0$ and $s=0$.

\begin{definition}[Monitoring Language]
    The \emph{language} of a monitoring instance
    $\mathcal P=(\mathcal R,r,s,\mathcal I,\varepsilon)$ is $L(\mathcal P) := \{x\in\mathcal R^\omega \mid \forall t \geq 0 : U_t^{r,s}(x_{0:t+\tau(\mathcal P)}) \subseteq \mathcal I_{+\varepsilon}\}$.
\end{definition}

By~\Cref{lem:uncertainty-intervals,thm:approximate-monitors}, the uncertainty interval at time $t+\tau(\mathcal P)$ has diameter at most $2\varepsilon$.
Hence, if it is not contained in $\mathcal I_{+\varepsilon}$, then it is disjoint from $\mathcal I_{-\varepsilon}$.
Therefore, the rule that returns $\top$ exactly when $U_t^{r,s}(x_{0:t+\tau(\mathcal P)})\subseteq\mathcal I_{+\varepsilon}$
and returns $\bot$ otherwise is always defined and $\varepsilon$-approximately sound.

Since non-membership in $L(\mathcal P)$ is witnessed by a finite prefix, $L(\mathcal P)$ is a safety language.
Indeed, $U_t^{r,s}(x_{0:t+\tau(\mathcal P)})$ depends only on $x_{0:t+\tau(\mathcal P)}$, so any violation at time $t$ is shared by every extension of this prefix.
By~\Cref{lem:uncertainty-intervals} and since $\mathcal R=[0,1]$, the condition $U_t^{r,s}(x_{0:t+\tau(\mathcal P)})\subseteq\mathcal I_{+\varepsilon}$
is equivalent to $L - \varepsilon < S_t^{r,s}(x_{0:t+\tau(\mathcal P)}) < U + \varepsilon - \gamma_{t,t+\tau(\mathcal P)}^{r,s}$.

\subparagraph{Affine Register Machines.} \label{sec:register-machines}

Let $Y=\{y_1,\ldots,y_k\}$ be a finite set of registers.
A \emph{valuation} is a map $\nu:Y\to\mathbb R$.
An \emph{update} over $Y$ is a parallel assignment $y_i \gets \sum_{j=1}^k a_{i,j}y_j+b_i\varsigma+c_i$ for all $1\leq i\leq k$, where $\varsigma\in\mathcal R$ is the current input and all coefficients are real.
We write $\Gamma(Y)$ for the set of such updates.
A \emph{guard} over $Y$ is a finite conjunction of strict affine inequalities $\sum_{j=1}^k a_jy_j+b\varsigma \bowtie c$, where ${\bowtie}\in\{<,>\}$.
We write $(\nu,\varsigma)\models\varphi$ if the guard $\varphi$ holds under valuation $\nu$ and input $\varsigma$, and we write $\Phi(Y)$ for the set of guards over $Y$.

\begin{definition}[Affine Register Machine]
	A (deterministic) \emph{affine register machine} (ARM) is a tuple
	$\mathcal{M} = (\mathcal{R}, Y, Q, q_0, \nu_0, \Delta)$
	where $\mathcal{R} \subset \mathbb{R}$ is a bounded input set, $Y$ is a finite set of registers, $Q$ is a finite set of control locations, $q_0 \in Q$ is the initial location, $\nu_0 \colon Y \to \mathbb{R}$ is the initial valuation, and
	$\Delta \subseteq Q \times \Phi(Y) \times \Gamma(Y) \times Q$
	is a finite transition relation.
	We write transitions as $(q,\varphi,\gamma,q') \in \Delta$ and require the following determinism condition:
	for every location $q \in Q$, every valuation $\nu$, and every input $\varsigma \in \mathcal{R}$, there is at most one transition $(q,\varphi,\gamma,q') \in \Delta$ such that $(\nu,\varsigma) \models \varphi$.
\end{definition}
A configuration is a pair $(q,\nu)$.
On input $\varsigma$, the machine follows the unique enabled transition $(q,\varphi,\gamma,q')$ to $(q',\gamma(\nu,\varsigma))$.
If no transition is enabled, the run terminates.
A run on $x\in\mathcal R^\omega$ is \emph{accepting} if infinite.
The language $L(\mathcal M)$ is the set of streams with an accepting run; hence every ARM recognizes a safety language.

\subparagraph{Sufficiency: An ARM Construction.} \label{sec:upper-bound}

We show that $L(\mathcal{P})$ can be recognized by an ARM with $\tau(\mathcal{P}) + 1$ registers.
The construction follows the monitor from~\Cref{alg:monitor}.
The ARM keeps one register for each pending position and updates all pending truncated sums in parallel.
If a register $y_a$ stores the truncated sum for the pending position of age $a$, then on input $x$ the affine updates are $y_0\gets r y_0+x$ and $y_a\gets y_{a-1}+s^a x$ for all $1\leq a\leq \tau(\mathcal{P})-1$.
When a position reaches age $\tau(\mathcal{P})$, its completed truncated sum is $y_{\tau(\mathcal{P})-1}+s^{\tau(\mathcal{P})}x$, which the machine checks by a guard before applying these updates.
One additional register tracks the geometric factor needed to express the time-dependent past residual $r^{t+1}/(1-r)$.

\begin{theorem}[General Upper Bound] \label{thm:upper-bound}
	For every monitoring instance $\mathcal{P}$, there is an ARM $\mathcal{M}$ with $\tau(\mathcal{P}) + 1$ registers such that $L(\mathcal{M}) = L(\mathcal{P})$.
\end{theorem}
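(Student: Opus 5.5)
Write $\tau \coloneqq \tau(\mathcal P)$. The plan is to build the ARM by direct construction, following the register layout sketched before the statement, and then prove $L(\mathcal M)=L(\mathcal P)$ by induction on the prefix length.

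\textbf{Reformulating membership.} By the remark following the definition of the monitoring language, $x\in L(\mathcal P)$ iff for every $t\ge 0$
\[
L-\varepsilon < S_t^{r,s}(x_{0:t+\tau}) < U+\varepsilon-\gamma^{r,s}_{t,t+\tau}.
\]
The bound $\gamma^{r,s}_{t,t+\tau}=\frac{r^{t+1}}{1-r}+\frac{s^{\tau+1}}{1-s}$ depends on $t$ only through $r^{t+1}$. This is the only non-constant quantity, and it is the reason for the extra register.

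\textbf{Registers.} Use registers $y_0,\dots,y_{\tau-1}$ and $z$. The intended invariant after reading $x_{0:n}$ is:
\begin{itemize}
\item $y_0=\sum_{i=0}^{n} r^i x_{n-i}$, the present-inclusive past sum $S_n^{r,s}(x_{0:n})$;
\item $y_a=S_{n-a}^{r,s}(x_{0:n})$ for $1\le a\le \tau-1$ with $n-a\ge 0$;
\item $z=r^{n+1}$.
\end{itemize}
Initially $\nu_0\equiv 0$ except $z=1$.

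\textbf{Updates.} On input $\varsigma$ the parallel update is
\[
y_0\gets r y_0+\varsigma,\qquad y_a\gets y_{a-1}+s^a\varsigma \ (1\le a\le \tau-1),\qquad z\gets r z.
\]
All of these are affine. The invariant follows by one-line induction: $S_{p}^{r,s}(x_{0:n+1})=S_p^{r,s}(x_{0:n})+s^{n+1-p}x_{n+1}$, and the new position $n+1$ starts at $ry_0+x_{n+1}$.

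\textbf{Guard.} When reading $x_n$ with $n\ge\tau$, position $t=n-\tau$ reaches its deadline. Its completed sum is the affine expression $y_{\tau-1}+s^\tau\varsigma$ in the pre-update registers. Its residual is $\frac{r^{t+1}}{1-r}=\frac{r^{-\tau+1}}{1-r}\cdot r^{n}$. Before the update, $z=r^{n}$, so the residual is a fixed constant times $z$, provided $r>0$. If $r=0$ the residual vanishes and $z$ is unused. The guard is therefore the conjunction of two strict affine inequalities:
\[
y_{\tau-1}+s^\tau\varsigma > L-\varepsilon,\qquad
y_{\tau-1}+s^\tau\varsigma+\tfrac{r^{1-\tau}}{1-r}z < U+\varepsilon-\tfrac{s^{\tau+1}}{1-s}.
\]
To avoid the negative power of $r$, the alternative is to let $z$ store $r^{n-\tau+1}$ directly, initialised lazily. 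Either way it is a fixed scalar multiple, so the guard stays affine.

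\textbf{Control locations.} Use $q_0,\dots,q_{\tau}$ to count the first $\tau$ inputs, during which no check is performed (guard \texttt{true}). Location $q_\tau$ loops with the guard above. Determinism is immediate, since each location has a single outgoing transition. When $\tau=0$, the check reads $y_0$-style data directly; use $y_{-1}\equiv r y_0$, i.e. the guard tests $ry_0+\varsigma$. This still needs only $\tau+1$ registers.

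\textbf{Main obstacle: the time-dependent residual and the step-$t$ correspondence.} This is an off-by-one matter. At the transition reading $x_n$, the guard must refer to exactly $S_{n-\tau}^{r,s}(x_{0:n})$ and $r^{n-\tau+1}$ expressed through the pre-update valuation. I would verify this carefully from the invariant. After that, run acceptance (an infinite run) is equivalent to every deadline check passing, which is exactly the membership condition above. Hence $L(\mathcal M)=L(\mathcal P)$.
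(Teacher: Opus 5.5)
Your construction is correct and is essentially the paper's own: the same $\tau$ shifted running-sum registers with updates $y_0\gets ry_0+\varsigma$ and $y_a\gets y_{a-1}+s^a\varsigma$, one extra geometric register, counter locations $q_0,\dots,q_\tau$, and in $q_\tau$ a self-loop guarded by the two strict affine inequalities on $y_{\tau-1}+s^\tau\varsigma$. Two immaterial differences: the paper keeps $\rho=r^{n+1}$ before reading $x_n$, so its coefficient is $r^{-\tau}/(1-r)$ rather than your $r^{1-\tau}/(1-r)$; and it updates only the currently pending registers, whereas your unconditional updates leave harmless junk in unused registers.

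The only slip is your $\tau=0$ aside. There you use both $y_0$ and $z$, which is two registers rather than $\tau+1=1$. The paper avoids this case through its standing assumption $0<\tau^*<\infty$, and its construction, like yours, needs $\tau\ge 1$. So you should restrict to $\tau\ge 1$ rather than claim the bound for $\tau=0$.
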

In the future-only case the past residual is zero.
The auxiliary register is therefore unnecessary, and the same construction uses exactly one register per pending future window.
\begin{corollary}[Future-Only Upper Bound] \label{cor:future-upper-bound}
	For every future-only monitoring instance $\mathcal{P}$, there is an ARM $\mathcal{M}$ with $\tau(\mathcal{P})$ registers such that $L(\mathcal{M}) = L(\mathcal{P})$.
\end{corollary}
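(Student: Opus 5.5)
We specialise the ARM from Theorem~\ref{thm:upper-bound} to $r=0$ and write $\tau\coloneqq\tau(\mathcal P)$. The machine has one control location $q_0$ and registers $y_0,\ldots,y_{\tau-1}$. The intended invariant is that, after reading $x_{0:n}$, register $y_a$ holds the truncated sum $S^{0,s}_{n-a}(x_{0:n})=\sum_{i=0}^{a} s^i x_{n-a+i}$ whenever $n-a\geq 0$, and holds $0$ otherwise. The update on input $\varsigma$ is $y_0\gets\varsigma$ and $y_a\gets y_{a-1}+s^a\varsigma$ for $1\leq a\leq\tau-1$. Since $r=0$, the update $y_0\gets r y_0+\varsigma$ becomes $y_0\gets\varsigma$, and no register for the geometric past factor is needed. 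The invariant follows by induction on $n$: shifting ages by one and adding the new value with weight $s^{a}$ reproduces the definition of $S^{0,s}_{n-a}$. We initialise all registers to $0$, which is consistent with the convention for positions that do not yet exist.

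For the guard, note that the future-only residual $\gamma^{0,s}_{t,t+\tau}=s^{\tau+1}/(1-s)$ does not depend on $t$. As observed before the definition of ARMs, membership of position $t$ reduces to the affine condition
\[
L-\varepsilon < S^{0,s}_t(x_{0:t+\tau}) < U+\varepsilon-\frac{s^{\tau+1}}{1-s}.
\]
When the input $\varsigma=x_{t+\tau}$ arrives, this completed sum equals $y_{\tau-1}+s^\tau\varsigma$ in the pre-update valuation. The guard is therefore the conjunction of two strict affine inequalities in $y_{\tau-1}$ and $\varsigma$ with constant coefficients, and the transition fires only if the guard holds.

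One subtlety is the first $\tau$ steps, where no position has yet reached age $\tau$ and the check must be skipped. We handle this with finitely many extra control locations $q_0,\ldots,q_\tau$: locations $q_0,\ldots,q_{\tau-1}$ use the trivial guard, which is an empty conjunction, and step to the next location. Location $q_\tau$ loops with the real guard. This uses only finite control and no extra registers, and determinism holds because each location has exactly one transition. The degenerate case $\tau=0$ needs no registers: the guard $L-\varepsilon<\varsigma<U+\varepsilon-s/(1-s)$ alone suffices.

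Correctness is then a check that the run survives step $t+\tau$ if and only if position $t$ satisfies the condition above, for every $t\geq0$. Hence the run is infinite exactly when $x\in L(\mathcal P)$. The step needing most care is the off-by-one bookkeeping between register age, the moment the guard reads the pre-update valuation, and the index $t+\tau$, including the convention that $y_{\tau-1}$ denotes the oldest register. The rest follows directly from the general construction.
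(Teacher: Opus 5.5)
Your construction is the paper's: the corollary comes from specialising the ARM of Theorem~\ref{thm:upper-bound} to $r=0$, where $\kappa_{r,\tau}=0$ makes the auxiliary register $\rho$ superfluous, leaving $y_0,\ldots,y_{\tau-1}$, the counter locations $q_0,\ldots,q_\tau$, and the guard on $y_{\tau-1}+s^\tau\varsigma$. One small correction: with your ungated update, the clause ``$y_a$ holds $0$ when $n-a<0$'' is false (after reading $x_0$ one has $y_a=s^a x_0$ for all $a\geq 1$), but this is harmless because $y_a$ after step $n$ depends only on $y_{a-1}$ after step $n-1$, so the $a\leq n$ part of the invariant is preserved on its own and the guard reads $y_{\tau-1}$ only once it is valid; alternatively, gate the update by location as the paper does, updating only $a\leq\min\{c,\tau-1\}$ in $q_c$.
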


\subparagraph{Necessity: A Register Lower Bound.} \label{sec:lower-bound}

We now show that the future-only upper bound is optimal in the worst case.
For each $k\geq 1$, consider the future-only monitoring instance
$\mathcal P_k := ([0,1],\,0,\,\tfrac12,\,(0,1),\,2^{-(k+1)})$.
For $x\in[0,1]^\omega$ and $t\geq 0$, write $V_t^k(x) := \sum_{i=0}^{k}2^{-i}x_{t+i}$ and $T_k := 1-2^{-(k+1)}$.
For this family, the safety condition has a particularly simple form.
The horizon is $k$, the residual future uncertainty is $2^{-k}$, and the lower bound is vacuous because all inputs are nonnegative.
Thus, the monitoring language is exactly the set of streams whose discounted windows of length $k+1$ remain below the threshold $T_k$.

\begin{lemma}[Characterization of the hard instances]
\label{lem:Pk-characterization}
    For every $k\geq 1$, the future-only monitoring instance $\mathcal P_k$ has horizon $\tau(\mathcal P_k)=k$, and its monitoring language is $L(\mathcal P_k) = \{x\in[0,1]^\omega \mid \forall t\geq 0: \sum_{i=0}^{k}2^{-i}x_{t+i} < 1-2^{-(k+1)} \}$.
\end{lemma}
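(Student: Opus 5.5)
The argument is a direct computation from the definition of $\tau^*$ in \Cref{thm:approximate-monitors} and the interval description of uncertainty sets in \Cref{lem:uncertainty-intervals}, specialised to $r=0$, $s=\tfrac12$, $\mathcal R=[0,1]$, $\mathcal I=(0,1)$ and $\varepsilon=2^{-(k+1)}$.

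\textbf{Horizon.} We have $d_{\mathcal R}=1$, and the past term $r^{T+1}/(1-r)$ vanishes because $r=0$. The defining condition of $\tau^*$ therefore becomes
\[
\frac{(1/2)^{\tau+1}}{1-1/2}=2^{-\tau}\le 2\varepsilon=2^{-k}.
\]
The least $\tau\in\mathbb N$ satisfying this is $\tau=k$, so $\tau(\mathcal P_k)=k$. Properness holds trivially, since $\Delta_{\mathcal R}=0$ for an interval. Thus $\mathcal P_k$ is a legitimate future-only monitoring instance.

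\textbf{Language.} Fix $t\ge0$ and take $n=t+k$. Since $\mathcal R=[0,1]$ is compact and the setting is proper, \Cref{lem:uncertainty-intervals} gives the exact set
\[
U_t(x_{0:t+k})=S_t(x_{0:t+k})+\gamma_{t,t+k}\,[0,1].
\]
With $r=0$ we get $S_t(x_{0:t+k})=\sum_{i=0}^k 2^{-i}x_{t+i}=V_t^k(x)$. The tail factor is $\gamma_{t,t+k}=\frac{(1/2)^{k+1}}{1/2}=2^{-k}$. Hence $U_t=[V_t^k,\,V_t^k+2^{-k}]$, a closed interval.

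Containment in the open interval $\mathcal I_{+\varepsilon}=(-2^{-(k+1)},\,1+2^{-(k+1)})$ requires two things:
\begin{itemize}
\item $V_t^k>-2^{-(k+1)}$, which is automatic because every $x_i\ge0$;
\item $V_t^k+2^{-k}<1+2^{-(k+1)}$, which is equivalent to $V_t^k<1-2^{-(k+1)}=T_k$.
\end{itemize}
Quantifying over all $t\ge0$ gives exactly the stated set. This matches the reformulation given after the definition of $L(\mathcal P)$.

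The only point needing care is strictness. Both endpoints of $U_t$ are attained because $\mathcal R$ is compact and the compact case of \Cref{lem:uncertainty-intervals} applies. Since $\mathcal I_{+\varepsilon}$ is open, containment forces strict inequalities at both ends, which is why the characterisation uses the strict bound $<T_k$. Beyond this, the proof is routine arithmetic.
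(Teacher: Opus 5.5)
Your proposal is correct and follows essentially the same route as the paper. Like the paper, you compute $\tau^*$ with the past term vanishing; you solve $2^{-\tau}\le 2^{-k}$ directly, whereas the paper plugs into the closed form $\lceil\log_s B_T\rceil-1$. You then use the exact interval $U_t=[V_t^k,\,V_t^k+2^{-k}]$ from Lemma~\ref{lem:uncertainty-intervals} to reduce containment in $\mathcal I_{+\varepsilon}$ to $V_t^k<T_k$. Your explicit checks of properness and of strictness via compactness are details the paper leaves implicit.
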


The lower-bound proof is a finite-dimensional linear-algebra argument.
Suppose an ARM with only $k-1$ registers recognizes the language above.
Consider the first $k$ transitions taken on a suitable accepted anchor word.
On all length-$k$ prefixes inducing this same transition sequence, the register valuation after $k$ steps is an affine map from a $k$-dimensional prefix space to $\mathbb R^{k-1}$.
Therefore, this map has a nonzero kernel direction: a perturbation that is invisible to the registers after the first $k$ inputs.

The separating prefixes are chosen from an explicit one-parameter family.
For $\theta$ just below $T_k$, the word $u(\theta) \cdot \alpha \cdot 0^\omega$ is constructed so that its first $k$ discounted windows all have value $\theta$.
Moving slightly in the invisible kernel direction preserves the first $k$ transitions, the register valuation after $k$ steps, and the entire suffix.
However, by choosing the last nonzero coordinate of the kernel direction, one discounted window is pushed above $T_k$.
Determinism then forces the ARM to treat two identical configurations with the same suffix identically, although one stream satisfies the monitoring condition and the other violates it.
Note that $\mathcal{R}$ being an interval is crucial here, as it ensures that the kernel direction (an arbitrary vector in $\mathbb{R}^k$) can be realized as a small perturbation inside $\mathcal{R}^k$.

\begin{theorem}[Future-Only Lower Bound] \label{thm:future-lower-bound}
	For every $k \geq 1$, there is a future-only monitoring instance $\mathcal{P}_k$ such that for every ARM $\mathcal{M}$ with $\tau(\mathcal{P}_k) - 1$ registers we have $L(\mathcal{M}) \neq L(\mathcal{P}_k)$.
\end{theorem}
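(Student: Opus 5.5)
We follow the outline sketched just before the statement, taking $\mathcal P_k$ and the characterization of $L(\mathcal P_k)$ from Lemma~\ref{lem:Pk-characterization}. Suppose, towards a contradiction, that an ARM $\mathcal M$ with $k-1$ registers satisfies $L(\mathcal M)=L(\mathcal P_k)$. We will exhibit two streams that reach the same configuration after $k$ steps and then read the same suffix, yet only one of them lies in $L(\mathcal P_k)$.

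\textbf{Anchor family.} For $\theta$ slightly below $T_k$, we construct a prefix $u(\theta)\in[0,1]^k$ and a block $\alpha\in[0,1]^k$ such that, in $x^\theta=u(\theta)\cdot\alpha\cdot 0^\omega$, the first $k$ windows satisfy $V_t^k(x^\theta)=\theta$ for $t=0,\dots,k-1$, and all later windows stay strictly below $T_k$. A natural choice is to take $\alpha$ (and thus the tail) so that the windows decrease after time $k$. The relations $V_t=\theta$ are a triangular linear system, so $u(\theta)$ is determined once $\alpha$ is fixed. For $\theta$ near $T_k$ we want all coordinates to lie in the open interval $(0,1)$. A candidate is the near-constant choice $x_i\approx c$ with $c\sum_{i=0}^k 2^{-i}=\theta$, i.e.\ $c\approx \theta/(2-2^{-k})<1$, so the coordinates stay strictly interior. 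Then $x^\theta\in L(\mathcal P_k)$, and since the run is infinite, it takes some fixed transition sequence $\rho$ during its first $k$ steps. Because guards are strict inequalities, the set of prefixes $v\in[0,1]^k$ that induce $\rho$ is open near $u(\theta)$.

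\textbf{Kernel direction.} Along $\rho$, each update is affine in the current valuation and input, so the valuation after $k$ steps is an affine map $A:\mathbb R^k\to\mathbb R^{k-1}$ of the prefix. Its linear part has a nonzero kernel vector $d$. Let $j$ be the largest index with $d_j\neq 0$, and replace $d$ by $-d$ if needed so that $d_j>0$. For small $\eta>0$, the prefix $u(\theta)+\eta d$ stays in $[0,1]^k$ (coordinates are interior), still induces $\rho$ (openness), and yields the same register valuation and location. It follows that $u(\theta)\alpha0^\omega$ and $(u(\theta)+\eta d)\alpha0^\omega$ reach the same configuration and then read the same suffix, so $\mathcal M$ either accepts both or rejects both.

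\textbf{Separation (main obstacle).} We must show that the perturbed stream violates the threshold. Window $t=j$ contains coordinate $j$ with weight $1$ and, by maximality of $j$, no other perturbed coordinates. Hence $V_j$ changes by exactly $\eta d_j>0$, and choosing $\theta$ with $T_k-\theta<\eta d_j$ pushes $V_j$ to at least $T_k$. The difficulty is that $d$ depends on $\theta$ through $\rho$, while $\eta$ is limited by the width of the open region for $\rho$, which also depends on $\theta$. To break this circularity, we fix $\theta$ first, which determines $\rho$, $d$ and an admissible $\eta_\theta$, and then compare against a larger $\theta'$. That alone does not obviously work, so instead we argue as follows: for each $\theta$ the perturbation pushes $V_j$ up while keeping the configuration unchanged. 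We take a sequence $\theta_n\to T_k$; since there are finitely many transition sequences and finitely many indices $j$, some $(\rho,j)$ recurs infinitely often. Along this subsequence we use the continuity of the affine map and openness to obtain a uniform $\eta$. Concretely, we take the union over $\theta$ of the open sets inducing $\rho$: the one-parameter curve $u(\theta)$ lies in it, and moving along $d$ from $u(\theta_n)$ for $n$ large crosses the threshold while staying in the region. The configuration is unchanged, yet one stream is accepted and the other rejected, which contradicts $L(\mathcal M)=L(\mathcal P_k)$.
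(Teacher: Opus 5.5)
Your setup is sound up to the last step. The cell of prefixes inducing a fixed transition sequence is cut out by strict affine inequalities, and the kernel vector $d$ with maximal nonzero index $j$ shifts $V_j^k$ by exactly $\eta d_j$. Passing to a subsequence on which one transition sequence $\rho$ recurs does fix $d$ and $j$, since $A_\rho$ depends only on $\rho$.

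The gap is the claimed uniform $\eta$. Openness of the cell $C_\rho$ gives some radius around each $u(\theta_n)$, but nothing you have shown prevents these radii from tending to $0$ as $u(\theta_n)\to u(T_k)$. The limit point $u(T_k)$ may lie on the boundary of $C_\rho$, or a priori the run of $\mathcal M$ may even block on it. If a defining inequality of $C_\rho$ is tight at $u(T_k)$, its slack at $u(\theta_n)$ is only of order $T_k-\theta_n$. That is the same order as the displacement $(T_k-\theta_n)/d_j$ you need along $d$, so whether $u(\theta_n)+\eta d$ stays in $C_\rho$ depends on constants you do not control. ``Continuity plus openness'' does not yield a uniform radius along a sequence whose limit may lie outside the open set. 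Your final sentence is therefore an assertion, not an argument.

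The missing idea is to anchor the control path at the limit point itself, using a different accepted word. The prefix $u(T_k)$ extends to a word of $L(\mathcal P_k)$ if you replace the block $\alpha$ by zeros. Each window $V_t^k$ with $t\leq k-1$ contains position $k$ with positive weight $2^{-(k-t)}$. So for your near-constant choice (indeed any $\alpha$ with positive first entry), $u(T_k)\cdot 0^\omega$ has $V_t^k<T_k$ for $t\leq k-1$ and $V_t^k=0$ for $t\geq k$. Hence the run of $\mathcal M$ on $u(T_k)\cdot 0^\omega$ is infinite, and $u(T_k)\in(0,1)^k$ lies in an open cell $C_{\pi^\star}$ together with a ball of some fixed radius.

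Now fix $d\in\ker A_{\pi^\star}$ and $j$ \emph{first}. Then take $\lambda$ small and $\theta=T_k-\lambda d_j/2$, so that both $u(\theta)$ and $u(\theta)+\lambda d$ lie in that ball. The unperturbed word is in $L(\mathcal P_k)$ since $\theta<T_k$, while the perturbed one has $V_j^k=\theta+\lambda d_j>T_k$. This is exactly how the paper's proof proceeds, with $\alpha=T_k/2$ a single letter and anchor word $u(T_k)\cdot 0^\omega$.

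The same observation also repairs your subsequence argument. Cells for distinct transition sequences are disjoint by determinism, and $C_{\pi^\star}$ is an open neighbourhood of $u(T_k)$. So the recurring $\rho$ must be $\pi^\star$, and the uniform radius comes from the ball around $u(T_k)$.
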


\subparagraph{Special Cases.} \label{sec:special-cases}

For the past-only case ($s=0$ and $r>0$), it suffices to use just two registers: one for the observed past, which follows the rule $p_t = x_t+rp_{t-1}$, and one for the uncertainty term $r^{t+1}/(1-r)$. 
The register complexity for a general $s>0$ and $r>0$ remains open. 

Over finite alphabets, the future-only lower bound disappears.
For $r=0$, monitoring reduces to checking a sliding window $x_t,\ldots,x_{t+\tau}$, so a finite-state monitor suffices, at the cost of exponential state space in $\tau$.
With past discounting, however, this is no longer possible: although a finite alphabet collapses the $\tau$ parallel future registers into finite control, the recurrence $p_t = x_t + r p_{t-1}$ still can take infinitely many values, which may require an infinite-state monitor.

\section{Experiments}
We evaluate our monitors on a collection of real-world and synthetic scenarios 
with the objective of demonstrating the practical effectiveness of the proposed monitors and to examine how tight the theoretical bounds on resource usage and accuracy are in practice. In the main paper our evaluation is guided by the research questions: \textbf{(RQ1)} How closely does the observed register usage of Algorithm~\ref{alg:monitor} match the theoretical bound of Thm.~\ref{thm:approximate-monitors}?, and 
\textbf{(RQ2)} How does register usage vary under synchronous and asynchronous semantics? 
We address research questions beyond (RQ1) and (RQ2) in the Appendix~\ref{sec:more_experiments}.

We evaluate our monitors on traces from two benchmarks: \texttt{PowerData}, a dataset of power usage traces from Google data centers~\cite{google-traces}; and \texttt{Adult}, demographic parity traces produced by a fairness-aware neural network trained on the Adult dataset~\cite{adult_2}.

\subparagraph{Register Usage on \texttt{PowerData}.}

To address RQ1, we study register usage on traces from \texttt{PowerData}.
We report \emph{normalized register usage}, defined as the number of active registers divided by the theoretical upper bound $\tau^*$ from Thm.~\ref{thm:approximate-monitors}.
The input range is $\mathcal R = [0,1]$ and we fix the discount factors to $r=s=0.9$.
In Fig.~\ref{fig:RQ1-google} (a), we vary the tolerance $\varepsilon$ logarithmically between $0.5$ and $5\cdot 10^{-4}$, while fixing the target interval to $\mathcal I = [0.703,0.728]$, centered at the empirical mean with width equal to one standard deviation of the dataset.
In Fig.~\ref{fig:RQ1-google} (b), we fix $\varepsilon=5\cdot 10^{-4}$ and vary the width of $\mathcal I$ between $10$ and $0.01$ times the standard deviation, again centered at the mean.
In all experiments, the initial monitoring time $T$ is chosen such that $r^{T+1}/(1-r)$ is closest to $\varepsilon/(\sup \mathcal R - \inf \mathcal R)$.
In both Fig.~\ref{fig:RQ1-google} (a) and (b), the left graph represents the cumulative distribution of normalized active monitors, with one line representing each parameter setting  -- $\varepsilon$ in (a), $|\mathcal I|$ in (b) --, while the right graph represents each distribution as a boxplot. Each boxplot corresponds to a parameter setting, color matched with the corresponding left graph.

Across all configurations, most verdicts are produced with fewer than $60\%$ of the theoretically available registers active, and register usage never exceeds $80\%$.
This result indicates that we could actually implement a monitor with, for example, $70\%$ of the theoretically required registers, and still output a sound verdict most of the time.

\subparagraph{Monitoring Demographic Parity.}

To investigate RQ2, we monitor demographic parity on the \texttt{Adult} dataset.
The input set is $\mathcal R=\{0,1\}$ and we again fix $r=s=0.95$.
We consider decreasing tolerance values with a fixed target interval $\mathcal I = [-0.1,0.1]$.
We report the results in Fig.~\ref{fig:RQ2-adult}, in the same format as the previous figure, for the demographic parity property (Ex.~\ref{ex:dem-parity}) with the synchronous (blue) and asynchronous (green) interpretation.
In contrast to the previous experiments, we report absolute register counts rather than normalized usage.
This is because Thm.~\ref{thm:approximate-monitors} does not provide a uniform bound in this setting,  as the denominators can be arbitrarily close to zero.
As expected, synchronous interpretations are significantly more efficient than asynchronous ones, since discounting is applied uniformly across streams rather than independently.

\begin{figure}[t]
    \centering

    \begin{subfigure}{0.48\textwidth}
        \centering
        \includegraphics[width=0.49\textwidth]{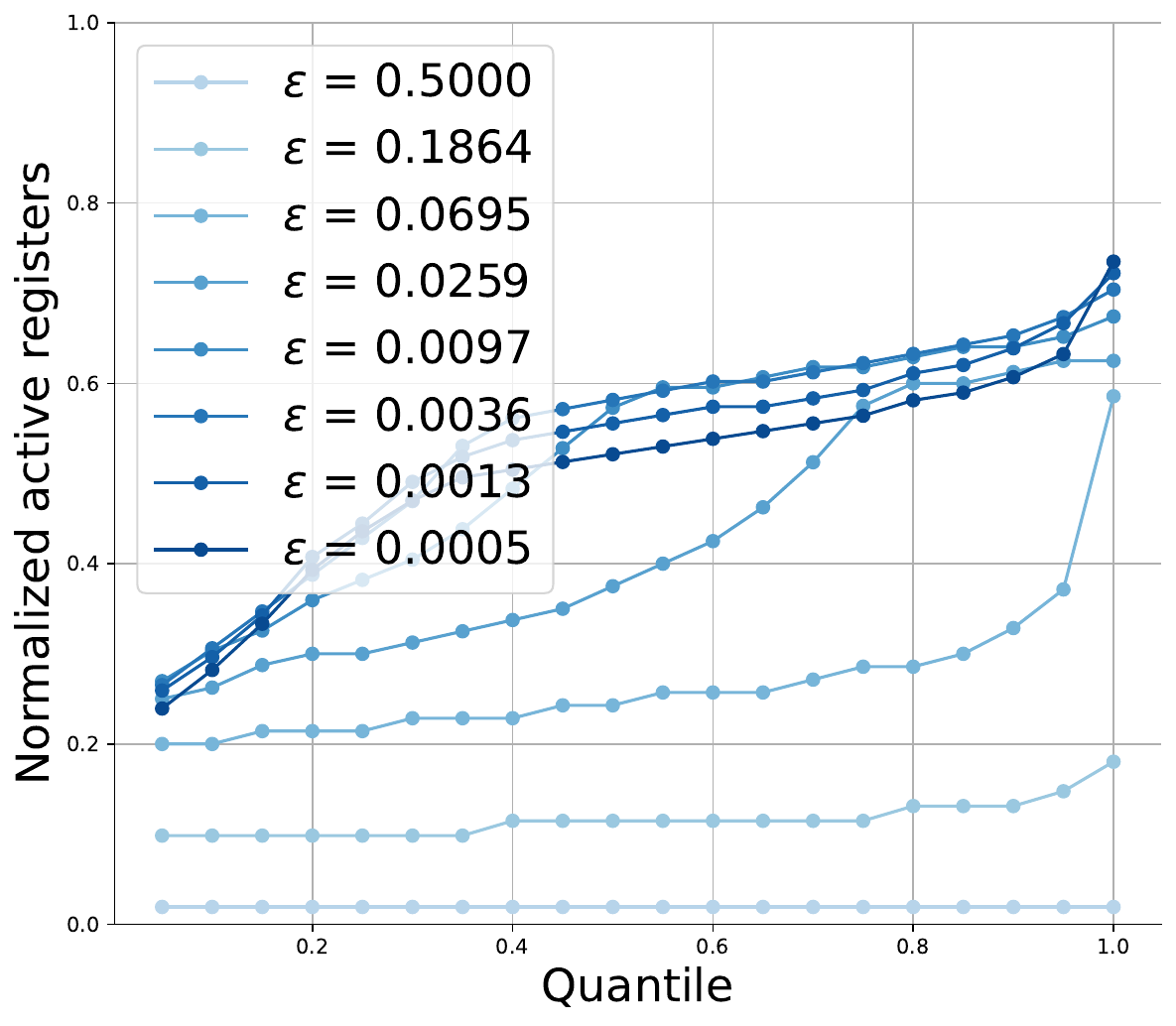}
        \includegraphics[width=0.49\textwidth]{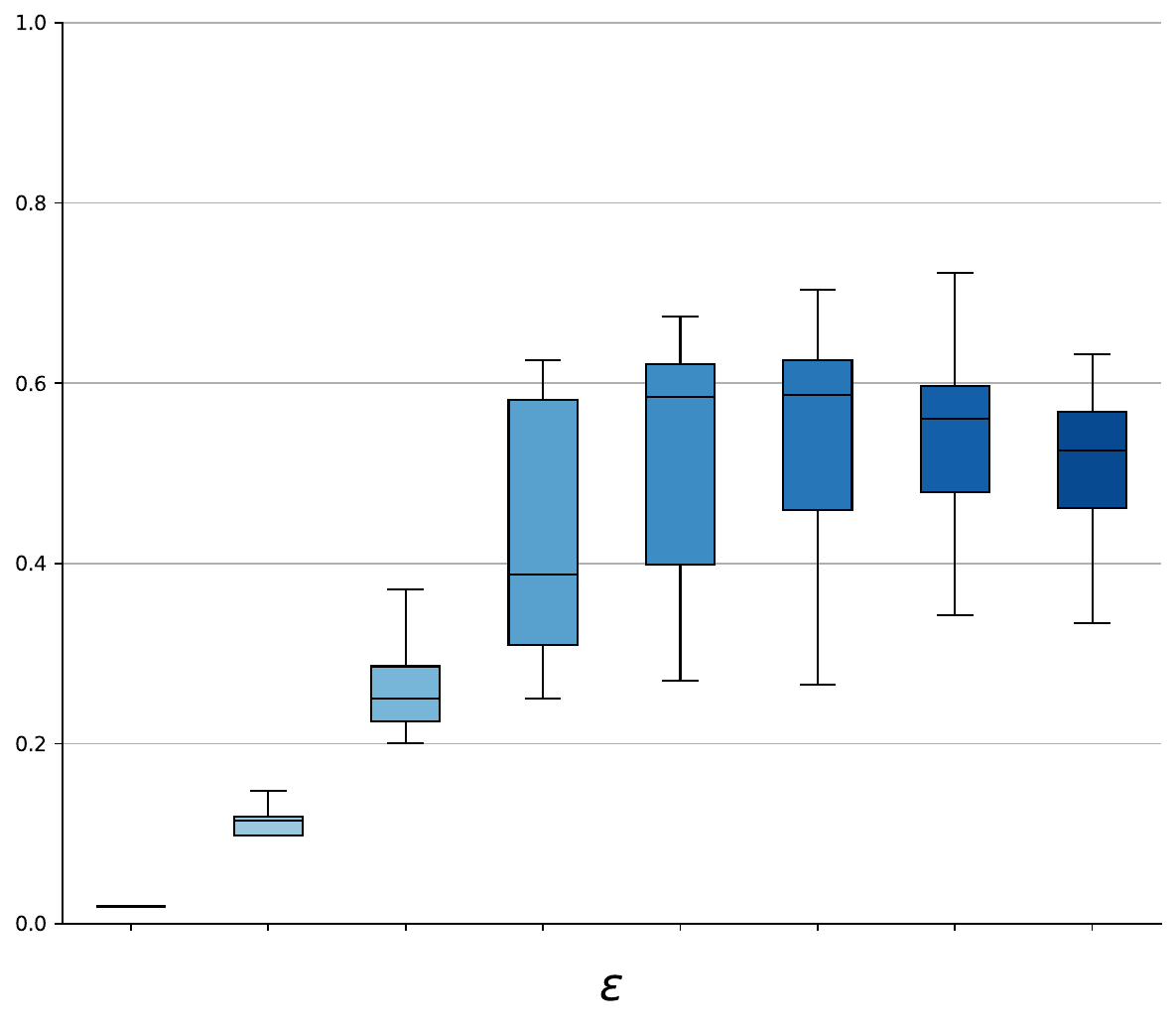}
        \caption{Variable monitoring tolerance factor $\varepsilon$.}
    \end{subfigure}
    \hfill
    % Group (b)
    \begin{subfigure}{0.48\textwidth}
        \centering
        \includegraphics[width=0.49\textwidth]{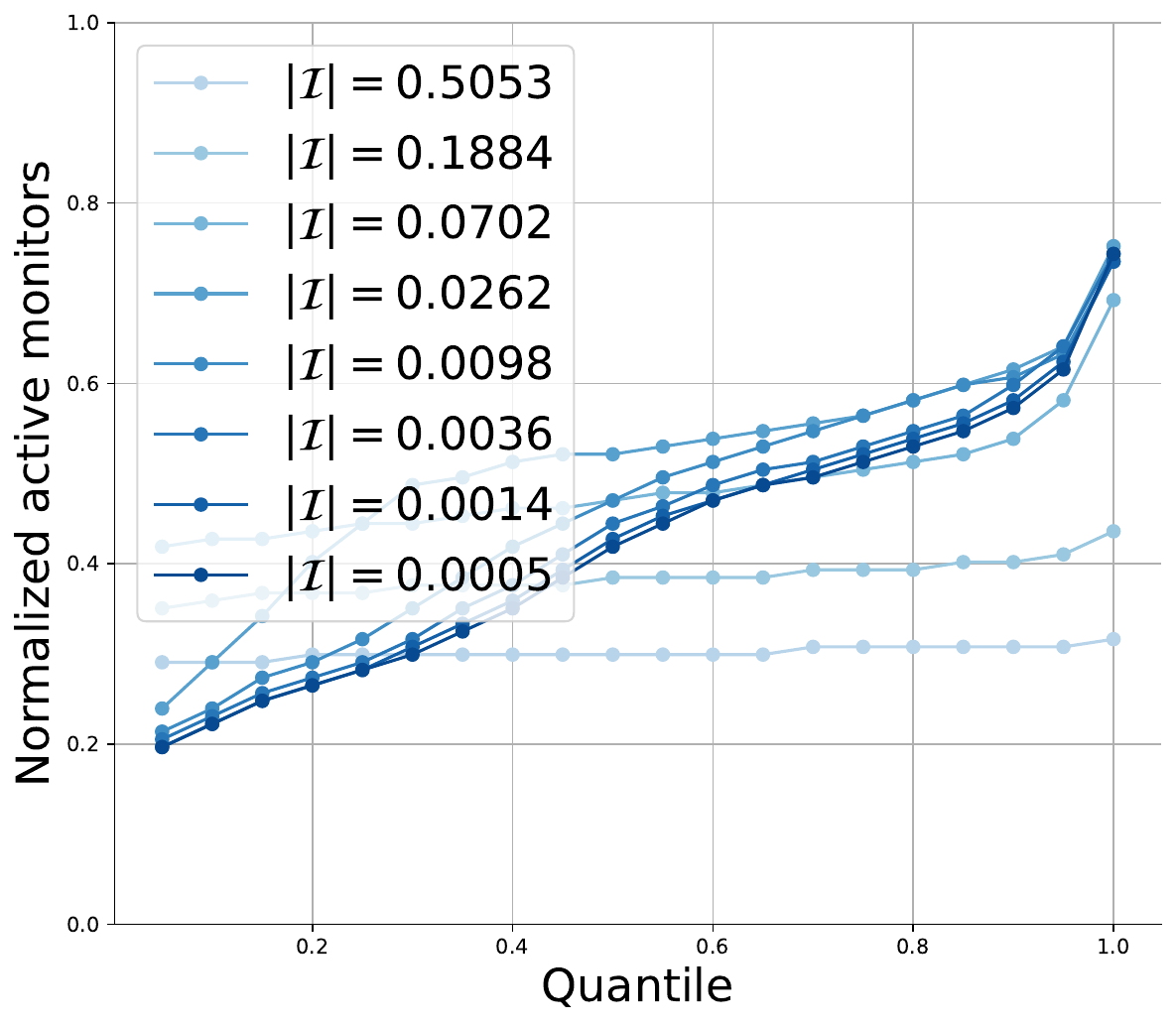}
        \includegraphics[width=0.49\textwidth]{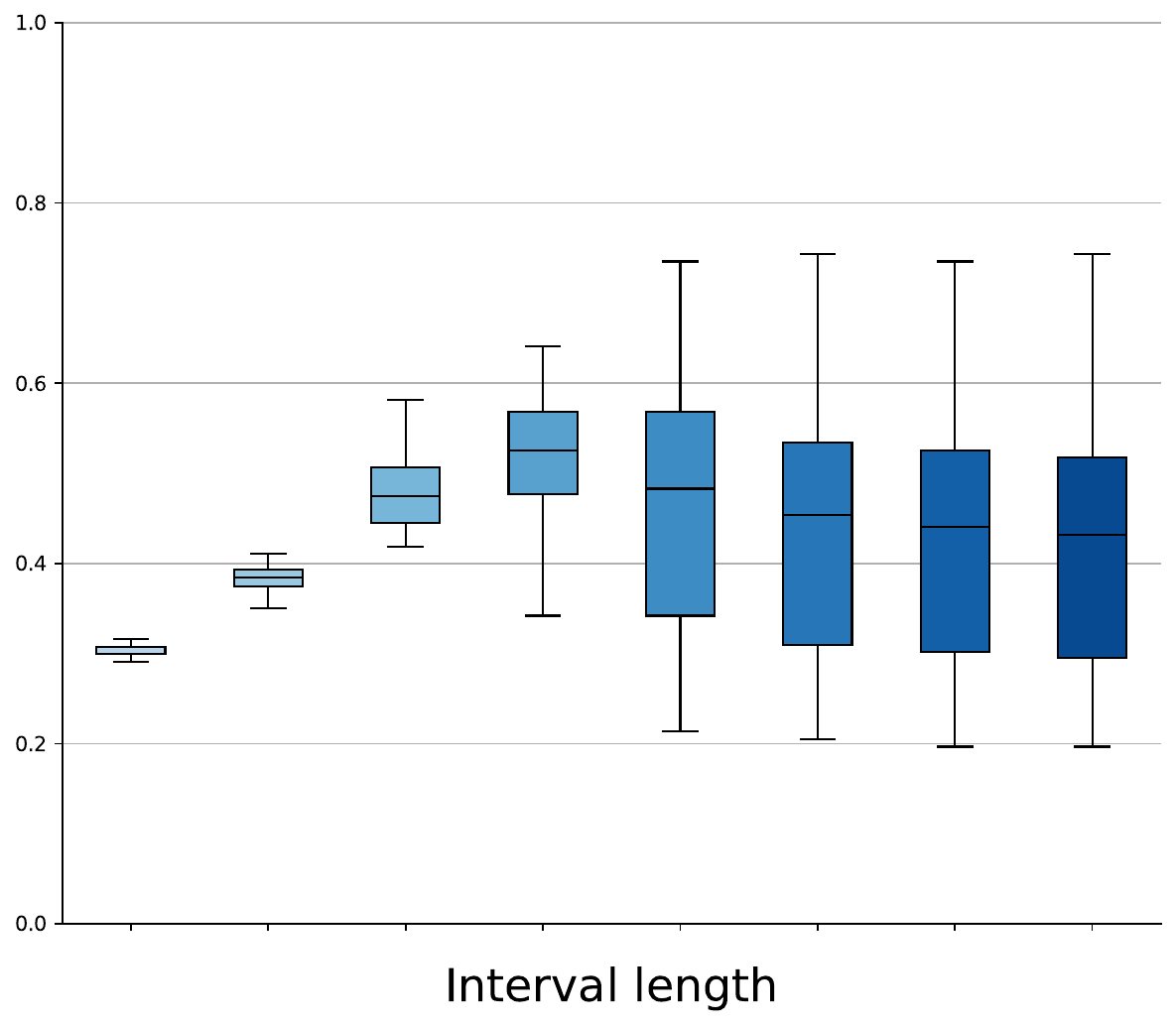}
        \caption{Variable width of the  target interval $|\mathcal I|$.}
    \end{subfigure}

    \caption{Register usage in terms of normalized active registers, \texttt{PowerData} dataset.}
    \label{fig:RQ1-google}
\end{figure}

\begin{figure}[t]
     \centering
     \begin{subfigure}{0.4\textwidth}
         \centering
         \includegraphics[width=\textwidth]{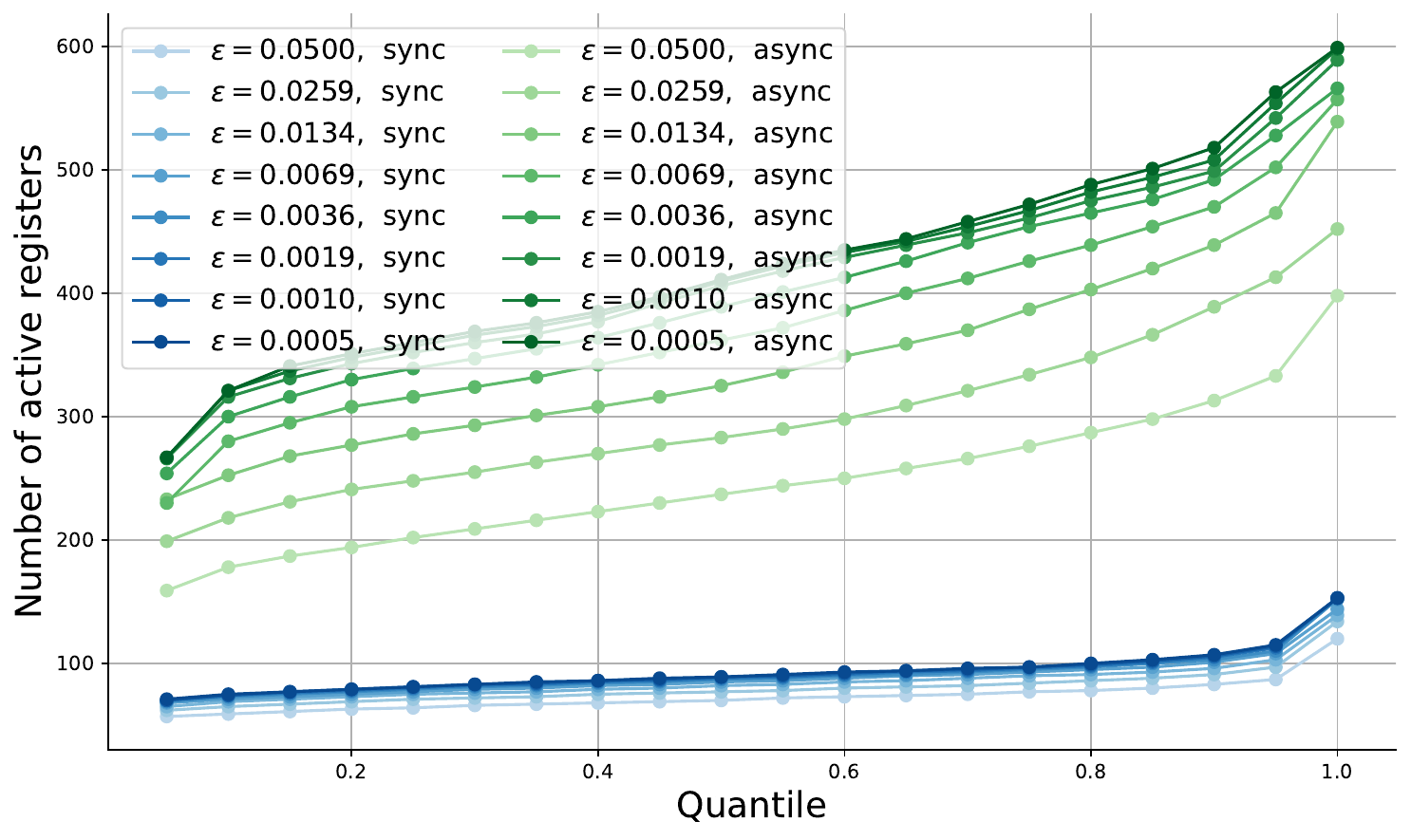}
         % \caption{}
     \end{subfigure}\hspace{2em}
     \begin{subfigure}{0.4\textwidth}
         \centering
         \includegraphics[width=\textwidth]{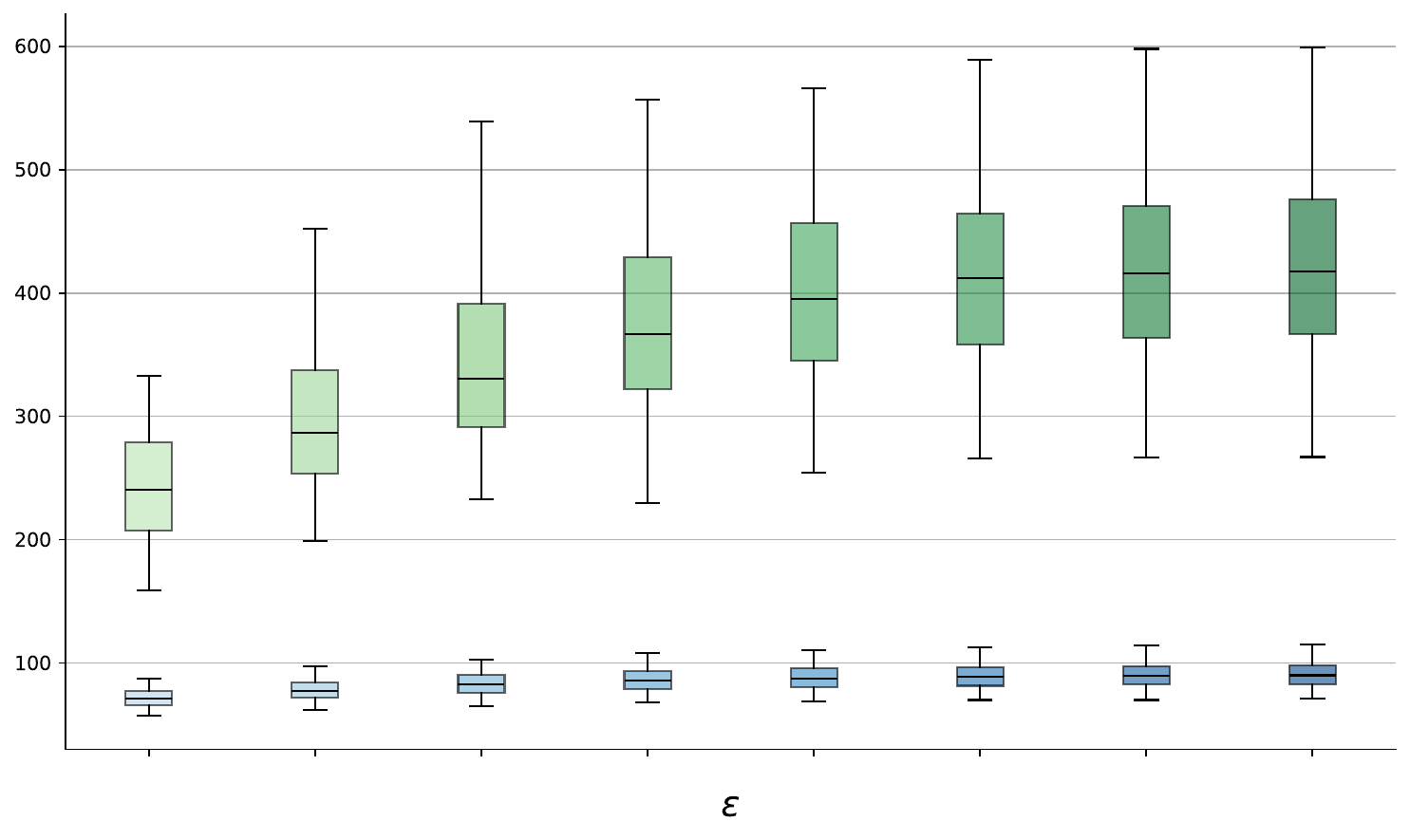}
         % \caption{}
     \end{subfigure}
        \caption{Active registers when   demographic parity, variable $\varepsilon$. \texttt{Adult} dataset.}
        \label{fig:RQ2-adult}
\end{figure}

\section{Related Work}

Discounted-sum objectives are well studied in verification, automata, games, and logic as quantitative specifications where recent events outweigh distant ones~\cite{chatterjee2010quantitative,de2003discounting,boker2012approximate,shapley1953stochastic,filar2012competitive,almagor2014discounting}.
The nontrivial structure of achievable values underlies the difficulty of the target discounted-sum problem~\cite{boker2015target}, motivating approximation:
nondeterministic discounted-sum automata cannot be determinized exactly, but approximate determinization is always possible~\cite{boker2012approximate} and approximate inclusion is decidable~\cite{bansal2022comparator}.
We study online monitoring under bounded memory and show a parallel dichotomy: exact sound monitoring requires unbounded memory, whereas approximate monitoring admits tight bounds.

Runtime verification provides lightweight formal guarantees by checking specifications against executions online~\cite{bartocci2018introduction}.
Quantitative monitoring replaces boolean verdicts with real-valued measures; STL robustness semantics, for instance, quantify how strongly a signal satisfies a specification~\cite{maler2004monitoring,fainekos2009robustness,donze2010robust,donze2013efficient,deshmukh2017robust}.
Recent work~\cite{henzinger2021quantitative} establishes precision-cost tradeoffs for quantitative monitoring.
We focus on discounted-sum monitoring and extend the framework to stochastic processes via confidence sequences.

Discounted sums can be viewed as convolution with an exponential kernel: the exponentially weighted moving average (EWMA) computes a normalized discounted sum~\cite{roberts2000control}, equivalent to a first-order IIR filter~\cite{schafer1989discrete}.
EWMA control charts detect mean shifts under i.i.d. inputs~\cite{lucas1990exponentially,montgomery2020introduction}, and exponential smoothing in forecasting balances responsiveness and noise reduction~\cite{gardner2006exponential,hyndman2008automatic}.
We complement these by studying memory requirements and approximation guarantees for arbitrary input streams.

Standard fairness metrics such as demographic parity and equalized odds~\cite{hardt2016equality,dwork2012fairness} are static population statistics.
Existing fairness monitors use only cumulative statistics~\cite{henzinger2023monitoring} or stream-based specifications~\cite{baumeister2025stream}.
Recent work highlights temporal aspects of fairness~\cite{gohar2024long}:
discounting appears in fair reinforcement learning~\cite{jabbari2017fairness}, decision making~\cite{torres2024temporal}, and resource allocation~\cite{kumar2025remember}, but targets policy design rather than monitoring with formal guarantees.

\bibliography{references}

\appendix

\section{Discussion: Alternative Local Measures}

In this paper we study discounted sums as a method to obtain a local measure for global property. 
Discounted sums are a principled method of aggregation, where instances in the future and in the past contribute to the local measure according to their value and the distance from one event to the other.

In this section, we discuss alternative localization measures, and how do they compare to discounted sums. We mainly discuss window averages and kernel-weighted averages.

\subsection{Window average}
A natural localization measure, alternative to the discounted average, is the rolling window average of length $2l+1$, defined as
\[
\mathrm{WA}_t^{l}(w) = \frac{1}{2l+1}\sum_{i=-l}^{l} x_{t+i}.
\]
The parameter of window length serves a similar purpose as the pair of discount factors $(r,s)$.
While there is no canonical correspondance, a natural one is to consider symmetric discounting $(r = s)$ and that the weight of the central element in both averages is the same. This yields
\[
\frac{1}{2l+1} = \frac{1}{1+\frac{2r}{1-r}} \iff 
l = \frac{r}{1-r} \iff r = \frac{l}{1+l}.
\]
Under these conditions, we have a similar bounded difference as in Lemma~\ref{lem:convergingsums}, that is $|\mathrm{WA}_{t+1}^{l}(w) -  \mathrm{WA}_t^{l}(w)| = d_{\mathcal R}$.
We can also compute uncertainty sets in a similar fashion as Lemma~\ref{lem:uncertainty-intervals}, although the uncertainty sets may not be intervals (i.e., they may have gaps) if $\mathcal R$ has gaps of any size, due to the window averages expressing a finite sum.
Sound monitors that give a verdict in at most $l$ observations trivially exist, and they can be trivially made to use only $2l+1$ registers.

In essence, window averages are computationally easier to manage, but have two main undesired effects:
\begin{itemize}
    \item There is a sharp jump between the importance of a single observation for one step and its consecutive. In contrast, in discounted averages, the importance associated with one step falls gradually as it becomes further away from the point of interest. 
    For many applications, like monitoring fairness, this is a more principled approach than setting a hard cutoff distance in time.
    \item For properties expressed as quotients, even if $0\in\mathcal R$, as long as it does not contain both positive and negative elements at the same time, the discounted sum of the denominator will be well defined from the first instance of a non-zero value. 
    In contrast, when localizing with a rolling window average, as soon as we have a sequence of $2l+1$ instances of zero, the value of the property is not well defined. 
    This is particularly important for synchronous properties, as non-appearances are in effect substituted by zero.
\end{itemize}

\subsection{Kernel-Weighted Average}
Both rolling window and discounted averages are a particular case of kernel weighted averages.
A kernel is a weight function $\kappa\colon \mathbb Z \to \mathbb R_{\geq 0}$ that has a finite sum, i.e., 
$\sum_{i\in\mathbb Z}\kappa(i) < \infty$.
To have the effect of a local measure, we ask for the kernel to be non-decreasing in $\mathbb Z_{\leq 0}$ and non-increasing for $\mathbb Z_{\geq 0}$.
Kernels generalize both discounted  and rolling window averages:
\begin{itemize}
    \item The window average corresponds to the kernel $\kappa(i) = \frac{1}{2l+1}\1[-l \leq i \leq l]$.
    \item The discounted sum corresponds to the kernel
    $\kappa(i) = r^{\max\{0,-i\}}\cdot s^{\max\{0,i\}}$.
\end{itemize}
While kernel-weighted averages are conceptually appealing, it is not clear how we could efficiently monitor a property defined by an arbitrary kernel. 
For discounted sums, we take advantage of the fact that the value can be computer recursively, so we can efficiently monitor into the future while forgetting past values (see the update, for example, in line 6 of Alg.~\ref{alg:monitor}).
For many popular kernels, like Gaussian kernels, this is not possible.

\section{Monitoring Discounted Sums}

\noindent\textbf{Lemma~\ref{lem:convergingsums}.}
\emph{
    Let $(\aR, r, s)$ be an input setting and $w \in \aR^{\mathbb{Z}}$ be a sequence of values, and $t\in\ZN$ a point in time.
    Then $\dsum_t$ exists and 
    $\big| \dsum^{r,s}_t(w) - \dsum^{r,s}_{t-1}(w)\big|\leq d_{\mathcal R}$.
}
\begin{proof}
    Since $\dsum_t$ is defined as an infinite sum, we first check convergence. Let $a=\inf\aR$ and $b=\sup\aR$.
    Then $|x_{t\pm i}|\le \max\{|a|,|b|\}$ for all $i\ge 0$, and hence the past and future parts are dominated by geometric series with ratios $\dfp$ and $\dff$, respectively. Since $\dfp, \dff \in [0,1)$, both series converge absolutely, so $\dsum_t$ exists.

    Next we show the Lipschitz property. Write
    \begin{align*}
        \dsum_t
        &= \sum_{i=1}^\infty \dfp^i x_{t-i} \;+\; x_t \;+\; \sum_{i=1}^\infty \dff^i x_{t+i},\\
        \dsum_{t-1}
        &= \sum_{i=1}^\infty \dfp^i x_{t-1-i} \;+\; x_{t-1} \;+\; \sum_{i=1}^\infty \dff^i x_{t-1+i}.
    \end{align*}
    By re-indexing,
    \begin{align*}
         \sum_{i=1}^\infty \dfp^i x_{t-1-i} + x_{t-1}
        = \sum_{j=0}^\infty \dfp^{j} x_{t-1-j},
        \qquad
        x_t + \sum_{i=1}^\infty \dff^i x_{t+i}
        = \sum_{j=0}^\infty \dff^{j} x_{t+j},
    \end{align*}
    and similarly
    \begin{align}
          \sum_{i=1}^\infty \dfp^i x_{t-i}
        = \dfp \sum_{j=0}^\infty \dfp^{j} x_{t-1-j},
        \qquad
        \sum_{i=1}^\infty \dff^i x_{t-1+i}
        = \dff \sum_{j=0}^\infty \dff^{j} x_{t+j}.
    \end{align}
    Therefore,
    \begin{align*}
        \dsum_{t-1}-\dsum_t
        &= \Big(\sum_{j=0}^\infty \dfp^{j} x_{t-1-j} - \dfp \sum_{j=0}^\infty \dfp^{j} x_{t-1-j}\Big)
        \;+\;
        \Big(\dff \sum_{j=0}^\infty \dff^{j} x_{t+j} - \sum_{j=0}^\infty \dff^{j} x_{t+j}\Big)\\
        &= (1-\dfp)\sum_{j=0}^\infty \dfp^{j} x_{t-1-j} \;-\; (1-\dff)\sum_{j=0}^\infty \dff^{j} x_{t+j}.
    \end{align*}
    Since $a\leq x_s\leq b$ for all $s$, we have
    \begin{align*}
           \frac{a}{1-\dfp}\leq \sum_{j=0}^\infty \dfp^{j} x_{t-1-j}\leq \frac{b}{1-\dfp},
        \qquad
        \frac{a}{1-\dff}\leq \sum_{j=0}^\infty \dff^{j} x_{t+j}\leq \frac{b}{1-\dff}.
    \end{align*}
    Multiplying by $(1-\dfp)$ and $(1-\dff)$, respectively, yields
    \begin{align*}
         a \leq (1-\dfp)\sum_{j=0}^\infty \dfp^{j} x_{t-1-j}\leq b,
        \qquad
        a \leq (1-\dff)\sum_{j=0}^\infty \dff^{j} x_{t+j}\leq b.
    \end{align*}
    Hence
    \begin{align*}
        a-b \leq \dsum_{t-1}-\dsum_t \leq b-a, \quad \text{which implies} \quad  \big|\dsum_{t-1}-\dsum_t\big|\leq b-a = d_{\aR}.
    \end{align*}
   
\end{proof}

\noindent\textbf{Lemma~\ref{lem:possible-sums}.}
\emph{
    Let $(\aR, r,s)$ be an input setting.
    Then the following hold.
    \begin{enumerate}
    \item[(1)] $\mathrm{Sums}_{\mathcal R}^{r,s}\subseteq [\lambda m,\lambda M]$.
    \item[(2)] If $(\aR, r,s)$ is proper, then 
    $
  Closure(\mathrm{Sums}_{\mathcal R}^{r,s})\;=\;[\lambda m,\lambda M]
    $.
    \item[(3)] If $(\aR, r,s)$ is proper and compact, then $\mathrm{Sums}_{\mathcal R}^{r,s}\;=\;[\lambda m,\lambda M]$.
    Where $m=\inf \mathcal R$ and $M=\sup\mathcal R$.
    \end{enumerate}
}
\begin{proof}

\noindent\textbf{\emph{(1)}}
    Since $m\le x_i\le M$ for all $i\in\mathbb Z$ and all coefficients $r^i,s^i$ are nonnegative, we obtain
    \[
    S_{r,s}(x)
    = x_0 + \sum_{i=1}^\infty x_{-i}r^i + \sum_{i=1}^\infty x_is^i
    \le M\left(1+\sum_{i=1}^\infty r^i+\sum_{i=1}^\infty s^i\right).
    \]
    Similarly,
    \[
    S_{r,s}(x)\ge m\left(1+\sum_{i=1}^\infty r^i+\sum_{i=1}^\infty s^i\right).
    \]
    Using the geometric series identities $\sum_{i=1}^\infty r^i=\frac{r}{1-r}$ and $\sum_{i=1}^\infty s^i=\frac{s}{1-s}$ yields
    \[
    \lambda m \le S_{r,s}(x)\le \lambda M.
    \]
    Since $x$ was arbitrary, this proves $\mathrm{Sums}^{\mathcal R}_{r,s}\subseteq [\lambda m,\lambda M]$.

\noindent\textbf{\emph{(3)}}
First, define the one-sided discounted sets
    \[
    A_r(\mathcal R)\;:=\;\Big\{\sum_{i=1}^\infty x_i r^i:\ x_i\in\mathcal R\Big\},
    \qquad
    A_s(\mathcal R)\;:=\;\Big\{\sum_{i=1}^\infty x_i s^i:\ x_i\in\mathcal R\Big\}.
    \]
    Then
    \[
    \mathrm{Sums}^{\mathcal R}_{r,s} \;=\; \mathcal R + A_r(\mathcal R) + A_s(\mathcal R)
    \]
    as a Minkowski sum.
    We split the proof into three steps.
    
\medskip\noindent
\textbf{Step 1: each one-sided set is an interval.}
Fix $r\in[0,1)$ and consider the space $\mathcal K$ of nonempty compact subsets of $\mathbb R$ equipped with the Hausdorff metric $d_H$, defined as usual as
\[
d_H(X,Y) = \max\left\{
\sup_{x\in X}\inf_{y\in Y}|x-y|, \sup_{y\in Y}\inf_{x\in X}|x-y| 
\right\}.
\]
Define the operator
\[
F_r:\mathcal K\to 2^{\mathbb R},\qquad
F_r(K)\;:=\;\bigcup_{a\in\mathcal R} (ra+rK).
\]
\begin{enumerate}
    \item \emph{$F_r$ maps $\mathcal K$ to $\mathcal K$.}
    $F_r(K) = r\cdot (\mathcal R + K)$ (as a Minkowski sum). 
    Since the sum is a continuous function in the product topology, the image of a compact set is a compact set.

    \item \emph{$F_r$ is a contraction.}
    For $K, L \in \mathcal{K}$, let $x = r(a + k) \in F_r(K)$, where $a \in \mathcal{R}$ and $k \in K$.
    Choose $\ell \in L$ such that $|k - \ell| \leq d_H(K, L)$.
    Then $r(a + \ell) \in F_r(L)$ and $|r(a + k) - r(a + \ell)| \leq r d_H(K, L)$.
    The symmetric argument, exchanging $K$ and $L$, gives $d_H(F_r(K), F_r(L)) \leq r d_H(K, L)$.
    Thus, $F_r$ is a contraction with constant $r$.

    \item \emph{$A_r(\mathcal R)$ is the unique fixed point of $F_r$.}
    Since $F_r$ is a contraction, it has a unique fixed point (Banach's fixed-point theorem).
    We first need to show that $A_r(\mathcal R)$ is indeed in the domain of $F_r$.
    First note the set equation
    \begin{equation}\label{eq:Ar-fixed}
    A_r(\mathcal R)\;=\;\bigcup_{a\in\mathcal R}\bigl(ra+rA_r(\mathcal R)\bigr)
    \;=\;F_r\bigl(A_r(\mathcal R)\bigr),
    \end{equation}
    which follows by separating the first digit of the series
    $\sum_{i=1}^\infty a_i r^i = ra_1 + r\sum_{i=2}^\infty a_i r^{i-1}$.
    Because $\mathcal R$ is compact, $\mathcal R^{\mathbb N}$ is compact (Thychonoff's theorem). The map $\eta\colon \mathcal{R}^{\mathbb N}\to \mathbb R$, defined as
    $(x_i)_{i\in\mathbb N}\mapsto \sum_{i=1}^\infty x_i r^i$ is continuous, so $A_r(\mathcal R)$ is the image of a compact set through a continuous map, therefore  compact and hence an element of $\mathcal K$.
    By the Banach fixed point theorem, $F_r$ has a unique fixed point in $\mathcal K$; by \eqref{eq:Ar-fixed}, that fixed point is $A_r(\mathcal R)$.
    \item \emph{A particular interval is also a fixed point under the gap condition.}
    Define
    \[
    I_r \;:=\;\Big[\frac{rm}{1-r},\frac{rM}{1-r}\Big].
    \]
    For each $a\in\mathcal R$,
    \[
    ra+rI_r \;=\;
    \Big[ra+\frac{r^2m}{1-r},\;ra+\frac{r^2M}{1-r}\Big],
    \]
    so
    \[
    F_r(I_r)=\bigcup_{a\in\mathcal R}
    \Big[ra+\frac{r^2m}{1-r},\;ra+\frac{r^2M}{1-r}\Big].
    \]
    The leftmost endpoint occurs at $a=m$ and equals $\frac{rm}{1-r}$, and the rightmost endpoint occurs at $a=M$ and equals $\frac{rM}{1-r}$.
    Moreover, if $a<b$ are such that there is no point of $\mathcal R$ in $(a,b)$, then the corresponding two intervals overlap provided
    \[
    ra+\frac{r^2M}{1-r}\;\ge\;rb+\frac{r^2m}{1-r}
    \quad\Longleftrightarrow\quad
    r(b-a)\;\le\;\frac{r^2}{1-r}(M-m).
    \]
    Since the largest such gap is $\Delta$, the condition
    \begin{equation}
    \label{eq:gap-condition}
    \Delta\;\le\;
    \min\!\left\{
    \frac{r}{1-r},
    \frac{s}{1-s}
    \right\}(M-m)
    \end{equation}
    implies all adjacent pieces overlap, hence $F_r(I_r)$ is a connected compact set with the same endpoints as $I_r$, i.e. $F_r(I_r)=I_r$.
    Therefore $I_r$ is a fixed point of $F_r$.
    \item \emph{Conclude $A_r(\mathcal R)=I_r$.}
    Finally, by uniqueness of the fixed point of $F_r$ in $\mathcal K$ and since both $A_r(\mathcal R)$ and $I_r$ are fixed points, we get $A_r(\mathcal R)=I_r$.
\end{enumerate}

Applying the same argument with $s$ in place of $r$, and using \eqref{eq:gap-condition} (which implies $\Delta\le \frac{r}{1-r}(M-m)$ and $\Delta\le \frac{s}{1-s}(M-m)$), we obtain
\begin{equation}
\label{eq:aux1}
A_r(\mathcal R)=\Big[\frac{rm}{1-r},\frac{rM}{1-r}\Big],
\qquad
A_s(\mathcal R)=\Big[\frac{sm}{1-s},\frac{sM}{1-s}\Big].
\end{equation}

\medskip\noindent
\textbf{Step 2: the two-sided tail set is an interval.}

\[
I_{\mathrm{tail}}
:=A_r(\mathcal R)+A_s(\mathcal R).
\]
Since the Minkowski sum of intervals is an interval, we have
\[
I_{\mathrm{tail}}
=
\Big[\frac{rm}{1-r}+\frac{sm}{1-s},\;
\frac{rM}{1-r}+\frac{sM}{1-s}\Big].
\]

\medskip\noindent
\textbf{Step 3: adding the center digit yields $[\lambda m,\lambda M]$.}
Using $\mathrm{Sums}^{\mathcal R}_{r,s}=\mathcal R + I_{\mathrm{tail}}$, we can write
\[
\mathrm{Sums}^{\mathcal R}_{r,s}
=
\bigcup_{a\in\mathcal R} (a+I_{\mathrm{tail}}).
\]
Each set $a+I_{\mathrm{tail}}$ is an interval of length
\[
|I_{\mathrm{tail}}|
=
\Big(\frac{r}{1-r}+\frac{s}{1-s}\Big)(M-m),
\]
so adjacent pieces overlap whenever the gap between adjacent points of $\mathcal R$ is at most $|I_{\mathrm{tail}}|$.
But by \eqref{eq:gap-condition},
\[
\Delta
\le \min\Big\{\frac{r}{1-r},\frac{s}{1-s}\Big\}(M-m)
\le \Big(\frac{r}{1-r}+\frac{s}{1-s}\Big)(M-m)
=|I_{\mathrm{tail}}|,
\]
hence all adjacent intervals overlap and the union is a single interval.

The left endpoint is obtained by taking $a=m$ and the left endpoint of $I_{\mathrm{tail}}$:
\[
m+\frac{rm}{1-r}+\frac{sm}{1-s}
=\lambda m,
\]
and similarly the right endpoint is
\[
M+\frac{rM}{1-r}+\frac{sM}{1-s}
=\lambda M.
\]
Therefore $\mathrm{Sums}^{\mathcal R}_{r,s}=[\lambda m,\lambda M]$.

\noindent\textbf{\emph{(2)}}
The set $\mathrm{Closure}(\mathcal{R})$ is compact, has the same infimum and supremum as $\mathcal{R}$, and satisfies
\[
\Delta_{\mathrm{Closure}(\mathcal{R})}
\leq
\Delta_{\mathcal{R}}.
\]
Hence, $(\mathrm{Closure}(\mathcal{R}), r, s)$ is proper, and part~(3) gives
\[
\mathrm{Sums}_{\mathrm{Closure}(\mathcal{R})}^{r,s}
=
[\lambda m, \lambda M].
\]
Moreover, $\mathcal{R}^{\mathbb{Z}}$ is dense in $\mathrm{Closure}(\mathcal{R})^{\mathbb{Z}}$, and the discounted-sum function is continuous because its geometric tails converge uniformly on the bounded input domain.
Therefore,
\[
\mathrm{Closure}(\mathrm{Sums}_{\mathcal{R}}^{r,s})
=
\mathrm{Sums}_{\mathrm{Closure}(\mathcal{R})}^{r,s}
=
[\lambda m, \lambda M].
\]

\end{proof}

\noindent\textbf{Lemma~\ref{lem:uncertainty-intervals}.}
\emph{
    Let $(\mathcal{R}, r, s)$ be an input setting, let $0 \leq t \leq n$, and let $\gamma^{r,s}_{t,n} \coloneqq \frac{r^{t+1}}{1-r} + \frac{s^{n-t+1}}{1-s}$.
    Then, $U^{r,s}_t(w_{0:n}) \subseteq S^{r,s}_{t}(w_{0:n}) + \gamma^{r,s}_{t,n} [\inf \mathcal{R}, \sup \mathcal{R}]$.
    If $(\mathcal{R}, r, s)$ is proper, then $\mathrm{Closure}({U^{r,s}_t(w_{0:n})}) = S^{r,s}_{t}(w_{0:n}) + \gamma^{r,s}_{t,n} [\inf \mathcal{R}, \sup \mathcal{R}]$.
    If, additionally, $\mathcal{R}$ is compact, then equality holds without taking closures.
}

\begin{proof}
Let $m = \inf \mathcal{R}$ and $M = \sup \mathcal{R}$.
For every extension of $w_{0:n}$, the unobserved contribution to the discounted sum at time $t$ is
\[
\sum_{i = t + 1}^{\infty} r^i x_{t-i}
+
\sum_{i = n - t + 1}^{\infty} s^i x_{t+i}.
\]
The sums of the corresponding coefficients are
\[
\sum_{i = t + 1}^{\infty} r^i
=
\frac{r^{t+1}}{1-r}
\]
and
\[
\sum_{i = n - t + 1}^{\infty} s^i
=
\frac{s^{n-t+1}}{1-s}.
\]
Since every unobserved value lies in $[m, M]$, this proves
\[
U_t^{r,s}(w_{0:n})
\subseteq
S_t^{r,s}(w_{0:n})
+
\gamma_{t,n}^{r,s} [m, M].
\]
Under properness, the closures of the two one-sided tail sets are, by the argument used in Lemma~\ref{lem:possible-sums},
\[
\frac{r^{t+1}}{1-r} [m, M]
\]
and
\[
\frac{s^{n-t+1}}{1-s} [m, M].
\]
Taking their Minkowski sum proves the stated closure equality.
If $\mathcal{R}$ is compact, the two tail sets are compact and already equal to these intervals, so equality holds without closure.
\end{proof}

\noindent\textbf{Theorem~\ref{thm:impossible-sound}.}
\emph{
    Let $(\mathcal{R}, r, s)$ be a proper input setting such that $\mathcal{R}$ is compact and $r + s > 0$, and let $J \coloneqq \lambda^{r,s} [\inf \mathcal{R}, \sup \mathcal{R}]$.
    Let $\mathcal{I}$ be a target interval satisfying $J \cap \mathcal{I} \neq \emptyset$ and $J \setminus \mathcal{I} \neq \emptyset$.
    Then, for every $t \in \mathbb{N}$ and every $n \geq t$, there exists $w \in \mathcal{R}^{\mathbb{Z}}$ such that $\mu_{\mathcal{I}}(t,w) > n$.
}

\begin{proof}
Let $m = \inf \mathcal{R}$, $M = \sup \mathcal{R}$, and fix $t \in \mathbb{N}$ and $n \geq t$.
Set
\[
\Gamma
\coloneqq
\frac{r^{t+1}}{1-r}
+
\frac{s^{n-t+1}}{1-s}.
\]
The assumptions on the target imply $m < M$, and $r + s > 0$ implies $\Gamma > 0$.
For each prefix $u \in \mathcal{R}^{n+1}$, Lemma~\ref{lem:uncertainty-intervals} and compactness give
\[
U_t^{r,s}(u)
=
S_t^{r,s}(u)
+
\Gamma [m, M].
\]
These are closed intervals of the same positive length $\Gamma(M-m)$.
As $u$ ranges over all prefixes, their union is exactly
\[
\mathrm{Sums}_{\mathcal{R}}^{r,s}
=
J.
\]
Choose a boundary point $c \in J$ of $\mathcal{I}$ at which membership in $\mathcal{I}$ changes relative to $J$.
Choose points $y_k \in J$ converging to $c$ from the side whose membership differs from that of $c$.
For each $k$, choose a prefix $u_k$ such that $y_k \in U_t^{r,s}(u_k)$.
If some $U_t^{r,s}(u_k)$ also contains $c$, it contains one accepted and one rejected feasible value, and we are done.
Otherwise these fixed-length intervals lie strictly on the side of $y_k$.
Their relevant endpoints converge to $c$.
Because $\mathcal{R}^{n+1}$ is compact and the interval endpoints depend continuously on the prefix, a limiting prefix has an uncertainty interval with endpoint $c$ and with positive length on the side of the $y_k$.
This interval again contains both an accepted and a rejected feasible sum.
Hence the resulting prefix admits two completions with opposite exact verdicts.
A sound monitor must therefore remain inconclusive after observing through time $n$, so $\mu_{\mathcal{I}}(t,w) > n$ for any completion $w$ of this prefix.
\end{proof}

\noindent\textbf{Theorem~\ref{thm:approximate-monitors}.}
\emph{
    Let $(\mathcal{R}, r, s)$ be an input setting, let $\mathcal{I}$ be a target interval, let $\varepsilon > 0$, and let $T \in \mathbb{N}$.
    Define
    \begin{equation*}
        \tau^*(\mathcal{R}, r, s, \varepsilon, T)
        \coloneqq
        \min\left(
        \left\{
        \tau \in \mathbb{N} :
        d_{\mathcal{R}}
        \left(
        \frac{r^{T+1}}{1-r}
        +
        \frac{s^{\tau+1}}{1-s}
        \right)
        \leq
        2\varepsilon
        \right\}
        \cup
        \{\infty\}
        \right).
    \end{equation*}
    Then:
    \begin{enumerate}
        \item If $\tau^* < \infty$, there exists an $\varepsilon$-approximately sound monitor that produces a verdict for every $t \geq T$ within $\tau^*$ steps.
        \item Suppose that $\mathcal{R} = [m, M]$ with $m < M$, and write $J = \lambda^{r,s} [m, M]$.
        If $J \cap \mathcal{I}_{-\varepsilon} \neq \emptyset$ and $J \setminus \mathcal{I}_{+\varepsilon} \neq \emptyset$, then, for every $\varepsilon$-approximately sound monitor and every $\tau < \tau^*$, there exists $w \in \mathcal{R}^{\mathbb{Z}}$ on which the monitor does not produce a verdict for time $T$ after $\tau$ steps.
    \end{enumerate}
}

\begin{proof}
Let $m = \inf \mathcal{R}$, $M = \sup \mathcal{R}$, and $d_{\mathcal{R}} = M - m$.

\noindent
\textbf{(1) Upper bound.}
For $0 \leq t \leq n$, define the interval enclosure
\[
C_t(w_{0:n})
\coloneqq
S_t^{r,s}(w_{0:n})
+
\left(
\frac{r^{t+1}}{1-r}
+
\frac{s^{n-t+1}}{1-s}
\right) [m, M].
\]
By Lemma~\ref{lem:uncertainty-intervals},
\[
U_t^{r,s}(w_{0:n})
\subseteq
C_t(w_{0:n}).
\]
Consider the monitor that returns $\top$ if $C_t(w_{0:n}) \subseteq \mathcal{I}_{+\varepsilon}$, returns $\bot$ if $C_t(w_{0:n}) \cap \mathcal{I}_{-\varepsilon} = \emptyset$, and returns $?$ otherwise.
The inclusion above immediately gives approximate soundness.
Suppose neither decisive condition holds.
Then there are
\[
y \in C_t(w_{0:n}) \cap \mathcal{I}_{-\varepsilon}
\]
and
\[
z \in C_t(w_{0:n}) \setminus \mathcal{I}_{+\varepsilon}.
\]
Since $\mathcal{I}_{-\varepsilon} = (L + \varepsilon, U - \varepsilon)$ and $\mathcal{I}_{+\varepsilon} = (L - \varepsilon, U + \varepsilon)$, we have $|y - z| > 2\varepsilon$.
Consequently, every enclosing interval of diameter at most $2\varepsilon$ is decisive.
At observation time $n = t + \tau$, the diameter of the enclosure is
\[
d_{\mathcal{R}}
\left(
\frac{r^{t+1}}{1-r}
+
\frac{s^{\tau+1}}{1-s}
\right)
\leq
d_{\mathcal{R}}
\left(
\frac{r^{T+1}}{1-r}
+
\frac{s^{\tau+1}}{1-s}
\right)
\]
for every $t \geq T$.
The definition of $\tau^*$ therefore proves point~(1).
When $d_{\mathcal{R}} > 0$ and $0 < s < 1$, the defining inequality is equivalent to
\[
s^{\tau+1}
\leq
B_T
\coloneqq
\left(
\frac{2\varepsilon}{d_{\mathcal{R}}}
-
\frac{r^{T+1}}{1-r}
\right)(1-s).
\]
If $B_T \leq 0$, no finite $\tau$ satisfies it.
If $B_T > 0$, the least nonnegative integer solution is
\[
\max\left\{
0,
\left\lceil\log_s B_T\right\rceil - 1
\right\}.
\]
The cases $d_{\mathcal{R}} = 0$ and $s = 0$ follow directly from the defining inequality.

\noindent
\textbf{(2) Lower bound.}
By affine rescaling, it suffices to prove the result for $\mathcal{R} = [0, 1]$; the target interval and $\varepsilon$ are rescaled by the same affine map.
Write $\lambda = \lambda^{r,s}$ and fix $\tau < \tau^*$.
Set
\[
\Gamma
\coloneqq
\frac{r^{T+1}}{1-r}
+
\frac{s^{\tau+1}}{1-s}.
\]
Set
\[
A
\coloneqq
\lambda - \Gamma.
\]
By the definition of $\tau^*$,
\[
\Gamma
>
2\varepsilon.
\]
As $w_{0:T+\tau}$ ranges over $[0, 1]^{T+\tau+1}$, its observed discounted contribution at time $T$ ranges over the full interval $[0, A]$.
For a prefix having observed contribution $p \in [0, A]$, compactness and Lemma~\ref{lem:uncertainty-intervals} give
\[
U_T(w_{0:T+\tau})
=
[p, p + \Gamma].
\]
Let $\mathcal{I} = (L, U)$.
Since
\[
[0, \lambda] \setminus \mathcal{I}_{+\varepsilon}
\neq
\emptyset,
\]
there is a feasible definitely rejecting value either to the left of the target or to its right.
Suppose first that there is one on the left.
Then $L - \varepsilon \geq 0$.
The condition $[0, \lambda] \cap \mathcal{I}_{-\varepsilon} \neq \emptyset$ also gives $L + \varepsilon < \lambda$.
Since $\Gamma > 2\varepsilon$, the interval
\[
(L + \varepsilon - \Gamma, L - \varepsilon] \cap [0, A]
\]
is nonempty.
Choose $p$ in this intersection.
Then $[p, p + \Gamma]$ contains a point outside $\mathcal{I}_{+\varepsilon}$ and a point in $\mathcal{I}_{-\varepsilon}$.
If the definitely rejecting value lies to the right, then $U + \varepsilon \leq \lambda$ and $U - \varepsilon > 0$.
In this case
\[
[U + \varepsilon - \Gamma, U - \varepsilon) \cap [0, A]
\]
is nonempty.
Choosing $p$ in this intersection gives the same conclusion.
Thus in either case there is a prefix with two completions, one whose discounted sum belongs to $\mathcal{I}_{-\varepsilon}$ and one whose discounted sum lies outside $\mathcal{I}_{+\varepsilon}$.
Approximate soundness rules out both $\top$ and $\bot$ on this prefix.
Hence no approximately sound monitor can guarantee a verdict after $\tau < \tau^*$ steps.
\end{proof}

\section{Statistical Discounted Monitor}

\subsection{Remarks}
\begin{remark}
\label{remark:cond_exp}
We emphasise that $\expe_{t-1}(X_t)$ is a \emph{random variable} (it depends on the realised history up to time $t-1$), whereas $\expe(X_t)=\expe(\expe_{t-1}(X_t))$ is deterministic. 
This is the relevant notion in monitoring: at time $t$ we evaluate a decision maker relative to the realised stream (via $\filt_{t-1}$), rather than relative to a hypothetical ``future'' distribution.
\end{remark}

\begin{remark}
    \label{remark:stat:release_time}
  The deterministic register bound $\tau^*$ no longer applies automatically, because the statistical interval contains the additional term $\beta_{t,n}^{r,s}(\delta)$.
  A register can be safely released at time $n$ whenever
$\eset_t^{r,s}(\delta;W_{0:n})\subseteq \aI_{+\varepsilon}$ or
$\eset_t^{r,s}(\delta;W_{0:n})\cap \aI_{-\varepsilon}=\emptyset$.
A deterministic sufficient condition for release is $  2\beta_{t,n}^{r,s}(\delta)+d_\aR\gamma_{t,n}^{r,s}\leq 2\varepsilon .$
For pointwise bounds this condition may only be used at deterministic horizons; for
data-dependent release one must use the local or uniform bounds.
\end{remark}

\begin{remark}
    \label{remark:stat:lipschitz}
    The uniform bounds can be tightened up to constant factors by leveraging the Lipschitz constant of discounted sum properties Lemma~\ref{lem:convergingsums} to construct a grid of time points on which to perform the union bounds.
\end{remark}

\begin{remark}
    \label{remark:stat:limit}
    For the other soundness notions this pointwise optimality definition does not apply
    directly, because local and uniform soundness require validity over stretches of time.
    This allows one to trade tightness at one horizon against additional slack at another.
    Hence, Theorem~\ref{trm:minmax} is restricted to pointwise bounds. We only remark that the additional
    $\log\log$ term in local and uniform bounds is known to be necessary in classical
    undiscounted settings where $V_n\to\infty$, due to the law of the iterated
    logarithm~\cite{howard2021time}.
\end{remark}

\begin{remark}
    \label{remark:stat:average}
    The implications of the impossibility result are particularly visible for the \emph{expected discounted average}, where according to Theorem~\ref{trm:minmax} the statistical error for the two-sided limit 
\begin{align*}
    \frac{\edsum_t^{r,s}}{ \frac{1-rs}{(1-r)(1-s)}  } \quad \text{is (up to constants)} \quad O\left(\sqrt{\tfrac{1}{2}\,\sgn^2\,
\frac{(1-r)(1-s)(1+rs)}{(1+r)(1+s)(1-rs)}\,
\log(2/\delta)}\right).
\end{align*}
And as $r,s\uparrow 1$, the normalized weights spread over an increasingly long effective window approaching the limit average, and thus approaching $0$.
\end{remark}

\subsection{Proofs}

\subparagraph{Basics.}
Here we define the distribution class, bounded range, and summable weight sequence used throughout the appendix.
Let $\mathcal{D}$ be the set of all probability distributions over $\aR^\ZN$, where $\aR\subset \RN$ is bounded and closed with $a=\inf \aR$, $b=\sup \aR$, and $d_{\aR}=b-a$.
Let $\bm{w}\in \RN^\ZN$ be a weight sequence such that $\|\bm{w}\|_1\coloneqq \sum_{i\in \ZN}|w_i|<\infty$; hence also $\|\bm{w}\|_2^2\coloneqq \sum_{i\in \ZN} w_i^2<\infty$.
Fix $P\in\mathcal D$ and let $\bm{X}=(X_i)_{i\in\ZN}$ be the process with law $P$, adapted to the canonical past filtration $(\filt_i)_{i\in\ZN}$.

\subparagraph{Expectations.}
We define the predictable target and the corresponding plug-in estimator.
We write $\expe_t(\cdot)=\expe(\cdot\mid\filt_t)$ for the expectation conditioned on the past and define the predictable target and its plug-in estimator as
\begin{align*}
    \mu(P) \coloneqq \sum_{i\in\ZN} w_i\expe_{i-1}(X_i)
    \qquad\text{and}\qquad
    \hat{\mu}(\bm{X}) \coloneqq \sum_{i\in\ZN} w_iX_i .
\end{align*}
Both sums are well-defined since $X_i\in[a,b]$ and $\|\bm{w}\|_1<\infty$.

\subparagraph{Confidence intervals.}
We define the admissible confidence intervals and the minimax length criterion.
A confidence interval is a function $I\colon \aR^\ZN\to \mathrm{Interval}(\RN)$.
For $\conf\in(0,1)$ define
\begin{align*}
    \mathsf{CI}_\sgn(\conf)
    \coloneqq
    \left\{
    I \middle|
    \sup_{P\in\mathcal D_\sgn}
    \prob_P\left(\mu(P)\notin I(\bm X)\right)\leq \conf
    \right\}.
\end{align*}
For $\bm x\in\aR^\ZN$, let $|I(\bm x)|\coloneqq \sup I(\bm x)-\inf I(\bm x)$ and define the worst-case length $\ell(I)\coloneqq \sup_{\bm x\in\aR^\ZN}|I(\bm x)|$.
Our objective is to bound the minimax length $\inf_{I\in\mathsf{CI}_\sgn(\conf)}\ell(I)$.

\subparagraph{KL-Divergence bound.}
This lemma gives a quadratic upper bound on the KL divergence between two symmetric Bernoulli parameters.
\begin{lemma}
\label{lemma:kl}
Let $x\in (-1/2,1/2)$. Then
\begin{align*}
\kl\left(\tfrac12+x\ \middle\|\ \tfrac12-x\right)
\leq \frac{8x^2}{1-4x^2}.
\end{align*}
In particular, if $|x|\leq 1/4$, then
\begin{align*}
\kl\left(\tfrac12+x\ \middle\|\ \tfrac12-x\right)
\leq \frac{32}{3}x^2.
\end{align*}
\end{lemma}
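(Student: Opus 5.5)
The plan is to compute the divergence in closed form and then bound the logarithm. Write $p=\tfrac12+x$ and $q=\tfrac12-x$, so that $1-p=q$ and $1-q=p$. Then
\begin{align*}
\kl(p\,\|\,q) = p\log\frac{p}{q} + q\log\frac{q}{p} = (p-q)\log\frac{p}{q} = 2x\log\frac{1+2x}{1-2x}.
\end{align*}
This expression is even in $x$ and vanishes at $x=0$, so it suffices to treat $x\in(0,1/2)$.

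For the logarithm, apply the elementary inequality $\log y\le y-1$ with $y=\frac{1+2x}{1-2x}$. This gives
\begin{align*}
\log\frac{1+2x}{1-2x}\le \frac{4x}{1-2x}.
\end{align*}
That bound is too weak: it yields $\frac{8x^2}{1-2x}$, which only matches the claim for $x\ge 0$ and is asymmetric.

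The better route is to go through the $\chi^2$ divergence, using $\kl(p\,\|\,q)\le \chi^2(p\,\|\,q)$, which itself follows from $\log y\le y-1$ applied inside the expectation under $p$. For Bernoulli distributions,
\begin{align*}
\chi^2(p\,\|\,q) = \frac{(p-q)^2}{q(1-q)} = \frac{4x^2}{\left(\tfrac12-x\right)\left(\tfrac12+x\right)} = \frac{4x^2}{\tfrac14-x^2} = \frac{16x^2}{1-4x^2}.
\end{align*}
This overshoots the target by a factor of $2$, so a sharper argument is needed.

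Instead, use the series for the inverse hyperbolic tangent:
\begin{align*}
\log\frac{1+u}{1-u} = 2\,\mathrm{artanh}(u) = 2\sum_{k\ge0}\frac{u^{2k+1}}{2k+1} \le 2\sum_{k\ge0} u^{2k+1} = \frac{2u}{1-u^2}, \qquad u\in[0,1).
\end{align*}
With $u=2x$ this gives
\begin{align*}
\kl \le 2x\cdot\frac{4x}{1-4x^2} = \frac{8x^2}{1-4x^2},
\end{align*}
which is exactly the first claim. The case $x<0$ follows by the evenness noted above.

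For the particular case, if $|x|\le 1/4$ then $1-4x^2\ge 3/4$, so the bound becomes
\begin{align*}
\kl \le \frac{8x^2}{3/4} = \frac{32}{3}x^2.
\end{align*}

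The main obstacle is choosing the right inequality for $\log\frac{1+u}{1-u}$. The crude bounds $\log y\le y-1$ and $\kl\le\chi^2$ lose a constant factor, whereas the termwise comparison of the artanh series (each $\frac{1}{2k+1}\le 1$) is tight enough to reach $\frac{8x^2}{1-4x^2}$. An equivalent way to carry out this step is to compare derivatives: $\frac{d}{du}\,\mathrm{artanh}(u)=\frac{1}{1-u^2}$ and $\frac{d}{du}\frac{u}{1-u^2}=\frac{1+u^2}{(1-u^2)^2}\ge\frac{1}{1-u^2}$, and both functions vanish at $u=0$.
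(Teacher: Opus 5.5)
Your proof is correct. You share the paper's first step, reducing the divergence to $2x\log\frac{1+2x}{1-2x}$, but you bound the logarithm differently.

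The paper writes $\log\frac{1+2x}{1-2x}=\int_{-2x}^{2x}\frac{du}{1+u}$ and bounds the integrand by its maximum $\frac{1}{1-2x}$, which gives $\frac{4x}{1-2x}$. It then asserts $\frac{4x}{1-2x}\le\frac{4x}{1-4x^2}$. That last inequality is false for every $x\in(0,1/2)$, since $1-4x^2=(1-2x)(1+2x)>1-2x$; at $x=1/4$ it reads $2\le 4/3$. So the paper's written chain stops at exactly the bound $\frac{4x}{1-2x}$ that you correctly recognized as too weak and discarded.

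Your termwise comparison of the $\mathrm{artanh}$ series, or equivalently your derivative comparison, supplies the estimate that is actually needed, $\log\frac{1+u}{1-u}\le\frac{2u}{1-u^2}$. The paper's integral can be repaired in the same spirit by folding it as $\int_0^{2x}\frac{2\,du}{1-u^2}\le\frac{4x}{1-4x^2}$. Your route therefore gives a valid derivation of the stated constant, where the paper's argument as written does not.

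Two small points for a final write-up. First, the detours through $\log y\le y-1$ and $\chi^2$ can be dropped. Second, your aside that $\frac{8x^2}{1-2x}$ ``matches the claim for $x\ge 0$'' is inaccurate: it is strictly weaker than $\frac{8x^2}{1-4x^2}$ for all $x\in(0,1/2)$.
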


\begin{proof}
A direct calculation gives
\begin{align*}
\kl\left(\tfrac12+x\ \middle\|\ \tfrac12-x\right)
&=
\left(\tfrac12+x\right)\log\frac{1+2x}{1-2x}
+
\left(\tfrac12-x\right)\log\frac{1-2x}{1+2x} =
2x\log\frac{1+2x}{1-2x}.
\end{align*}
Because we have for $x\geq 0$,
\begin{align*}
\log\frac{1+2x}{1-2x}
=
\int_{-2x}^{2x}\frac{1}{1+u}\,du
\leq
\frac{4x}{1-2|x|}
\leq
\frac{4x}{1-4x^2}
\end{align*}
we can bound the KL divergence as follows
\begin{align*}
\kl\left(\tfrac12+x\ \middle\|\ \tfrac12-x\right)
\leq
\frac{8x^2}{1-4x^2}.
\end{align*}
If $|x|\leq 1/4$, then $1-4x^2\geq 3/4$, which gives the second claim.
\end{proof}

\subparagraph{Weighted sub-Gaussian lower bound.}
This lemma proves that every valid confidence interval must have length at least the minimum of the range and sub-Gaussian scales.
\begin{lemma}[Bounded sub-Gaussian two-point lower bound]
\label{lemma:lower}
Assume that there exist $x_-,x_+\in\mathcal R$ with $x_+-x_-=2\sgn$.
Then, for every $\conf\in(0,1/4)$ and an universal constant $c>0$.
\begin{align*}
    \inf_{I\in\mathsf{CI}_\sgn(\conf)}\ell(I)
    \ \ge\
    c\,\min\left(\sgn\|\bm w\|_1,
    \sgn\|\bm w\|_2\sqrt{\log(2/\conf)}\right)
\end{align*}
\end{lemma}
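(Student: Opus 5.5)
We use Le Cam's two-point method with product measures on the two points $\{x_-,x_+\}$. Any such measure is supported on an interval of length $2\sgn$, so the centred variables are $\sgn$-sub-Gaussian by Hoeffding's lemma and the measure lies in $\mathcal D_\sgn$. Independence makes $\expe_{i-1}(X_i)=\expe(X_i)$, so the target is deterministic.

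\textbf{Construction.} For a parameter $h\in[0,1/4]$ define
\begin{itemize}
\item $P_+$: each $X_i$ takes the value $x_+$ with probability $\tfrac12+h\,\mathrm{sgn}(w_i)\,\rho_i$, independently, and $x_-$ otherwise;
\item $P_-$: the same, with $+h$ replaced by $-h$.
\end{itemize}
The weights $\rho_i\in[0,1]$ are chosen below. Since $\expe X_i=\tfrac{x_++x_-}{2}+2\sgn\bigl(\pm h\,\mathrm{sgn}(w_i)\rho_i\bigr)$, we get
\begin{align*}
\mu(P_+)-\mu(P_-)=4\sgn h\sum_i|w_i|\rho_i .
\end{align*}

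\textbf{Reduction to testing.} Suppose $I\in\mathsf{CI}_\sgn(\conf)$ and $\ell(I)<|\mu(P_+)-\mu(P_-)|$. Then no realised interval can contain both targets. Hence the events $A_\pm=\{\mu(P_\pm)\in I(\bm X)\}$ are disjoint, and validity gives $P_+(A_+)\ge 1-\conf$ and $P_-(A_+)\le P_-(A_-^c)\le\conf$.

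\textbf{Information bound.} By the Bretagnolle--Huber inequality (or data processing for KL), $P_+(A_+)+P_-(A_+^c)\ge 1-2\conf$ forces
\begin{align*}
\kl(P_+\|P_-)\ \ge\ \log\frac{1}{4\conf(1-\conf)}\ \gtrsim\ \log(2/\conf)\quad\text{for }\conf<1/4 .
\end{align*}
By tensorization and Lemma~\ref{lemma:kl}, $\kl(P_+\|P_-)\le\tfrac{32}{3}h^2\sum_i\rho_i^2$. So whenever $\tfrac{32}{3}h^2\sum_i\rho_i^2<c'\log(2/\conf)$, every valid interval has $\ell(I)\ge 4\sgn h\sum_i|w_i|\rho_i$.

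\textbf{Choice of weights and amplitude.} Take $\rho_i=|w_i|/\|\bm w\|_\infty$. Then $\sum_i|w_i|\rho_i=\|\bm w\|_2^2/\|\bm w\|_\infty$ and $\sum_i\rho_i^2=\|\bm w\|_2^2/\|\bm w\|_\infty^2$. Set
\begin{align*}
h=\min\bigl(1/4,\ c''\|\bm w\|_\infty\sqrt{\log(2/\conf)}/\|\bm w\|_2\bigr).
\end{align*}
In the unsaturated case this yields $\ell(I)\gtrsim\sgn\|\bm w\|_2\sqrt{\log(2/\conf)}$. In the saturated case $h=1/4$, the testing argument breaks down because KL no longer controls it. There we instead use the deterministic extremes (all $x_+$ against all $x_-$ along $\mathrm{sgn}(w_i)$), or small perturbations of them. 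This is meant to give a length of order $\sgn\|\bm w\|_1$, matching the first term of the minimum.

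\textbf{Main obstacle.} The difficulty is reconciling the two regimes. The choice $\rho\propto|w|$ gives the $\|\bm w\|_2$ rate but only $\|\bm w\|_2^2/\|\bm w\|_\infty$ in the saturated regime, not $\|\bm w\|_1$. For the specific geometric weights $\eta_{t,n}^{r,s}$ and $\dsgn$ of Theorem~\ref{trm:minmax}, these quantities are comparable up to constants (here $\|\bm w\|_\infty=1$), so the constant $c$ absorbs the difference. For general $\bm w$ I would try $\rho_i=\indi[|w_i|\ge\theta]$ and optimize the threshold $\theta$. If that fails, I would state the bound with $\min(\sgn\|\bm w\|_2^2/\|\bm w\|_\infty,\cdot)$ and check that this suffices for the application.
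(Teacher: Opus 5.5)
Your setup (two-point product measures on $\{x_-,x_+\}$, Hoeffding's lemma for membership in $\mathcal D_\sigma$, reduction to testing, KL tensorization via Lemma~\ref{lemma:kl}) is sound and uses the same ingredients as the paper. The paper takes capped perturbations $u_i=\tfrac18\,\mathrm{sgn}(w_i)\min(|w_i|/a,1)$ with $a=\|\bm w\|_2/\sqrt B$ and $B=1\vee\log(1/(4\delta))$. It then splits into a Bretagnolle--Huber case and a Pinsker case; your $\mathrm{TV}\le\sqrt{1-e^{-\kl(P_+\|P_-)}}$ handles both at once.

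The first genuine error is your saturated regime. Capping $h$ at $1/4$ can only decrease $\tfrac{32}{3}h^2\sum_i\rho_i^2$, so the testing argument does not break down there. It directly gives $\ell(I)\ge 4\sigma\cdot\tfrac14\sum_i|w_i|\rho_i=\sigma\|\bm w\|_2^2/\|\bm w\|_\infty$. Your proposed replacement gives nothing. The two deterministic extremes are at total variation distance $1$, so a valid interval can separate them perfectly. Perturbations with $p_i-q_i$ of order one cost KL of order one per coordinate, so only about $\log(1/\delta)$ coordinates can be moved that way, never all of $\|\bm w\|_1$.

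The deeper point is that no choice of $\rho$, thresholding included, can produce $\|\bm w\|_1$: the lemma is false for general summable $\bm w$. Take $\mathcal R=[0,2\sigma]$, so every law lies in $\mathcal D_\sigma$. Put one weight $w_0=M$ and $N$ further weights equal to $N^{-1/2}$. The interval $[0,2\sigma M]+N^{-1/2}\sum_{i=1}^N x_i\pm\sqrt{2\sigma^2\log(2/\delta)}$ is valid. It uses the range bound for the heavy coordinate and Hoeffding--Azuma with variance proxy $\sigma^2\sum_{i\ge1}w_i^2=\sigma^2$ for the rest. Its length is $2\sigma M+2\sigma\sqrt{2\log(2/\delta)}$. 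With $M=\sqrt{\log(2/\delta)}$ and $N\ge\log^2(2/\delta)$, the claimed lower bound is at least $c\,\sigma\log(2/\delta)$, which exceeds that length once $\delta$ is small.

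The paper's proof crosses exactly the gap you flagged, via the asserted inequality $\sum_i\min(|w_i|,w_i^2/a)\ge\tfrac12\min(\|\bm w\|_1,\|\bm w\|_2^2/a)$. On this example the left side is of order $\sqrt{\log(2/\delta)}$ and the right side of order $\log(2/\delta)$.

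So your fallback is the correct repair. Your two regimes already prove $\ell(I)\ge\min\bigl(\sigma\|\bm w\|_2^2/\|\bm w\|_\infty,\,4c''\sigma\|\bm w\|_2\sqrt{\log(2/\delta)}\bigr)$. The paper's capped choice validly gives the sharper $\ell(I)\ge\tfrac{\sigma}{2}\sum_i|w_i|\min(|w_i|/a,1)$.

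Either bound suffices for Theorem~\ref{trm:minmax}. For the discounted weights $\|\bm w\|_\infty=1$, and each geometric block satisfies $\sum_{i<m}q^i=\tfrac{1+q}{1+q^m}\sum_{i<m}q^{2i}\le2\sum_{i<m}q^{2i}$. Hence $\eta^{r,s}_{t,n}=\|\bm w\|_1\le 2+2\|\bm w\|_2^2\le 4\,\omega^{r,s}_{t,n}$, and the lower bound there survives with a smaller constant.
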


\begin{proof}
If $\|\bm w\|_2=0$, the claim is trivial. Hence assume $\|\bm w\|_2>0$.
It suffices to consider product measures supported on $\{x_-,x_+\}$. Let $L\coloneqq \log(1/(4\conf))$, $B\coloneqq 1\vee L$, $a\coloneqq \|\bm w\|_2/\sqrt B$, and $\kappa\coloneqq 1/8$.
Define
\begin{align*}
    u_i\coloneqq
    \kappa\,\mathrm{sgn}(w_i)\,
    \min\left(\frac{|w_i|}{a},1\right),
    \qquad
    p_i\coloneqq \frac12+u_i,
    \qquad
    q_i\coloneqq \frac12-u_i .
\end{align*}
Then $|u_i|\leq \kappa\leq 1/4$ and
\begin{align*}
    \sum_i u_i^2
    \leq
    \kappa^2\sum_i \frac{w_i^2}{a^2}
    =
    \kappa^2 B .
\end{align*}
Let $P_p$ and $P_q$ be the product measures under which $X_i=x_+$ with probability $p_i$ and $q_i$, respectively, and $X_i=x_-$ otherwise.
For every coordinate, the centred variable is supported in an interval of length $2\sgn$, and hence is $\sgn$-sub-Gaussian by Hoeffding's lemma. Thus $P_p,P_q\in\mathcal D_\sgn$.
Writing $\mu_p\coloneqq \mu(P_p)$ and $\mu_q\coloneqq \mu(P_q)$, we get
\begin{align*}
    |\mu_p-\mu_q|
    =
    2\sgn\left|\sum_i w_i(p_i-q_i)\right|  =
    4\sgn\sum_i w_i u_i =
    4\sgn\kappa
    \sum_i |w_i|\min\left(\frac{|w_i|}{a},1\right).
\end{align*}
Set $A\coloneqq \|\bm w\|_1$ and $C\coloneqq \|\bm w\|_2\sqrt B=\|\bm w\|_2^2/a$.
Using the elementary inequality
\begin{align*}
    \sum_i \min\left(|w_i|,\frac{w_i^2}{a}\right)
    \geq
    \frac{AC}{A+C}
    \geq
    \frac12\min(A,C),
\end{align*}
we obtain
\begin{align}
    \Delta\coloneqq |\mu_p-\mu_q|
    \geq
    2\sgn\kappa\,
    \min\left(\|\bm w\|_1,\|\bm w\|_2\sqrt B\right).
    \label{eq:lower:separation}
\end{align}
Next, by Lemma~\ref{lemma:kl},
\begin{align*}
\kl(P_p\|P_q)
&=
\sum_i
\kl\left(\tfrac12+u_i\ \middle\|\ \tfrac12-u_i\right)  \leq
\frac{32}{3}\sum_i u_i^2
\leq
\frac{32}{3}\kappa^2 B
=
\frac16 B .
\end{align*}

Suppose, for contradiction, that $\ell(I)<\Delta$ for some $I\in\mathsf{CI}_\sgn(\conf)$.
Define the test $\varphi(\bm X)\coloneqq \indi[\mu_p\in I(\bm X)]$.
Since an interval of length less than $\Delta$ cannot contain both $\mu_p$ and $\mu_q$, coverage of $I$ under $P_p$ and $P_q$ implies $ \prob_{P_p}(\varphi=0)\leq \conf$ and $  \prob_{P_q}(\varphi=1)\leq \conf $.
If $L\geq 1$, then $B=L$ and the Bretagnolle--Huber inequality gives
\begin{align*}
    2\conf
    &\geq
    \prob_{P_p}(\varphi=0)+\prob_{P_q}(\varphi=1) \geq
    \frac12\exp(-\kl(P_p\|P_q))
    >
    \frac12 e^{-L}
    =
    2\conf,
\end{align*}
a contradiction. If $L<1$, then $B=1$ and Pinsker's inequality gives
\begin{align*}
    \mathrm{TV}(P_p,P_q)
    \leq
    \sqrt{\frac{\kl(P_p\|P_q)}{2}}
    \leq
    \sqrt{\frac{1}{12}}
    <
    \frac12 .
\end{align*}
Thus every test satisfies
\begin{align*}
    \prob_{P_p}(\varphi=0)+\prob_{P_q}(\varphi=1)
    &\geq
    1-\mathrm{TV}(P_p,P_q)
    >
    \frac12
    >
    2\conf,
\end{align*}
again a contradiction.
Hence $\ell(I)\geq \Delta$ for every $I\in\mathsf{CI}_\sgn(\conf)$.
Finally, since $B=1\vee\log(1/(4\conf))$ is comparable to $\log(2/\conf)$ on $\conf\in(0,1/4)$, Eq.~\eqref{eq:lower:separation} yields the stated bound after absorbing constants.
\end{proof}

\subparagraph{Sub-Gaussian MGF bound.}
This lemma shows that the weighted estimation error is itself sub-Gaussian with variance proxy $\sgn^2\|\bm w\|_2^2$.

\begin{lemma}[Sub-Gaussian MGF bound]
\label{lemma:mgf}
For every $P\in\mathcal D_\sgn$ and every $\lambda\in\RN$,
\begin{align*}
    \expe_P\left[\exp\left(\lambda(\hat{\mu}(\bm X)-\mu(P))\right)\right]
    \leq
    \exp\left(\frac{\lambda^2}{2}\sgn^2\|\bm w\|_2^2\right).
\end{align*}
\end{lemma}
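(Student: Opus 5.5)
The plan is to write the deviation as a sum of weighted martingale differences and peel off one conditional expectation at a time. Set $D_i \coloneqq X_i - \expe_{i-1}(X_i)$, so that $\hat{\mu}(\bm X)-\mu(P) = \sum_{i\in\ZN} w_i D_i$. This sum converges absolutely almost surely because $|D_i|\leq d_\aR$ and $\|\bm w\|_1<\infty$. Membership $P\in\mathcal D_\sgn$ means each $D_i$ is conditionally $\sgn$-sub-Gaussian given $\filt_{i-1}$:
\[
\expe_{i-1}\bigl[\exp(\theta D_i)\bigr]\leq \exp(\theta^2\sgn^2/2) \quad \text{for all real } \theta .
\]
The strategy is to prove the claim first for finitely many nonzero weights and then pass to the limit.

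\emph{Finite step.} Fix $n_l\leq n_u$ and set $Z_{n_l,n_u}\coloneqq \sum_{i=n_l}^{n_u} w_iD_i$. We show by induction on $n_u$ (with $n_l$ fixed) that
\[
\expe\bigl[\exp(\lambda Z_{n_l,n_u})\bigr]\leq \exp\Bigl(\tfrac{\lambda^2}{2}\sgn^2\textstyle\sum_{i=n_l}^{n_u} w_i^2\Bigr).
\]
The key step is the tower property. Since $Z_{n_l,n_u-1}$ is $\filt_{n_u-1}$-measurable,
\[
\expe[e^{\lambda Z_{n_l,n_u}}] = \expe\bigl[e^{\lambda Z_{n_l,n_u-1}}\,\expe_{n_u-1}(e^{\lambda w_{n_u} D_{n_u}})\bigr] \leq e^{\lambda^2 w_{n_u}^2\sgn^2/2}\,\expe[e^{\lambda Z_{n_l,n_u-1}}].
\]
Here we apply the conditional sub-Gaussian bound with $\theta=\lambda w_{n_u}$; the bound holds almost surely, so it can be pulled out of the outer expectation. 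The base case is the single term $i=n_l$, which is the same bound after taking expectations. This is the standard Azuma--Hoeffding computation.

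\emph{Limit step.} Let $n_u\to\infty$ and then $n_l\to-\infty$. Almost surely $Z_{n_l,n_u}\to \hat\mu-\mu$, and the uniform bound
\[
|\lambda Z_{n_l,n_u}|\leq |\lambda|\, d_\aR\,\|\bm w\|_1
\]
holds. Dominated convergence therefore gives convergence of $\expe[e^{\lambda Z_{n_l,n_u}}]$ to $\expe[e^{\lambda(\hat\mu-\mu)}]$. The right-hand sides converge monotonically to $\exp(\lambda^2\sgn^2\|\bm w\|_2^2/2)$, which yields the stated inequality.

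The main obstacle is making the conditioning rigorous in the bi-infinite setting. The filtration begins at $-\infty$, so the induction must start at a finite left endpoint. Working with finite windows and only taking limits at the end avoids this issue. The delicate point is to use the conditional sub-Gaussian assumption, not an unconditional one, since the $D_i$ are not independent; the tower-property peeling handles exactly this dependence.
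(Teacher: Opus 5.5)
Your proposal is correct and follows essentially the same route as the paper: write the deviation as $\sum_i w_i(X_i-\expe_{i-1}(X_i))$, peel off one conditional expectation at a time on a finite window using the conditional sub-Gaussian bound to get $\exp(\tfrac{\lambda^2}{2}\sgn^2\sum w_i^2)$, and then pass to the limit. The only difference is cosmetic: the paper uses symmetric truncations $[-m,m]$ and Fatou's lemma for the limit, whereas you take an iterated limit and use dominated convergence (justified by your uniform bound $|\lambda Z|\le|\lambda|d_\aR\|\bm w\|_1$); both arguments are valid.
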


\begin{proof}
For $m\in\NN$ define the truncations
\begin{align*}
\mu_m(P)\coloneqq \sum_{i=-m}^m w_i\expe_{i-1}(X_i),
\qquad
\hat{\mu}_m(\bm X)\coloneqq \sum_{i=-m}^m w_iX_i,
\qquad
S_m\coloneqq \hat{\mu}_m(\bm X)-\mu_m(P).
\end{align*}
Then $S_m=\sum_{i=-m}^m w_i\left(X_i-\expe_{i-1}(X_i)\right)$.
Since $X_i-\expe_{i-1}(X_i)$ is conditionally $\sgn$-sub-Gaussian,
\begin{align*}
\expe\left[
\exp\left(\lambda w_i(X_i-\expe_{i-1}(X_i))\right)
\middle|\filt_{i-1}\right]
\leq
\exp\left(\frac{\lambda^2 w_i^2\sgn^2}{2}\right)
\quad\text{a.s.}
\end{align*}
Iterating the tower property gives
\begin{align*}
\expe[e^{\lambda S_m}]
\leq
\exp\left(\frac{\lambda^2\sgn^2}{2}\sum_{i=-m}^m w_i^2\right).
\end{align*}
Since $\|\bm w\|_1<\infty$, we have $S_m\to \hat{\mu}(\bm X)-\mu(P)$ almost surely. By applying Fatou's lemma yields
\begin{align*}
\expe\left[e^{\lambda(\hat{\mu}(\bm X)-\mu(P))}\right]
\leq
\exp\left(\frac{\lambda^2\sgn^2}{2}\|\bm w\|_2^2\right).
\end{align*}
\end{proof}

\subparagraph{Sub-Gaussian variance bound.}
This lemma converts the sub-Gaussian MGF bound into a variance bound.

\begin{lemma}[Variance bound]
\label{lemma:var}
For every $P\in\mathcal D$,
\begin{align*}
\Var_P(\hat{\mu}(\bm{X})-\mu(P)) \ \le\ \frac{d_{\aR}^2}{4}\|\bm{w}\|_2^2.
\end{align*}
\end{lemma}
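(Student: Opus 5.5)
Write $D \coloneqq \hat{\mu}(\bm X)-\mu(P) = \sum_{i\in\ZN} w_i \xi_i$, where $\xi_i \coloneqq X_i - \expe_{i-1}(X_i)$. The plan is to show that $D$ is a sum of orthogonal martingale differences, so that its variance is the weighted sum of the conditional variances of the $\xi_i$. Each of those conditional variances is then bounded by $d_{\aR}^2/4$ using Popoviciu's inequality.

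First, work with the truncations $S_m \coloneqq \sum_{i=-m}^m w_i\xi_i$ as in the proof of Lemma~\ref{lemma:mgf}. Each $\xi_i$ is $\filt_i$-measurable and satisfies $\expe_{i-1}(\xi_i)=0$. For $i<j$ the product $\xi_i$ is $\filt_{j-1}$-measurable, so the tower property gives
\[
\expe[\xi_i\xi_j]=\expe[\xi_i\,\expe_{j-1}(\xi_j)]=0 .
\]
Integrability is not an issue because $|\xi_i|\le d_{\aR}$. Consequently $\expe[S_m]=0$ and $\expe[S_m^2]=\sum_{i=-m}^m w_i^2\,\expe[\xi_i^2]$.

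Second, bound each term. Conditionally on $\filt_{i-1}$, the variable $X_i$ takes values in $[a,b]$, so Popoviciu's inequality gives
\[
\expe_{i-1}(\xi_i^2)=\Var(X_i\mid\filt_{i-1})\le (b-a)^2/4 ,
\]
and taking expectations yields $\expe[\xi_i^2]\le d_{\aR}^2/4$. It follows that $\expe[S_m^2]\le \tfrac{d_{\aR}^2}{4}\sum_{i=-m}^m w_i^2\le \tfrac{d_{\aR}^2}{4}\|\bm w\|_2^2$.

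Third, pass to the limit. Since $\|\bm w\|_1<\infty$ and $|\xi_i|\le d_{\aR}$, we have $S_m\to D$ surely, with $|S_m|,|D|\le d_{\aR}\|\bm w\|_1$. Dominated convergence then gives $\expe[D]=0$ and $\expe[D^2]=\lim_m \expe[S_m^2]\le \tfrac{d_{\aR}^2}{4}\|\bm w\|_2^2$. Because $\Var(D)\le \expe[D^2]$, the claim follows.

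The only real subtlety is the orthogonality step, specifically checking that the martingale-difference structure holds for the bi-infinite filtration. It does, since every $\xi_i$ is adapted and integrable. Alternatively, one could derive the bound with $\sgn^2$ in place of $d_{\aR}^2/4$ from Lemma~\ref{lemma:mgf} by a second-order expansion of the MGF at $\lambda=0$, combined with Hoeffding's lemma giving $\sgn\le d_{\aR}/2$. However, the direct route above needs no sub-Gaussian assumption and therefore covers every $P\in\mathcal D$, as the statement requires.
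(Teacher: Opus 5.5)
Your proof is correct, but it takes a different route from the paper. The paper obtains the lemma as a one-line corollary of Lemma~\ref{lemma:mgf}. It treats $Y=\hat{\mu}(\bm{X})-\mu(P)$ as sub-Gaussian with proxy $(d_{\aR}\|\bm{w}\|_2/2)^2$, then reads off $\Var(Y)\le\sigma^2$ by comparing the second-order terms of $\expe[e^{\lambda Y}]\le e^{\lambda^2\sigma^2/2}$ at $\lambda=0$. This is exactly the alternative you mention at the end. That route is short once the MGF machinery is in place. However, it implicitly needs the conditional Hoeffding lemma to place an arbitrary $P\in\mathcal D$ in $\mathcal D_{d_{\aR}/2}$, because Lemma~\ref{lemma:mgf} is stated only for the sub-Gaussian class. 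Your argument uses only second moments: orthogonality of the martingale differences $\xi_i$, conditional Popoviciu, and dominated convergence. It is self-contained and applies to every $P\in\mathcal D$ directly. It also yields the exact identity $\Var(\hat{\mu}(\bm{X})-\mu(P))=\sum_{i} w_i^2\,\expe[\Var(X_i\mid\filt_{i-1})]$, which is sharper whenever the conditional variances are small. One pedantic point: $S_m\to D$ holds almost surely, or surely once you choose versions of $\expe_{i-1}(X_i)$ with values in $[\inf\aR,\sup\aR]$. This does not affect the argument.
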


\begin{proof}
By Lemma~\ref{lemma:mgf}, the random variable $Y\coloneqq \hat{\mu}(\bm{X})-\mu(P)$ is sub-Gaussian with proxy variance $(d_{\aR}\|\bm{w}\|_2/2)^2$.
For any centred sub-Gaussian $Y$ with $\expe[e^{\lambda Y}]\leq \exp(\lambda^2\sigma^2/2)$, differentiating at $\lambda=0$ gives $\Var(Y)\leq \sigma^2$.
Applying this with $\sigma^2=(d_{\aR}\|\bm{w}\|_2/2)^2$ proves the claim.
\end{proof}

\subparagraph{Sub-Gaussian upper bound.}
This lemma constructs confidence intervals whose lengths match the lower bound up to constants.

\begin{lemma}[Upper bound]
\label{lemma:upper}
For every $\conf\in(0,1)$,
\begin{align*}
      \inf_{I \in \mathsf{CI}_\sgn(\conf)}\ell(I)
      \ \le\
      C\,\min\left(
      d_{\aR}\|\bm w\|_1,
      \sgn\|\bm w\|_2\sqrt{\log(2/\conf)}
      \right)
\end{align*}
for a universal constant $C>0$.
\end{lemma}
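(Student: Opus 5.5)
Since the minimax length is an infimum over valid intervals, it suffices to exhibit two intervals in $\mathsf{CI}_\sgn(\conf)$, one attaining each term of the minimum up to a universal constant, and then take the shorter. Both are centred at the plug-in estimator $\hat\mu(\bm X)=\sum_i w_iX_i$ or chosen data-independently, and both have deterministic length, so their worst-case length $\ell(I)$ equals that length.

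\emph{Range term.} I would use the trivial interval built from the boundedness of $\aR=[a,b]$ (closed, as assumed in the basics). Since $\expe_{i-1}(X_i)\in[a,b]$ almost surely, the target $\mu(P)=\sum_i w_i\expe_{i-1}(X_i)$ always lies in
\[
I_1\coloneqq\Bigl[\,a\sum_{i:w_i\ge0}w_i+b\sum_{i:w_i<0}w_i,\;\; b\sum_{i:w_i\ge0}w_i+a\sum_{i:w_i<0}w_i\,\Bigr],
\]
a data-independent interval with coverage probability one and length $d_\aR\|\bm w\|_1$. Hence $I_1\in\mathsf{CI}_\sgn(\conf)$ for every $\conf$, and $\ell(I_1)=d_\aR\|\bm w\|_1$.

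\emph{Sub-Gaussian term.} Take $I_2\coloneqq\hat\mu(\bm X)\pm\sgn\|\bm w\|_2\sqrt{2\log(2/\conf)}$. By Lemma~\ref{lemma:mgf}, $Y\coloneqq\hat\mu(\bm X)-\mu(P)$ satisfies $\expe_P e^{\lambda Y}\le\exp(\lambda^2\sgn^2\|\bm w\|_2^2/2)$ for every $P\in\mathcal D_\sgn$. The Chernoff bound with optimal $\lambda$, applied to both tails, gives
\[
\prob_P\bigl(|Y|\ge u\bigr)\le 2\exp\!\Bigl(-\frac{u^2}{2\sgn^2\|\bm w\|_2^2}\Bigr).
\]
Setting $u=\sgn\|\bm w\|_2\sqrt{2\log(2/\conf)}$ makes the right side equal to $\conf$, uniformly over $P\in\mathcal D_\sgn$. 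So $I_2$ is valid and $\ell(I_2)=2\sqrt2\,\sgn\|\bm w\|_2\sqrt{\log(2/\conf)}$.

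\emph{Conclusion.} Choosing whichever of $I_1,I_2$ is shorter (the choice depends only on the fixed parameters, not the data) gives
\[
\inf_{I\in\mathsf{CI}_\sgn(\conf)}\ell(I)\le\min(\ell(I_1),\ell(I_2))\le C\min\bigl(d_\aR\|\bm w\|_1,\ \sgn\|\bm w\|_2\sqrt{\log(2/\conf)}\bigr)
\]
with $C=2\sqrt2$.

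\emph{Main obstacle.} The only delicate step is confirming that Lemma~\ref{lemma:mgf} applies to the full bi-infinite sum, which rests on the Fatou limit already handled there. A minor point is the closedness of $\aR$, which is needed so that $\expe_{i-1}(X_i)\in[a,b]$.
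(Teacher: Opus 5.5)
Your proposal is correct and follows the paper's proof: the paper likewise takes the data-independent range interval of length $d_{\mathcal R}\|\bm w\|_1$ and the Chernoff interval $\hat\mu(\bm x)\pm\sqrt{2\sigma^2\|\bm w\|_2^2\log(2/\delta)}$ from Lemma~\ref{lemma:mgf}, and keeps the shorter of the two. One small remark: closedness of $\mathcal R$ is not needed for the range step, since $X_i\in\mathcal R\subseteq[\inf\mathcal R,\sup\mathcal R]$ already forces $\expe_{i-1}(X_i)\in[\inf\mathcal R,\sup\mathcal R]$ almost surely; also, $\mathcal R$ need not itself be an interval, only contained in one.
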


\begin{proof}
First, the deterministic range bound gives the interval
\begin{align*}
I_{\mathrm{rng}}(\bm x)\coloneqq
\left[
\sum_{i:w_i\geq 0}w_i a+\sum_{i:w_i<0}w_i b,\,
\sum_{i:w_i\geq 0}w_i b+\sum_{i:w_i<0}w_i a
\right].
\end{align*}
Since $\expe_{i-1}(X_i)\in[a,b]$ almost surely, this interval covers $\mu(P)$ for every $P\in\mathcal D_\sgn$. Its length is $d_\aR\|\bm w\|_1$.
Second, by Chernoff's method and Lemma~\ref{lemma:mgf},
\begin{align*}
\sup_{P\in\mathcal D_\sgn}
\prob_P\left(|\hat{\mu}(\bm X)-\mu(P)|\geq \varepsilon\right)
\leq
2\exp\left(-\frac{\varepsilon^2}{2\sgn^2\|\bm w\|_2^2}\right).
\end{align*}
Choosing $\varepsilon=\sqrt{2\sgn^2\|\bm w\|_2^2\log(2/\conf)}$ yields the interval
\begin{align*}
I_H(\bm x)\coloneqq
\hat{\mu}(\bm x)\pm
\sqrt{2\sgn^2\|\bm w\|_2^2\log\frac{2}{\conf}}
\in\mathsf{CI}_\sgn(\conf),
\end{align*}
whose length is $2\sqrt{2\sgn^2\|\bm w\|_2^2\log(2/\conf)}$.
Taking the better of $I_{\mathrm{rng}}$ and $I_H$ proves the claim.
\end{proof}

\subparagraph{Generic minimax theorem.}
This theorem combines the upper and lower bounds into a minimax characterisation for weighted sums.

\begin{theorem}[Minimax confidence interval length]
\label{trm:app_minmax}
For every $\conf\in(0,1/4)$,
\begin{align*}
c\min\left(
\sgn\|\bm w\|_1,
\sgn\|\bm w\|_2\sqrt{\log(2/\conf)}
\right)
\leq
\inf_{I\in\mathsf{CI}_\sgn(\conf)}\ell(I)
\leq
C\min\left(
d_{\aR}\|\bm w\|_1,
\sgn\|\bm w\|_2\sqrt{\log(2/\conf)}
\right)
\end{align*}
for universal constants $0<c<C<\infty$.
\end{theorem}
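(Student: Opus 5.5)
The plan is to obtain Theorem~\ref{trm:app_minmax} by combining the two one-sided results already established, Lemma~\ref{lemma:lower} and Lemma~\ref{lemma:upper}. Both are stated for the same quantity $\inf_{I\in\mathsf{CI}_\sgn(\conf)}\ell(I)$ and use the same weight sequence $\bm w$ and the same class $\mathcal D_\sgn$. So the theorem is essentially their conjunction, restricted to $\conf\in(0,1/4)$, where both apply. I will take $c$ to be the constant of Lemma~\ref{lemma:lower} and $C$ the constant of Lemma~\ref{lemma:upper}.

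\textbf{Step 1: lower bound.} Lemma~\ref{lemma:lower} assumes there are $x_\pm\in\mathcal R$ with $x_+-x_-=2\sgn$. This is the standing hypothesis of Theorem~\ref{trm:minmax}, since $\mathcal R$ contains an interval of length at least $2\sgn$, and I will state it explicitly for the generic theorem. Given this, the lemma yields
\[
\inf_I \ell(I)\ \ge\ c\min\bigl(\sgn\|\bm w\|_1,\ \sgn\|\bm w\|_2\sqrt{\log(2/\conf)}\bigr).
\]
One point to check is the constant conversion at the end of that lemma. There $B=1\vee\log(1/(4\conf))$ is compared with $\log(2/\conf)$. Because $\conf<1/4$ gives $\log(2/\conf)>\log 8$, the ratio $\log(2/\conf)/B$ is bounded above and below by absolute constants. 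Hence $\sqrt B\ge c'\sqrt{\log(2/\conf)}$ for a universal $c'$, and I fold $c'$ into $c$.

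\textbf{Step 2: upper bound.} Lemma~\ref{lemma:upper} holds for every $\conf\in(0,1)$, so in particular on $(0,1/4)$. It gives the right-hand side directly, by taking whichever of the range interval $I_{\mathrm{rng}}$ and the Hoeffding--Azuma interval $I_H$ has smaller worst-case length.

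\textbf{Step 3: compatibility of constants.} I will confirm that the two bounds are consistent, so that $0<c<C$ can be chosen. We have $\sgn\le d_\aR/2$, which gives $\sgn\|\bm w\|_1\le d_\aR\|\bm w\|_1$, and the second arguments of the two minima agree. The lower expression is therefore at most the upper one without constants, and shrinking $c$ if necessary gives $c<C$.

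The main obstacle is not this combination but ensuring the objects line up. Theorem~\ref{trm:minmax} concerns expected length $\expe_P|I|$ over the finite window $W_{0:n}$, whereas the generic theorem concerns worst-case length $\ell(I)$. For the specialization I would take $w_i=r^{t-i}$ for $0\le i<t$, $w_t=1$, $w_i=s^{i-t}$ for $t<i\le n$, and $0$ otherwise. Then $\|\bm w\|_1=\eta_{t,n}^{r,s}$ and $\sgn\|\bm w\|_2\sqrt{2\log(2/\conf)}=\pse_{t,n}^{r,s}(\conf)$. For the expected-length version, the upper bound carries over immediately since $\expe|I|\le\ell(I)$. For the lower bound, the two-point argument must be rerun with expected length in place of worst-case length. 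I will first try a Markov-type step: a short interval that covers both $\mu_p$ and $\mu_q$ with high probability would again give a test with small error, contradicting Bretagnolle--Huber as before. This step needs the most care, probably at the price of a smaller constant $c_0$.
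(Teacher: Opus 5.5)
Your plan is exactly the paper's proof: there, Theorem~\ref{trm:app_minmax} is obtained simply by combining Lemma~\ref{lemma:lower} and Lemma~\ref{lemma:upper}. The upper half is fine, and you are right that the two-point hypothesis $x_\pm\in\aR$, $x_+-x_-=2\sgn$, has to be added to the statement.

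The lower half cannot be closed this way. The trouble is not the $B$-versus-$\log(2/\conf)$ conversion you checked, which is correct. It is the step just before it in Lemma~\ref{lemma:lower}: the claim $\sum_i\min\bigl(|w_i|,w_i^2/a\bigr)\ge AC/(A+C)\ge\tfrac12\min(A,C)$, with $A=\|\bm w\|_1$ and $C=\|\bm w\|_2\sqrt B=\|\bm w\|_2^2/a$. Termwise, $\min(x,y)\ge xy/(x+y)$ holds. But $(x,y)\mapsto xy/(x+y)$ is concave and positively homogeneous, hence superadditive, so summing gives $\sum_i x_iy_i/(x_i+y_i)\le AC/(A+C)$, which is the wrong direction.

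In fact the lower bound of Theorem~\ref{trm:app_minmax}, and of Lemma~\ref{lemma:lower}, is false for general $\bm w$. Take $\aR=[0,1]$, $\sgn=\tfrac12$, $L=\log(2/\conf)$, and weights $w_0=1$, $w_1=\dots=w_n=\epsilon$ with $n=\lceil L(L+1)\rceil$ and $\epsilon=(nL)^{-1/2}$. Then $n\epsilon^2=1/L$ and $n\epsilon\ge\sqrt{L+1}$. The interval
\[
I(\bm x)=\Bigl[\textstyle\sum_{i=1}^n\epsilon x_i-\tfrac{1}{\sqrt2},\;1+\sum_{i=1}^n\epsilon x_i+\tfrac{1}{\sqrt2}\Bigr]
\]
lies in $\mathsf{CI}_\sgn(\conf)$. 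Indeed, $\mu(P)-\sum_{i\ge1}\epsilon X_i=\expe_{-1}(X_0)-\sum_{i=1}^n\epsilon\bigl(X_i-\expe_{i-1}(X_i)\bigr)$. The first term lies in $[0,1]$. For the martingale term, Lemma~\ref{lemma:mgf} with weights $(0,\epsilon,\dots,\epsilon)$ plus Chernoff bounds the probability that it exceeds $1/\sqrt2$ in absolute value by $2e^{-L}=\conf$.

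So, for these ($\conf$-dependent) weights, $\inf_I\ell(I)\le 1+\sqrt2$. The claimed lower bound, however, is $\tfrac c2\min\bigl(1+n\epsilon,\sqrt{L+1}\bigr)=\tfrac c2\sqrt{L+1}$, which tends to $\infty$ as $\conf\to0$ for any fixed $c>0$. Concretely, for these weights the paper's $a$ is about $L^{-1/2}$, and $\sum_i\min(|w_i|,w_i^2/a)\approx 1+L^{-1/2}$, whereas $\tfrac12\min(A,C)\approx\tfrac12\sqrt L$.

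What is true is the following. Run the same two-point construction, but choose $a$ so that $\sum_i\min(1,w_i^2/a^2)=B$, or let $a\downarrow0$ if fewer than $B$ weights are nonzero, instead of $a=\|\bm w\|_2/\sqrt B$. The KL budget is still $B/6$, and the separation becomes $4\sgn\kappa\,K(\bm w,\sqrt B)$, where
\[
K(\bm w,\theta)\coloneqq\inf_{\bm v}\bigl(\|\bm w-\bm v\|_1+\theta\|\bm v\|_2\bigr)=\sup\{\langle\bm w,\bm z\rangle:\|\bm z\|_\infty\le1,\ \|\bm z\|_2\le\theta\}.
\]
The split interval (range bound on $\bm w-\bm v$, Hoeffding--Azuma on $\bm v$) matches this up to a factor depending only on $d_\aR/\sgn$. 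The correct minimax rate is therefore this $K$-functional, which can be smaller than $\min(\|\bm w\|_1,\|\bm w\|_2\sqrt{\log(2/\conf)})$ by a factor $\sqrt{\log(2/\conf)}$.

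For the two-sided geometric weights of Theorem~\ref{trm:minmax}, the two quantities appear to be comparable, since each geometric block $\bm g$ satisfies $\|\bm g\|_2^2\asymp\max_i g_i\,\|\bm g\|_1$. A short case analysis then suggests that specialisation can be salvaged, but only through such a weight-specific argument, not by citing the generic theorem.

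Finally, the expected-length issue you single out as the main obstacle is not one. $\mathcal D_\sgn$ contains every Dirac product measure, so $\sup_P\expe_P|I|=\ell(I)$ exactly (Lemma~\ref{lemma:exp-vs-wc}), and no Markov-type rerun is needed.
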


\begin{proof}
Combine Lemma~\ref{lemma:lower} and Lemma~\ref{lemma:upper}.
\end{proof}

\subparagraph{Discounted sum minimax theorem.}
This theorem specialises the generic weighted-sum minimax bound to finite discounted-sum weights.

\begin{theorem}[Specialisation to discounted sums]
\label{trm:minmax:discounted}
Fix $r,s\in[0,1)$, a center time $t\in\{0,\dots,n\}$, and a horizon $n\geq t$.
Consider the finite discounted-sum weights
\begin{align*}
w_{t-i}=r^i\ (i=1,\dots,t),
\qquad
w_t=1,
\qquad
w_{t+i}=s^i\ (i=1,\dots,n-t),
\qquad
w_k=0\ \text{otherwise}.
\end{align*}
Then
\begin{align*}
\|\bm w\|_2^2
=
\sum_{i=1}^{t} r^{2i}+\sum_{i=0}^{n-t}s^{2i}
=
\dsgn_{t,n}^{r,s},
\qquad
\|\bm w\|_1
=
\sum_{i=1}^{t}r^i+\sum_{i=0}^{n-t}s^i
=
\eta_{t,n}^{r,s}.
\end{align*}
Consequently,
\begin{align*}
c_0 \cdot
\min\left(\sgn\eta_{t,n}^{r,s}, \pse_{t,n}^{r,s}(\delta)\right)
&\leq
\inf_{I\in\mathsf{CI}_\sgn(\conf)}
\sup_{P\in\mathcal D_\sgn}
\expe_P\left[|I(W_{0:n})|\right]  \leq
C_0 \cdot
\min\left(d_\aR\eta_{t,n}^{r,s}, \pse_{t,n}^{r,s}(\delta)\right)
\end{align*}
for universal constants $0<c_0<C_0<\infty$.
\end{theorem}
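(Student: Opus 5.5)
The plan is to read Theorem~\ref{trm:minmax:discounted} as a direct specialisation of the generic weighted-sum minimax result, Theorem~\ref{trm:app_minmax}. I would instantiate the weight sequence $\bm w$ with the finite discounted-sum weights given in the statement. First I check that this $\bm w$ satisfies the standing assumptions of the appendix: it has finitely many nonzero entries, so $\|\bm w\|_1<\infty$. Then I verify that the generic objects $\mu(P)$ and $\hat\mu(\bm X)$ reduce to $\edsum_t^{r,s}(W_{0:n})$ and $\dsum_t^{r,s}(W_{0:n})$. This holds because $w_k=0$ outside $[0,n]$, and because the past weights are $r^i$ at index $t-i$ with conditional expectation $\expe_{t-i-1}(X_{t-i})$, exactly as in the finite definition of $\edsum$.

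Next comes the norm computation, which is immediate from the definitions. We have $\|\bm w\|_2^2=\sum_{i=1}^t r^{2i}+\sum_{i=0}^{n-t}s^{2i}=\dsgn_{t,n}^{r,s}$ and $\|\bm w\|_1=\eta_{t,n}^{r,s}$. From these,
\[
\sgn\|\bm w\|_2\sqrt{\log(2/\delta)}=\tfrac{1}{\sqrt2}\,\pse_{t,n}^{r,s}(\delta),
\]
so the $\sqrt2$ is absorbed into the universal constants. Plugging into Theorem~\ref{trm:app_minmax} then gives the two-sided bound on $\inf_I \ell(I)$.

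The remaining step is to pass from the worst-case length $\ell(I)=\sup_{\bm x}|I(\bm x)|$ to the quantity in the statement, $\sup_P\expe_P|I(W_{0:n})|$. For the upper bound this is trivial: $\expe_P|I|\le \ell(I)$, and the intervals $I_{\mathrm{rng}}$ and $I_H$ from Lemma~\ref{lemma:upper} have deterministic length. Both depend only on $W_{0:n}$ because of the weight support.

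For the lower bound, bounding $\ell(I)$ from below is not enough. I expect this to be the main obstacle, and I would redo the two-point argument of Lemma~\ref{lemma:lower} in expectation form. Suppose $\sup_P\expe_P|I|<\Delta/4$ (with $\Delta$ as in that lemma). By Markov's inequality, under each of $P_p$ and $P_q$ we have $|I|<\Delta$ with probability at least $3/4$. On that event, $I$ cannot contain both $\mu_p$ and $\mu_q$. The test $\varphi=\indi[\mu_p\in I]$ then has summed errors at most $2\delta+1/4+1/4$.

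This must contradict the Bretagnolle--Huber/Pinsker bound, which requires shrinking $\kappa$ so that the KL divergence is small enough. I would do this with $B$ chosen as before but with a smaller constant $\kappa$. In the large-$\log(1/\delta)$ regime, Markov's inequality alone may be too weak to compare against $e^{-L}$. In that case I would instead lower-bound the expected length directly: $\expe|I|\ge\Delta\cdot\prob(\text{$I$ misses one of }\mu_p,\mu_q)$. Combined with the coverage constraints, this forces this probability to be bounded below by a constant once $\delta<1/4$, at the cost of adjusting $c_0$.

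Finally, the product measures used are supported on $\{x_-,x_+\}\subset\mathcal R$, which is guaranteed because $\mathcal R$ contains an interval of length $2\sgn$. They are therefore in $\mathcal D_\sgn$, and the constants $c_0,C_0$ are universal.
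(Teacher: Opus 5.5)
\paragraph{Where the proposal goes wrong.} Most of your plan matches the paper's proof exactly:
\begin{itemize}
\item the instantiation of Theorem~\ref{trm:app_minmax} with the finite discounted weights;
\item the norm identities and the identification $\hat\mu=\dsum_t^{r,s}(W_{0:n})$, $\mu(P)=\edsum_t^{r,s}(W_{0:n})$;
\item the factor $\sqrt2$ between $\sgn\|\bm w\|_2\sqrt{\log(2/\conf)}$ and $\pse_{t,n}^{r,s}(\conf)$;
\item the upper-bound transfer $\expe_P|I|\le\ell(I)$.
\end{itemize}
The gap is in your lower-bound step. You assert that a lower bound on $\ell(I)$ is not enough, but it is. The class $\mathcal D_\sgn$ contains every Dirac product measure $P_{\bm x}$ with $\bm x\in\mathcal R^{n+1}$. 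For such a measure the coordinates are deterministic, so the innovation is $X_i-\expe_{i-1}(X_i)=0$, which is trivially conditionally $\sgn$-sub-Gaussian. Hence
\[
\sup_{P\in\mathcal D_\sgn}\expe_P|I(W_{0:n})|\ \ge\ \sup_{\bm x}|I(\bm x)|=\ell(I),
\]
and with the trivial reverse inequality this is an equality. This is exactly Lemma~\ref{lemma:exp-vs-wc}. With it, the lower bound of Lemma~\ref{lemma:lower} transfers verbatim. The paper's proof is just Theorem~\ref{trm:app_minmax} followed by this lemma.

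\paragraph{Why the replacement argument fails.} Your expectation-form two-point argument does not work in the regime that matters. Markov's inequality gives, at best, an upper bound of $2\conf+\tfrac14$ on the summed errors of the test $\varphi=\indi[\mu_p\in I]$, which is a constant. To contradict Bretagnolle--Huber ($\tfrac12 e^{-\kl}$) or Pinsker ($1-\mathrm{TV}$), you then need $\kl(P_p\|P_q)=O(1)$. That means $B=O(1)$ instead of $B\asymp\log(2/\conf)$, and you only obtain $\Delta\gtrsim\sgn\min(\|\bm w\|_1,\|\bm w\|_2)$. This misses the factor $\sqrt{\log(2/\conf)}$, which is unbounded as $\conf\to 0$, so no universal $c_0$ comes out.

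Your fallback is stated in the wrong direction. The correct inequality is $\expe_{P_p}|I|\ge\Delta\,\prob_{P_p}(\mu_p\in I,\ \mu_q\in I)$, because missing one of the two points gives no information about the length. Even after this correction, coverage only yields $\prob_{P_p}(\mu_p\in I,\ \mu_q\in I)\ge 1-2\conf-\mathrm{TV}(P_p,P_q)$. That is bounded below by a constant only when the KL divergence is $O(1)$, so you incur the same loss again.

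A genuine expected-length lower bound under the non-degenerate measures would need a Neyman--Pearson-type argument rather than a two-point KL bound. The statement does not require one, because the Dirac measures in $\mathcal D_\sgn$ already give the result.
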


\begin{proof}
The identities for $\|\bm w\|_1$ and $\|\bm w\|_2^2$ follow by geometric-series summation.
For these weights, $\hat\mu(\bm X)=\dsum_t^{r,s}(W_{0:n})$ and $\mu(P)=\edsum_t^{r,s}(W_{0:n})$.
Moreover,
\begin{align*}
\sgn\|\bm w\|_2\sqrt{\log(2/\delta)}
\asymp
\sqrt{2\sgn^2\dsgn_{t,n}^{r,s}\log(2/\delta)}
=
\pse_{t,n}^{r,s}(\delta),
\end{align*}
where the constants are universal.
The proof concludes by applying Theorem~\ref{trm:app_minmax} and then Lemma~\ref{lemma:exp-vs-wc}.
\end{proof}

\subparagraph{Paper minimax theorem.}
This paragraph derives the paper's minimax theorem from the generic weighted-sum result.

\StatMinMax*

\begin{proof}[Proof of Theorem~\ref{trm:minmax}]
This is a direct instantiation of Theorem~\ref{trm:app_minmax} (the weighted-sum minimax theorem), with the finite-horizon discounted-sum weights
\begin{align*}
w_{t-i}=r^i,\quad w_t=1,\quad w_{t+i}=s^i,\quad (i\geq 1),\qquad w_k=0\ \text{otherwise on }\{0,\dots,n\}.
\end{align*}
For these weights,
\begin{align*}
\|\bm{w}\|_2^2=\sum_{i=1}^{t} r^{2i}+\sum_{i=0}^{n-t}s^{2i}=\dsgn_{t,n}^{r,s},
\qquad
\|\bm{w}\|_1=\sum_{i=1}^{t} r^{i}+\sum_{i=0}^{n-t}s^{i}=\eta_{t,n}^{r,s},
\end{align*}
and $\hat\mu(\bm{X})=\dsum_t^{r,s}(W_{0:n})$, $\mu(P)=\edsum_t^{r,s}(W_{0:n})$.
Substituting these identities into Theorem~\ref{trm:app_minmax} yields the claimed bound.
Finally, Lemma~\ref{lemma:exp-vs-wc} identifies worst-case length with worst-case expected length over the stated class, so the minimax statements match.
\end{proof}

\subparagraph{Expected and worst-case length.}
This lemma shows that worst-case expected length and worst-case pointwise length coincide when the class contains all Dirac measures.

\begin{lemma}[Expected vs.\ worst-case length]
\label{lemma:exp-vs-wc}
Fix $n\in\NN$ and a distribution class $\mathcal D$ on $\mathcal R^{\{0,\dots,n\}}$ that contains, for every $\bm x\in\mathcal R^{\{0,\dots,n\}}$, the degenerate (Dirac) product measure $P_{\bm x}$ with $P_{\bm x}(W_{0:n}=\bm x)=1$.
For any interval-valued map $I:\mathcal R^{\{0,\dots,n\}}\to \mathrm{Interval}(\RN)$ define
\begin{align*}
\ell(I)\coloneqq \sup_{\bm x\in\mathcal R^{\{0,\dots,n\}}}|I(\bm x)| \quad \text{then} \quad \sup_{P\in\mathcal D}\ \expe_P\left[|I(W_{0:n})|\right]=\ell(I).
\end{align*}
In particular, for any constraint set $\mathsf{CI}_\sgn(\conf)$,
\begin{align*}
\inf_{I\in\mathsf{CI}_\sgn(\conf)}\ \sup_{P\in\mathcal D}\ \expe_P\left[|I(W_{0:n})|\right]
=
\inf_{I\in\mathsf{CI}_\sgn(\conf)}\ \ell(I).
\end{align*}
\end{lemma}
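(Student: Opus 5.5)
The plan is a two-inequality sandwich. Fix $\bm x\in\mathcal R^{\{0,\dots,n\}}$ and an interval-valued map $I$.

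For the upper inequality, note that $|I(W_{0:n})|\le \ell(I)$ holds pointwise for every realisation, since $\ell(I)$ is the supremum of $|I(\cdot)|$ over all inputs. Taking expectations under any $P\in\mathcal D$ gives $\expe_P[|I(W_{0:n})|]\le \ell(I)$. Taking the supremum over $P$ then yields
\[
\sup_{P\in\mathcal D}\expe_P\left[|I(W_{0:n})|\right]\le \ell(I).
\]
This step needs only that $|I(\cdot)|$ is measurable, or else that the expectation is read as an outer integral. If $\ell(I)=\infty$, this direction is vacuous.

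For the lower inequality, use the Dirac hypothesis. The degenerate measure $P_{\bm x}$ lies in $\mathcal D$, and under it $W_{0:n}=\bm x$ almost surely. Hence $\expe_{P_{\bm x}}[|I(W_{0:n})|]=|I(\bm x)|$, and therefore $\sup_{P\in\mathcal D}\expe_P[|I(W_{0:n})|]\ge |I(\bm x)|$. Since $\bm x$ was arbitrary, taking the supremum over $\bm x$ gives $\ge \ell(I)$. Combining the two directions proves the equality for every $I$, including the case where both sides are $+\infty$.

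The ``in particular'' statement follows immediately. The two functionals $I\mapsto\sup_{P}\expe_P[|I|]$ and $I\mapsto\ell(I)$ agree on every interval-valued map, so they agree on the subset $\mathsf{CI}_\sgn(\conf)$. Their infima over that set therefore coincide. This holds whatever the constraint set is; in particular we never need the Dirac measures to satisfy the coverage constraint themselves.

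The only delicate point is whether the Dirac product measures genuinely belong to the relevant class, e.g.\ $\mathcal D_\sgn$ used in Theorem~\ref{trm:minmax}. They do: under $P_{\bm x}$ we have $X_i-\expe_{i-1}(X_i)=0$ almost surely, and $0$ is $\sgn$-sub-Gaussian for every $\sgn\ge 0$. A Dirac measure is also trivially a product measure. So the hypothesis of the lemma holds for the class used in the paper, which is the step I would check explicitly when applying the lemma. The remaining measurability caveat I would handle by assuming $I$ is measurable, as is implicit in the definition of $\mathsf{CI}_\sgn(\conf)$.
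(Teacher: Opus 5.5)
Your proposal is correct and follows essentially the same route as the paper's proof: pointwise domination $|I(W_{0:n})|\le \ell(I)$ for the upper inequality, and evaluation at Dirac measures $P_{\bm x}\in\mathcal D$ for the lower inequality. You take the supremum over $\bm x$ directly instead of choosing an $\varepsilon$-maximizer $\bm x_\varepsilon$, which also covers the case $\ell(I)=\infty$. Your explicit check that Dirac product measures lie in $\mathcal D_\sgn$ is a worthwhile addition that the paper leaves implicit.
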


\begin{proof}
Fix $I$.

\emph{(Upper bound)} For any $P\in\mathcal D$ we have $|I(W_{0:n})|\leq \ell(I)$ almost surely, hence $\expe_P[|I(W_{0:n})|]\leq \ell(I)$.
Taking the supremum over $P$ gives
\begin{align*}
\sup_{P\in\mathcal D}\expe_P[|I(W_{0:n})|]\leq \ell(I).
\end{align*}

\emph{(Lower bound)} For every $\varepsilon>0$ choose $\bm x_\varepsilon$ with $|I(\bm x_\varepsilon)|\geq \ell(I)-\varepsilon$ by definition of the supremum.
Since $P_{\bm x_\varepsilon}\in\mathcal D$ and $W_{0:n}=\bm x_\varepsilon$ a.s.\ under $P_{\bm x_\varepsilon}$,
\begin{align*}
\expe_{P_{\bm x_\varepsilon}}[|I(W_{0:n})|]=|I(\bm x_\varepsilon)|\geq \ell(I)-\varepsilon.
\end{align*}
Thus $\sup_{P\in\mathcal D}\expe_P[|I(W_{0:n})|]\geq \ell(I)-\varepsilon$ for all $\varepsilon>0$, so letting $\varepsilon\downarrow 0$ yields
\begin{align*}
\sup_{P\in\mathcal D}\expe_P[|I(W_{0:n})|]\geq \ell(I).
\end{align*}
Combining both bounds gives equality. The minimax identity follows by taking $\inf_{I\in\mathsf{CI}_\sgn(\conf)}$.
\end{proof}

\subparagraph{Soundness of tail completion.}
This lemma shows that finite-sum coverage lifts to infinite-sum coverage after adding the deterministic discounted tail.

\begin{restatable}{lemma}{StatLift}
    \label{lemma:stat:lift}
    Let $(\mathcal R,r,s)$ be an input setting and let $\Gamma\coloneqq[\inf\mathcal R,\sup\mathcal R]$.
    For every $t\in\NN$ and $n\geq t$ $  \left\{
        \edsum_t^{r,s}(W_{0:n})\in
        \einter_t^{r,s}(\delta;W_{0:n})
        \right\}
        \subseteq
        \left\{
        \edsum_t^{r,s}(W)\in
        \eset_t^{r,s}(\delta;W_{0:n})
        \right\}$.
    Consequently,
    \begin{align*}
        &\prob\left(
        \edsum_t^{r,s}(W)\in \eset_t^{r,s}(\delta;W_{0:n})
        \right)
        \geq
        \prob\left(
        \edsum_t^{r,s}(W_{0:n})\in \einter_t^{r,s}(\delta;W_{0:n})
        \right),\\
        &\prob\left(
        \forall n\geq t:\ \edsum_t^{r,s}(W)\in \eset_t^{r,s}(\delta;W_{0:n})
        \right)
        \geq
        \prob\left(
        \forall n\geq t:\ \edsum_t^{r,s}(W_{0:n})\in
        \einter_t^{r,s}(\delta;W_{0:n})
        \right),\\
        &\prob\left(
        \forall t\in\NN\,\forall n\geq t:\ \edsum_t^{r,s}(W)\in
        \eset_t^{r,s}(\delta;W_{0:n})
        \right)
        \geq
        \prob\left(
        \forall t\in\NN\,\forall n\geq t:\ \edsum_t^{r,s}(W_{0:n})\in
        \einter_t^{r,s}(\delta;W_{0:n})
        \right).
    \end{align*}
\end{restatable}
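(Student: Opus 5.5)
The proof is a pointwise (deterministic, realisation-by-realisation) inclusion of events; the probability statements then follow by monotonicity of $\prob$. Fix an outcome $\omega$ in the event $\{\edsum_t^{r,s}(W_{0:n})\in\einter_t^{r,s}(\delta;W_{0:n})\}$. I would decompose the infinite expected sum as
\begin{align*}
\edsum_t^{r,s}(W)=\edsum_t^{r,s}(W_{0:n})+\sum_{i=t+1}^{\infty} r^i\,\expe_{t-i-1}(X_{t-i})+\sum_{i=n-t+1}^{\infty} s^i\,\expe_{t+i-1}(X_{t+i}).
\end{align*}
This is justified by the definition of $\edsum_t^{r,s}(W)$ as an iterated limit of finite truncations. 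The limits exist because each conditional expectation lies in $\Gamma=[\inf\mathcal R,\sup\mathcal R]$ almost surely, and the weights are geometric, so both series converge absolutely. I would mirror the convergence argument of Lemma~\ref{lem:convergingsums}.

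Next I bound the two tails. Since $X_j\in\mathcal R$, each conditional expectation satisfies $\expe_{j-1}(X_j)\in\Gamma$ almost surely. The tail coefficients sum to $\frac{r^{t+1}}{1-r}$ and $\frac{s^{n-t+1}}{1-s}$ respectively. A nonnegative-weighted sum of values in $\Gamma$ with total weight $c$ lies in $c\cdot\Gamma$, so the tail lies in $\gamma_{t,n}^{r,s}\cdot\Gamma$. This is exactly the computation in the proof of Lemma~\ref{lem:uncertainty-intervals}, with observed values replaced by conditional expectations; properness is not needed since only the enclosure direction is used.

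Combining the two steps gives $\edsum_t^{r,s}(W)\in\einter_t^{r,s}(\delta;W_{0:n})+\gamma_{t,n}^{r,s}\Gamma=\eset_t^{r,s}(\delta;W_{0:n})$ on this outcome, which is the claimed event inclusion, up to a null set. The three probability inequalities then follow immediately. For fixed $(t,n)$, apply monotonicity of $\prob$ to the inclusion. For the quantified versions, intersecting the inclusions over $n\ge t$, respectively over all $t$ and $n\geq t$, preserves inclusion, and each intersection is over countably many events, so the null sets remain null.

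The only delicate step is the almost-sure qualifier: $\expe_{j-1}(X_j)\in\Gamma$ holds only almost surely. To handle it, I would fix versions of the conditional expectations valued in $\Gamma$, or equivalently discard a countable union of null sets. This does not affect any of the probabilities.
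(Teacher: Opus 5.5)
Your proposal is correct and follows essentially the same route as the paper: you decompose $\edsum_t^{r,s}(W)$ into the finite part $\edsum_t^{r,s}(W_{0:n})$ plus the two unobserved tails, enclose the tails in $\gamma_{t,n}^{r,s}\cdot\Gamma$ using $\expe_{k-1}(X_k)\in\Gamma$ almost surely, and derive the three probability inequalities by monotonicity for fixed $(t,n)$, for fixed $t$ and all $n\geq t$, and for all $t$ and $n\geq t$. Your explicit treatment of the almost-sure qualifier, the countable intersections of null sets, and the convergence of the iterated limit is a small refinement that the paper leaves implicit.
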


\begin{proof}[Proof of Lemma~\ref{lemma:stat:lift}]
Fix $t\in\NN$ and $n\geq t$.
Write the infinite expected discounted sum as the finite observed part plus the unobserved tails:
\begin{align*}
\edsum_t^{r,s}(W)-\edsum_t^{r,s}(W_{0:n})
=
\sum_{i=t+1}^{\infty} r^i\expe_{t-i-1}(X_{t-i})
+
\sum_{i=n-t+1}^{\infty} s^i\expe_{t+i-1}(X_{t+i}) .
\end{align*}
Since $X_k\in\mathcal R$ almost surely, also $\expe_{k-1}(X_k)\in\Gamma$ almost surely.
Hence
\begin{align*}
\edsum_t^{r,s}(W)-\edsum_t^{r,s}(W_{0:n})
\in
\left(
\sum_{i=t+1}^{\infty} r^i+\sum_{i=n-t+1}^{\infty}s^i
\right)\Gamma
=
\gamma_{t,n}^{r,s}\cdot\Gamma .
\end{align*}
Therefore, whenever $\edsum_t^{r,s}(W_{0:n})\in\einter_t^{r,s}(\delta;W_{0:n})$, we have
\begin{align*}
\edsum_t^{r,s}(W)
&\in
\einter_t^{r,s}(\delta;W_{0:n})+\gamma_{t,n}^{r,s}\cdot\Gamma=\eset_t^{r,s}(\delta;W_{0:n}).
\end{align*}
This proves the event inclusion.
The pointwise, local, and uniform inequalities follow by applying the same inclusion respectively for fixed $(t,n)$, for fixed $t$ and all $n\geq t$, and for all $t\in\NN$ and all $n\geq t$.
\end{proof}

\subparagraph{Pointwise soundness.}
This lemma gives a fixed-time confidence interval for the finite expected discounted sum.

\begin{restatable}{lemma}{StatPoint}
    \label{lemma:stat:point}
    Let $(\mathcal R, r,s)$ be a input setting.
    Then for every $\delta\in (0,1)$ and all $n\in \NN$, $t\in\{0,\dots,n\}$,
    \begin{align*}
         \prob\left(\edsum_t^{r, s}(W_{0:n}) \in \dsum_t^{r, s}(W_{0:n}) \pm \beta_{t,n}^{r,s}(\delta)\right)\geq  1-\delta,
         \qquad
         \beta_{t,n}^{r,s}(\delta) \coloneqq \sqrt{2\sgn^2 \dsgn_{t,n}^{r,s} \log(2/\delta)}.
    \end{align*}
\end{restatable}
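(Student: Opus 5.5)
The plan is to write the deviation as a weighted martingale-difference sum and apply the sub-Gaussian MGF bound of Lemma~\ref{lemma:mgf} together with a Chernoff argument. Fix $n\in\NN$ and $t\in\{0,\dots,n\}$, and use the finite discounted weights of Theorem~\ref{trm:minmax:discounted}: $w_{t-i}=r^i$ for $1\le i\le t$, $w_t=1$, $w_{t+i}=s^i$ for $1\le i\le n-t$, and $w_k=0$ otherwise. With these weights,
\begin{align*}
\dsum_t^{r,s}(W_{0:n})-\edsum_t^{r,s}(W_{0:n})=\sum_{k=0}^{n} w_k\bigl(X_k-\expe_{k-1}(X_k)\bigr),
\end{align*}
since the past terms $r^i\expe_{t-i-1}(X_{t-i})$ and the future terms $s^i\expe_{t+i-1}(X_{t+i})$ both have the form $w_k\expe_{k-1}(X_k)$. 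Each increment $D_k\coloneqq X_k-\expe_{k-1}(X_k)$ is $\filt_k$-measurable, has conditional mean zero given $\filt_{k-1}$, and is conditionally $\sgn_k$-sub-Gaussian with $\sgn_k\le\sgn$. For bounded $X_k$ this follows from the conditional Hoeffding lemma, which gives $\sgn\le d_\aR/2$.

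Next I would bound the moment generating function. Conditioning on $\filt_{n-1}$ and peeling off one increment at a time via the tower property gives
\begin{align*}
\expe\Bigl[\exp\Bigl(\lambda\sum_{k=0}^n w_kD_k\Bigr)\Bigr]\le\exp\Bigl(\tfrac{\lambda^2}{2}\sgn^2\sum_{k}w_k^2\Bigr)=\exp\Bigl(\tfrac{\lambda^2}{2}\sgn^2\dsgn_{t,n}^{r,s}\Bigr).
\end{align*}
This is exactly Lemma~\ref{lemma:mgf} applied to finitely supported weights, so no limit argument is needed. The identity $\sum_k w_k^2=\dsgn_{t,n}^{r,s}$ is the geometric-sum computation already recorded in Theorem~\ref{trm:minmax:discounted}.

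Then I would apply Markov's inequality to $e^{\lambda(\cdot)}$ and optimise at $\lambda=\beta/(\sgn^2\dsgn_{t,n}^{r,s})$. This gives the one-sided tail $\exp(-\beta^2/(2\sgn^2\dsgn_{t,n}^{r,s}))$, and applying the same argument to $-\lambda$ handles the other side. A union bound over the two sides gives two-sided failure probability at most $2\exp(-\beta^2/(2\sgn^2\dsgn_{t,n}^{r,s}))$. Setting this equal to $\delta$ yields $\beta=\sqrt{2\sgn^2\dsgn_{t,n}^{r,s}\log(2/\delta)}=\beta_{t,n}^{r,s}(\delta)$, which is the claim. The case $\dsgn_{t,n}^{r,s}=0$ cannot arise, because $w_t=1$.

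The main obstacle I expect is measurability and indexing. The terms $\expe_{k-1}(X_k)$ for $k=0$ condition on $\filt_{-1}$, that is, on the unobserved past. The tower-property peeling must therefore run over the filtration $(\filt_k)_{k\ge -1}$ and not over the observed data alone. This is harmless, because the canonical filtration contains the whole past and the conditional sub-Gaussianity assumption is stated relative to it, but the argument should state this explicitly.
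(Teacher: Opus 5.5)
Your proposal is correct and follows the paper's proof essentially step for step: the same finite discounted weights, the martingale-difference representation of $\dsum_t^{r,s}(W_{0:n})-\edsum_t^{r,s}(W_{0:n})$, Lemma~\ref{lemma:mgf} for the MGF bound, and a two-sided Chernoff argument with the threshold set to $\beta_{t,n}^{r,s}(\delta)$. Your remarks that the $k=0$ term conditions on $\filt_{-1}$ and that $\dsgn_{t,n}^{r,s}\geq 1>0$ are clarifications the paper leaves implicit.
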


\begin{proof}[Proof of Lemma~\ref{lemma:stat:point}]
Fix $n\in\NN$ and $t\in\{0,\dots,n\}$.
Define weights $(w_k)_{k=0}^n$ by
\begin{align*}
w_{t-i}=r^i\ (i=1,\dots,t),\qquad w_t=1,\qquad w_{t+i}=s^i\ (i=1,\dots,n-t).
\end{align*}
Then
\begin{align*}
\dsum_t^{r,s}(W_{0:n})-\edsum_t^{r,s}(W_{0:n})
=\sum_{k=0}^n w_k\left(X_k-\expe_{k-1}(X_k)\right).
\end{align*}
By Lemma~\ref{lemma:mgf} applied with sub-Gaussian proxy $\sgn$ in place of $d_\aR$, for all $\lambda\in\RN$,
\begin{align*}
\expe\left( \exp \left(\lambda\sum_{k=0}^n w_k(X_k-\expe_{k-1}(X_k))\right)\right)
\le
\exp\left(\frac{\lambda^2}{2}\sgn^2\sum_{k=0}^n w_k^2\right).
\end{align*}
Chernoff’s method gives
\begin{align*}
\prob\left(\dsum_t^{r,s}(W_{0:n})-\edsum_t^{r,s}(W_{0:n})\geq \varepsilon\right)
\leq \exp\left(-\frac{\varepsilon^2}{2\sgn^2\sum_{k=0}^n w_k^2}\right),
\end{align*}
and the same for the left tail.
Using $\sum_{k=0}^n w_k^2=\dsgn_{t,n}^{r,s}$ and setting $\varepsilon=\sqrt{2\sgn^2\dsgn_{t,n}^{r,s}\log(2/\delta)}=\beta_{t,n}^{r,s}(\delta)$ results in $\prob(|\dsum-\edsum|\geq \beta_{t,n}^{r,s}(\delta))\leq \delta$.
\end{proof}

\subparagraph{Local soundness.}
This lemma gives an anytime-valid confidence interval over all horizons $n\geq t$ for a fixed center time $t$.

\begin{restatable}{lemma}{StatLocal}
\label{lemma:stat:local}
Let $(\mathcal R,r,s)$ be a input setting, $t\in\mathbb N$ and $\delta\in (0,1)$. Then:
\begin{align*}
&\prob\left(\forall n\geq t:\ \edsum_t^{r,s}(W_{0:n}) \in \dsum_t^{r,s}(W_{0:n}) \pm \beta_{t,n}^{r,s}(\delta)\right)\ \ge\ 1-\delta, \\
&\beta_{t,n}^{r,s}(\delta)
\coloneqq
k_1\sqrt{V_{t,n}^{r,s} \left(2\log\left(\log_2(V_{t,n}^{r,s} )+1\right) + \log\left(\tfrac{2\pi^2}{6\delta}\right)\right)}
\end{align*}
where $V_{t,n}^{r,s}\coloneqq \max(1,\sgn^2\dsgn_{t,n}^{r,s})$ and $k_1 \coloneqq 2^{1/4}+2^{-1/4}/\sqrt 2$.
\end{restatable}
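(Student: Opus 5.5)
We view, for fixed $t$, the deviation $\dsum_t^{r,s}(W_{0:n})-\edsum_t^{r,s}(W_{0:n})$ as a martingale in the horizon $n$ and apply the time-uniform sub-Gaussian boundary of Howard et al.~\cite{howard2021time}. Concretely, write it as $M_n \coloneqq \sum_{k=0}^{n} w_k D_k$ with $D_k \coloneqq X_k-\expe_{k-1}(X_k)$, using the weights of Lemma~\ref{lemma:stat:point}: $w_{t-i}=r^i$ for $i\le t$, $w_t=1$, and $w_{t+i}=s^i$. The crucial point is that these weights do not depend on $n$; increasing $n$ only appends the term $s^{n-t}D_n$. Hence $(M_n)_{n\geq t}$ is an $(\filt_n)$-martingale, with the initial value $M_t$ already containing the past block.

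Its accumulated variance proxy is the deterministic sequence $\sgn^2\dsgn_{t,n}^{r,s}$, nondecreasing in $n$, and $M_t$ is already covered by it. By the argument of Lemma~\ref{lemma:mgf}, conditional $\sgn$-sub-Gaussianity of each $D_k$ makes
\begin{equation*}
\exp\bigl(\lambda M_n-\tfrac{\lambda^2}{2}\sgn^2\dsgn_{t,n}^{r,s}\bigr)
\end{equation*}
a nonnegative supermartingale for every $\lambda\in\RN$. In Howard et al.'s terminology, $(M_n)$ is a sub-Gaussian process with intrinsic time $V_n\coloneqq\sgn^2\dsgn_{t,n}^{r,s}$.

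We then invoke their polynomial stitched boundary, which states that with probability at least $1-\alpha$, for all $n$,
\begin{equation*}
M_n\le k_1\sqrt{\max(V_n,m)\bigl(2\log(\log_2(\max(V_n,m)/m)+1)+\log(\zeta(2)/\alpha)\bigr)}.
\end{equation*}
This is their Theorem~1 with stitching parameters $\eta=2$, $s=2$, $m=1$, where $k_1=(\eta^{1/4}+\eta^{-1/4})/\sqrt2$. Taking $m=1$ turns $\max(V_n,1)$ into $V_{t,n}^{r,s}$. Taking $\alpha=\delta/2$ gives $\log(\zeta(2)/\alpha)=\log(2\pi^2/(6\delta))$, which is the constant in $\beta$. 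Applying the same bound to $-M_n$ and taking a union bound over the two tails gives the two-sided statement with probability at least $1-\delta$.

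The main obstacle is checking that the paper's $k_1=2^{1/4}+2^{-1/4}/\sqrt2$ really dominates the constant from Howard et al.\ at $\eta=2$. Their constant is $(2^{1/4}+2^{-1/4})/\sqrt2\approx1.35$, while $2^{1/4}+2^{-1/4}/\sqrt2\approx1.78$, so the paper's constant is larger. A larger $k_1$ only widens the boundary and preserves validity, so it suffices to note this inequality.

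A second point to verify is the intrinsic time. Howard et al.'s bound assumes $V_n$ is predictable and starts from $V_0=0$ with $M_0=0$. Here the process starts at index $t$ with $V_t>0$. We handle this by extending the martingale back to index $0$, so that $M$ accumulates from $k=0$ with predictable, deterministic increments $\sgn^2w_k^2$. The boundary then holds for all $n\geq0$, and in particular for all $n\geq t$. Given the boundary, the remainder is routine bookkeeping.
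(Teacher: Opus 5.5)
Your proposal is correct and follows the paper's proof. Both write the deviation as the martingale $\sum_{k=0}^{n} w_k (X_k-\expe_{k-1}(X_k))$ with $n$-independent weights, apply Howard et al.'s polynomial stitched boundary with $\eta=2$, $m=1$, $h(j)=\tfrac{\pi^2}{6}(j+1)^2$, and union-bound the two tails at level $\delta/2$ each. Your extra checks are points the paper passes over silently, and they make the argument more careful: you start the process at zero before index $0$, and you note that the paper's $k_1=2^{1/4}+2^{-3/4}\approx 1.78$ dominates Howard's $(2^{1/4}+2^{-1/4})/\sqrt2$, which is about $1.44$ rather than the $1.35$ you wrote.
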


\begin{proof}[Proof of Lemma~\ref{lemma:stat:local}]
Let $S_n\coloneqq \dsum_t^{r,s}(W_{0:n})-\edsum_t^{r,s}(W_{0:n})$.
For fixed $t$, define weights $w_k^{(t)}$ by $w_{t-i}^{(t)}=r^i$ for $i=1,\dots,t$ and $w_{t+i}^{(t)}=s^i$ for $i\geq 0$.
For $m\geq 0$, let
\begin{align*}
S_m\coloneqq \sum_{k=0}^m w_k^{(t)}
\left(X_k-\expe_{k-1}(X_k)\right).
\end{align*}
Then $(S_m)_{m\geq 0}$ is a martingale with conditionally $\sgn$-sub-Gaussian increments, i.e.,
\begin{align*}
V_{t,n}^{r,s}\coloneqq \max\left(1,\ \sgn^2\sum_{k=0}^n w_k^2\right)=\max(1,\sgn^2\dsgn_{t,n}^{r,s}).
\end{align*}
Then for every $n\geq t$, $S_n=\dsum_t^{r,s}(W_{0:n})-\edsum_t^{r,s}(W_{0:n})$, we apply the stitched sub-Gaussian uniform boundary of~\cite[Thm.~1]{howard2021time} with $\eta=2$, $m=1$, and $h(j)=a(j+1)^2$ where $a=\pi^2/6$, and union bound the two one-sided boundaries, splitting $\delta$ into $\delta/2$, to obtain
\begin{align*}
\prob\left(\forall n\geq t:\ |S_n|\leq \beta_{t,n}^{r,s}(\delta)\right)\geq 1-\delta,
\end{align*}
with $\beta_{t,n}^{r,s}(\delta)$ exactly as stated after substituting $V_{t,n}^{r,s}$.
Finally, $|S_n|\leq \beta_{t,n}^{r,s}(\delta)$ is equivalent to $\edsum_t^{r,s}(W_{0:n})\in \dsum_t^{r,s}(W_{0:n})\pm \beta_{t,n}^{r,s}(\delta)$.
\end{proof}

\subparagraph{Uniform soundness.}
This lemma obtains simultaneous coverage over all center times $t$ and all horizons $n\geq t$ by a union bound over the local guarantees.

\begin{restatable}{lemma}{StatUniform}
\label{lemma:stat:uniform}
Let $(\mathcal R,r,s)$ be a input setting and $\delta\in (0,1)$. Then:
\begin{align*}
&\prob\left(\forall t\in \NN \forall n\geq t:\ \edsum_t^{r,s}(W_{0:n}) \in \dsum_t^{r,s}(W_{0:n}) \pm \beta_{t,n}^{r,s}(\delta)\right)\ \ge\ 1-\delta, \\
&\beta_{t,n}^{r,s}(\delta)
\coloneqq
k_1\sqrt{V_{t,n}^{r,s} \left(2\log\left(\log_2(V_{t,n}^{r,s} )+1\right) + \log\left(\tfrac{2\pi^2}{6\delta_t}\right)\right)}
\quad \text{where} \quad
\delta_t \coloneqq \frac{6\delta}{\pi^2(t+1)^2}.
\end{align*}
where $V_{t,n}^{r,s}\coloneqq \max(1,\sgn^2\dsgn_{t,n}^{r,s})$ and $k_1 \coloneqq 2^{1/4}+2^{-1/4}/\sqrt 2$.
\end{restatable}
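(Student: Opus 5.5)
The plan is a union bound over centre times, applied on top of the local guarantee in Lemma~\ref{lemma:stat:local}. For each $t\in\NN$, let $G_t$ denote the good event
\begin{align*}
G_t \coloneqq \left\{\forall n\geq t:\ \edsum_t^{r,s}(W_{0:n}) \in \dsum_t^{r,s}(W_{0:n}) \pm \beta_{t,n}^{r,s}(\delta)\right\},
\end{align*}
where $\beta_{t,n}^{r,s}(\delta)$ is the uniform width from the statement. By definition, this width equals the local width of Lemma~\ref{lemma:stat:local} evaluated at confidence level $\delta_t = 6\delta/(\pi^2(t+1)^2)$. Note that $\delta_t\in(0,1)$, because $\delta<1$ and $6/(\pi^2(t+1)^2)\leq 6/\pi^2<1$. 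Hence Lemma~\ref{lemma:stat:local} applies with $\delta_t$ in place of $\delta$, for the fixed centre $t$, and gives $\prob(G_t^c)\leq \delta_t$.

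The uniform event in the statement is exactly $\bigcap_{t\in\NN} G_t$. Its complement is $\bigcup_t G_t^c$, a countable union of measurable events. By countable subadditivity,
\begin{align*}
\prob\Bigl(\bigcup_{t\in\NN} G_t^c\Bigr) \leq \sum_{t=0}^\infty \delta_t = \frac{6\delta}{\pi^2}\sum_{t=0}^{\infty}\frac{1}{(t+1)^2} = \frac{6\delta}{\pi^2}\cdot\frac{\pi^2}{6} = \delta,
\end{align*}
using the Basel sum. Taking complements yields the claimed probability of at least $1-\delta$.

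The only step needing care is the bookkeeping: the function $\lse$ must be evaluated at $\delta_t$, and the resulting expression must match the displayed $\beta$ verbatim. This holds because $V_{t,n}^{r,s}$ does not depend on $\delta$; only the term $\log(2\pi^2/(6\delta_t))$ changes. I would also check that each $G_t$ is measurable; it is a countable intersection over $n$ of events defined by measurable functions of $W_{0:n}$. No dependence structure between different $t$ is needed, since the union bound is assumption-free.
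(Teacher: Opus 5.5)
Your proposal is correct and is the paper's proof: a union bound over centre times $t$, applying Lemma~\ref{lemma:stat:local} at level $\delta_t$ and summing $\sum_{t\ge 0}\delta_t=\delta$ via the Basel sum. Your checks that $\delta_t\in(0,1)$ and that the displayed width is the local width evaluated at $\delta_t$ are details the paper leaves implicit.
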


\begin{proof}[Proof of Lemma~\ref{lemma:stat:uniform}]
This follows directly from a simple union bound.
\begin{align*}
    &\prob\left(\forall t\in \NN \forall n\geq t \colon \edsum_t^{r,s}(W_{0:n}) \in \dsum_t^{r,s}(W_{0:n}) \pm \beta_{t,n}^{r,s}(\delta)\right) \\
    &= 1- \prob\left(\exists t\in \NN \exists n\geq t \colon \edsum_t^{r,s}(W_{0:n}) \not\in \dsum_t^{r,s}(W_{0:n}) \pm \beta_{t,n}^{r,s}(\delta)\right)  \\
    &\geq 1- \sum_{t=0}^\infty \prob\left(\exists n\geq t \colon \edsum_t^{r,s}(W_{0:n}) \not\in \dsum_t^{r,s}(W_{0:n}) \pm \beta_{t,n}^{r,s}(\delta)\right)  \\
     &\geq 1- \sum_{t=0}^\infty  \delta_t
    = 1- \sum_{t=0}^\infty 
      \frac{6\delta}{\pi^2(t+1)^2} = 1-\delta.
\end{align*}
\end{proof}

\subparagraph{Paper soundness.}
This paragraph derives the paper's statistical soundness theorem from finite-sum coverage and tail completion.

\Soundness*

\begin{proof}
By Lemmas~\ref{lemma:stat:point},~\ref{lemma:stat:local}, and~\ref{lemma:stat:uniform}, respectively, the intervals $\einter_t^{r,s}(\delta;W_{0:n})$ cover the finite expected discounted sum $\edsum_t^{r,s}(W_{0:n})$ with the pointwise, local, and uniform guarantees.
By Lemma~\ref{lemma:stat:lift}, the corresponding tail-completed intervals $\eset_t^{r,s}(\delta;W_{0:n})$ cover the infinite expected discounted sum $\edsum_t^{r,s}(W)$ with the same type of guarantee.
On this coverage event, if the monitor outputs $\top$, then $  \edsum_t^{r,s}(W)
    \in
    \eset_t^{r,s}(\delta;W_{0:n})
    \subseteq
    \aI_{+\varepsilon}.$
If the monitor outputs $\bot$, then
\begin{align*}
      \edsum_t^{r,s}(W)
    \in
    \eset_t^{r,s}(\delta;W_{0:n})
    \quad\text{and}\quad
    \eset_t^{r,s}(\delta;W_{0:n})\cap\aI_{-\varepsilon}=\emptyset,
\end{align*}
and therefore $\edsum_t^{r,s}(W)\notin\aI_{-\varepsilon}$.
This is exactly the required statistical approximate soundness.
The pointwise case applies only when the verdict horizon is fixed in advance.
\end{proof}

\section{General Discounted Properties}

\subsection{Synchronous vs Asynchronous example}

In Table~\ref{tab:example1}, we show an example of a sequence of events with two types,  to illustrate the evolution of the values and the corresponding asynchronous discounting.
\begin{table}[h]
\centering
\begin{tabular}{ll|ccccccccc}
& $i$       & 0       & 1                  & 2                  & 3 & 4       & 5       & 6 & 7       & $8\:\dots$  \\ \hline
& $e^{(1)}_i$ & 1       & 1                  & $\empt$            & 0 & 1       & 0       & 0 & $\empt$ & $1 \:\dots$ \\
& $e^{(2)}_i$ & $\empt$ & 0                  & 1                  & 1 & $\empt$ & $\empt$ & 0 & $\empt$ & $1\:\dots$  \\ \hline
& $x^{(1)}_i$ & 1       & $1$ & $0$ & $0$   & $1$         & $0$        & $0$  &  $0$       & $1\dots$          \\
& $x^{(2)}_i$ & 0       & $0$    & $1$  & $1$  & $0$        &  $0$   & $0$   & $0$         & $1\:\dots$ \\ \hline
$\eval{\cdot}_S $& $\tau_{0,+i}^{(k)}$       & 0       & 1                  & 2                  & 3 & 4       & 5       & 6 & 7       & $8\:\dots$  \\ 
 \hline
$\eval{\cdot}_A$& $\tau_{0,+i}^{(1)}$ & 0       & $1$ & $1$ & $2$   & $3$         & $4$        & $5$  &  $5$       & $6\:\dots$          \\
$\eval{\cdot}_A$& $\tau_{0,+i}^{(2)}$ & 0       & $0$    & $1$  & $2$  & $2$        &  $2$   & $3$   & $3$         & $4\:\dots$ \\
\end{tabular}
\caption{Excerpt of an event sequence, with the corresponding values and discount factors.}
\label{tab:example1}
\end{table}

\section{Register Complexity of Discounted-Sum Monitoring}

\noindent\textbf{Theorem~\ref{thm:upper-bound}.}
\emph{
	For every monitoring instance $\mathcal{P}$, there is an ARM $\mathcal{M}$ with $\tau(\mathcal{P}) + 1$ registers such that $L(\mathcal{M}) = L(\mathcal{P})$.
}
\begin{proof}
	Let $\tau = \tau(\mathcal{P})$.
	We construct an ARM $\mathcal{M} = (\mathcal{R}, Y, Q, q_0, \nu_0, \Delta)$
	Let $Y = \{y_0, \ldots, y_{\tau-1}, \rho\}$, where $y_0, \ldots, y_{\tau-1}$ store running sums for pending positions, and $\rho$ tracks $r^{n+1}$ before reading the $n$th input.
	Let $Q = \{q_c : 0 \leq c \leq \tau\}$, where $c$ counts how many positions are currently pending, capped at $\tau$.
	The initial location is $q_0$ with $\nu_0(y_i)=0$ for all $i$ and $\nu_0(\rho)=r$.

	From location $q_c$ on input $x\in\mathcal{R}$, $\mathcal{M}$ applies the parallel update $\rho \gets r \rho$, $y_0 \gets r y_0 + x$, and $y_a \gets y_{a-1} + s^a x$ for $a \in \{ 1,\ldots,\min\{c,\tau-1\} \}$, and moves to $q_{\min\{c+1,\tau\}}$.
	In location $q_\tau$, before applying the update above, $\mathcal{M}$ tests whether the oldest pending position $t$ can be resolved using the current input $x$.
	At this step $t=n-\tau$, and by the maintained invariant $\rho=r^{n+1}$ and $y_{\tau-1}=S_t^{r,s}(x_{0:t+\tau-1})$.
	Thus the completed truncated value is $V := y_{\tau-1} + s^\tau x \;=\; S_t^{r,s}(x_{0:t+\tau})$
    Writing $\delta_f := \frac{s^{\tau+1}}{1-s}$, we define $\kappa_{r,\tau} := \dfrac{r^{-\tau}}{1-r}$ if $r > 0$, and $\kappa_{r,\tau} := 0$ otherwise.
    Since $t = n-\tau$ and $\rho=r^{n+1}$, we have $\delta_p(t) = \frac{r^{t+1}}{1-r} = \kappa_{r,\tau}\rho$.
    Therefore, $\mathcal{M}$ includes in $q_\tau$ a self-loop guarded by $V > L-\varepsilon \land V + \kappa_{r,\tau}\rho < U + \varepsilon - \delta_f$, and blocks if the guard fails.

	By induction on $n$, just before reading $x_n$ we have $\rho=r^{n+1}$ and for every age $a<\min\{n,\tau\}$,
	$y_a = S^{r,s}_{n-1-a}(x_{0:n-1})$.
	In particular, in $q_\tau$ the guard is satisfied iff
	$U_t^{r,s}(x_{0:t+\tau})\subseteq \mathcal{I}_{+\varepsilon}$ for $t=n-\tau$.
	Hence $\mathcal{M}$ blocks on an input stream exactly when some position violates the monitoring condition, and therefore $L(\mathcal{M}) = L(\mathcal{P})$.
\end{proof}

\noindent\textbf{Lemma~\ref{lem:Pk-characterization}.}
\emph{
	For every $k \geq 1$, the monitoring instance $\mathcal P_k$ satisfies $\tau(\mathcal P_k)=k$ and $L({\mathcal P_k}) = \{x\in [0,1]^\omega \mid \forall t\geq 0:\; V_t^k(x) < T_k\}$.
}
\begin{proof}
	By \Cref{thm:approximate-monitors}, $\tau(\mathcal P_k)=\tau^*([0,1],\,0,\,\tfrac12,\,2^{-(k+1)},\,0)=\left\lceil \log_{1/2}(2^{-(k+1)})\right\rceil - 1 = k$.

	Since $r=0$, for every $t\geq 0$ we have $S_t^{0,1/2}(x_{0:t+k}) = \sum_{i=0}^{k}2^{-i}x_{t+i} = V_t^k(x)$.
    The residual uncertainty after observing $k$ future symbols is $\gamma_{t,t+k}^{0,1/2} = \frac{(1/2)^{k+1}}{1-1/2} = 2^{-k}$.
	Hence the condition $U_t^{0,1/2}(x_{0:t+k}) \subseteq (0,1)_{+\varepsilon}$ with $\varepsilon=2^{-(k+1)}$ is equivalent to $-\varepsilon < V_t^k(x) < 1+\varepsilon-2^{-k}$.
	The left inequality is trivial because $V_t^k(x)\geq 0$, and the right inequality simplifies to $V_t^k(x) < 1-2^{-(k+1)} = T_k$.
	Therefore $x\in L({\mathcal P_k})$ iff $V_t^k(x)<T_k$ for all $t\geq 0$.
\end{proof}

\noindent\textbf{Theorem~\ref{thm:future-lower-bound}.}
\emph{
	For every $k \geq 1$, there is a future-only monitoring instance $\mathcal{P}_k$ such that for every ARM $\mathcal{M}$ with $\tau(\mathcal{P}_k) - 1$ registers we have $L(\mathcal{M}) \neq L(\mathcal{P}_k)$.
}
\begin{proof}
	Let $k \geq 1$ and $\mathcal{P}_k$ be the monitoring instance defined in~\Cref{sec:register-complexity}.
    For brevity, write $V_t := V_t^k$ and $T := T_k$.
	Suppose toward contradiction that there exists an ARM $\mathcal{M}$ with $m:=k-1$ registers such that $L(\mathcal{M}) = L(\mathcal{P}_k)$.

	Let us record two identities that follow from te definition of the monitoring instance $\mathcal{P}_k$.
	For every $t \geq 0$,
	\begin{align}
		V_t(x)
		&=
		x_t+\tfrac12 V_{t+1}(x)-2^{-(k+1)}x_{t+k+1},
		\label{eq:Pk-rec}\\
		V_j(x')-V_j(x)
		&=
		\sum_{i=j}^{k-1}2^{-(i-j)}(x'_i-x_i)
		\qquad (0\leq j\leq k-1),
		\label{eq:Pk-diff}
	\end{align}
	whenever $x_i=x'_i$ for all $i\geq k$.

	\subparagraph{Fixed control paths.}
	For a length-$k$ transition sequence $\pi$ of $\mathcal M$, let $C_\pi \subseteq [0,1]^k$ be the set of input prefixes $u=(u_0,\dots,u_{k-1})$ such that, during the first $k$ steps while reading $u$, the machine executes exactly the transitions of $\pi$.
	Unrolling $\mathcal{M}$ along $\pi$ shows that $C_\pi$ is the intersection of $[0,1]^k$ with finitely many strict affine half-spaces, hence is relatively open in $[0,1]^k$.
	It also yields for each $\pi$ a matrix $A_\pi \in \mathbb R^{m \times k}$ and a vector $b_\pi \in \mathbb R^m$ such that for every $u \in C_\pi$, the register valuation after $k$ steps is $A_\pi u + b_\pi$.
	Since $m = k-1 < k$, every matrix $A_\pi$ has a nontrivial kernel.

	\subparagraph{Near-boundary prefixes and an anchor word.}
	Let $\alpha := T/2$.
	For $\theta \in (3T/4,T]$, define $u(\theta) \in \mathbb R^k$ by setting $u_i(\theta) = \theta/2$ for $0 \leq i \leq k-2$ and $u_{k-1}(\theta) = \theta - \tfrac{T}{4}$.
	All coordinates of $u(\theta)$ lie in $(0,1)$, so $u(\theta) \in (0,1)^k$.

	We first note that for every $\theta < T$ the word $w_\theta := u(\theta) \cdot \alpha \cdot 0^\omega$ belongs to $L(\mathcal{P}_k)$.
	Indeed, for every $j \in \{0,\dots,k-1\}$ we have $V_j(w_\theta) = \theta$.
	For $j = k-1$, the window contains $u_{k-1}(\theta)$ with weight $1$ and $\alpha$ with weight $1/2$, so $V_{k-1}(w_\theta) = (\theta - \frac{T}{4}) + \frac{1}{2} \cdot \frac{T}{2} = \theta$.
	For $j \leq k-2$, the symbol at position $j+k+1$ is $0$, so~\Cref{eq:Pk-rec} gives $V_j(w_\theta) = u_j(\theta) + \frac{1}{2} V_{j+1}(w_\theta) = \frac{\theta}{2} + \frac{1}{2} \theta = \theta$, by backward induction from the case $j = k-1$.
	Moreover, for $t\geq k$, the only possible nonzero symbol still visible is the single letter $\alpha$ at position $k$, hence $V_t(w_\theta) \leq \alpha = T/2 < T$.
	Therefore $w_\theta \in L(\mathcal{P}_k)$ whenever $\theta < T$.

	Next, we use the prefix $u(T)$ to fix a control path.
	Let $\hat{w} := u(T) \cdot 0^\omega$.
	Again, $\hat{w} \in L(\mathcal{P}_k)$:
	we have $V_{k-1}(\hat{w}) = T - \frac{T}{4} = \frac{3T}{4} < T$, and for $j \leq k-2$, we have $V_j(\hat{w}) = u_j(T) + \frac{1}{2} V_{j+1}(\hat{w}) < \frac{T}{2} + \frac{1}{2} T = T$ by backward induction, while $V_t(\hat{w}) = 0$ for all $t \geq k$.

	Since $\hat{w} \in L(\mathcal{P}_k)$ and $\mathcal{M}$ recognizes $L(\mathcal{P}_k)$, the run of $\mathcal{M}$ on $\hat{w}$ is infinite.
	In particular, while reading the prefix $u(T)$ during the first $k$ steps, the machine follows some length-$k$ transition sequence $\pi^\star$.

	\subparagraph{An invisible perturbation inside one control cell.}
	Consider the transition sequence $\pi^\star$ above.
	Let $O := C_{\pi^\star}\cap(0,1)^k$ be the set of length-$k$ prefixes that force the first $k$ transitions of $\mathcal{M}$ to be exactly $\pi^\star$.
	Since $O$ is open in $\mathbb{R}^k$ and $u(T) \in O$, there is a neighborhood of $u(T)$ in which the control path is stable:
	Let $\eta>0$ be such that the open ball $B_\infty(u(T),\eta) := \{v\in\mathbb R^k : \|v-u(T)\|_\infty < \eta\}$	satisfies $B_\infty(u(T),\eta)\subseteq O$.

	Choose any nonzero vector $d \in \ker(A_{\pi^\star})$.
	Intuitively, $d$ encodes an ``invisible'' direction for the register valuations after $k$ following $\pi^\star$.
	Let $j$ be the largest index such that $d_j \neq 0$, and replace $d$ by $-d$ if necessary so that $d_j > 0$.
	Let $\lambda_{\min} := \frac{\eta}{2\|d\|_\infty}$ and $\delta := \min\{\eta/2,\,T/4\}$.
	We pick $\lambda$ such that $0 < \lambda < \min\{\lambda_{\min},\,2\delta/d_j\}$, and define $\theta := T-\frac{\lambda d_j}{2}$.
	Then, $\theta \in (T-\delta,T) \subseteq (3T/4,T)$, so $u(\theta)$ is well-defined.

	By the explicit form of $u(\cdot)$ we have $\|u(\theta)-u(T)\|_\infty = |T-\theta| = \frac{\lambda d_j}{2} < \delta \leq \eta/2$, and therefore $u(\theta) \in B_\infty(u(T),\eta/2)$.
	Also, $\|\lambda d\|_\infty < \lambda_{\min}\|d\|_\infty = \eta/2$, so $\|u(\theta)+\lambda d-u(T)\|_\infty \leq \|u(\theta)-u(T)\|_\infty + \|\lambda d\|_\infty < \eta$, which implies $u(\theta)+\lambda d \in B_\infty(u(T),\eta) \subseteq O$.

	Let $u := u(\theta)$ and $u' := u+\lambda d$ and define $w := u \cdot \alpha \cdot 0^\omega$ and $w' := u' \cdot \alpha \cdot 0^\omega$.
	Because $u,u'\in C_{\pi^\star}$, the runs of $\mathcal{M}$ on $w$ and $w'$ take the same first $k$ transitions. Moreover, $u'-u = \lambda d$ and $d \in \ker(A_{\pi^\star})$, so the register valuations after $k$ steps coincide.
	Since the suffixes from position $k$ onward are identical, determinism yields $w \in L(\mathcal{M})$ iff $w'\in L(\mathcal{M})$.

	\subparagraph{Contradiction.}
	Since $\theta<T$, the construction above gives $w \in L(\mathcal{P}_k)$.
	On the other hand, $w$ and $w'$ agree at all positions $i\geq k$, so~\Cref{eq:Pk-diff} yields $V_j(w')-V_j(w) = \lambda\sum_{i=j}^{k-1}2^{-(i-j)}d_i$.
	By the choice of $j$ as the largest index with $d_j \neq 0$, we have $d_i = 0$ for all $i>j$, and therefore $V_j(w')-V_j(w) = \lambda d_j$.
	Since $V_j(w)=\theta$, it follows that $V_j(w') = \theta+\lambda d_j > T$ by the definition of $\theta$.
	Hence $w' \notin L(\mathcal{P}_k)$.
	However, since we assumed $L(\mathcal{M}) = L(\mathcal{P}_k)$ and showed that $w \in L(\mathcal{M})$ iff $w'\in L(\mathcal{M})$, we obtain a contradiction.
\end{proof}

\section{Further Experimental Evaluation}
\label{sec:more_experiments}
Because of space restrictions, we include extra experiments supporting our evaluation in this appendix.

\subsection{RQ1: Resource Usage}
In addition to the experiments with \texttt{PowerData} traces, we also test our monitors for resource monitoring classification accuracy of a neural network trained on the MNIST digit dataset. 
In particular, we are interested in how the monitoring evolves under a shift in the underlying distribution.

\begin{figure}[t]
     \centering
     \begin{subfigure}{0.32\textwidth}
         \centering
         \includegraphics[width=\textwidth]{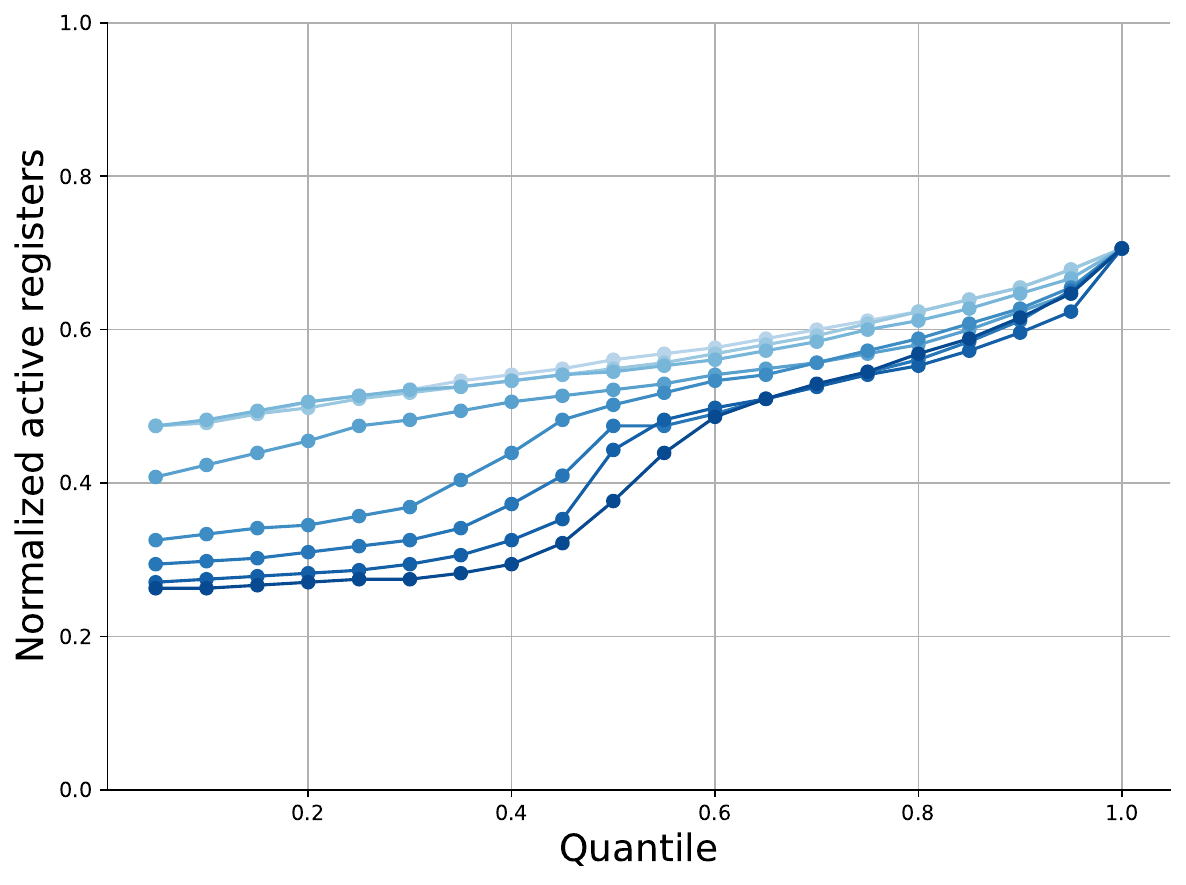}
         \caption{}
         \label{fig:mnist-quantiles}
     \end{subfigure}
     \begin{subfigure}{0.32\textwidth}
         \centering
         \includegraphics[width=\textwidth]{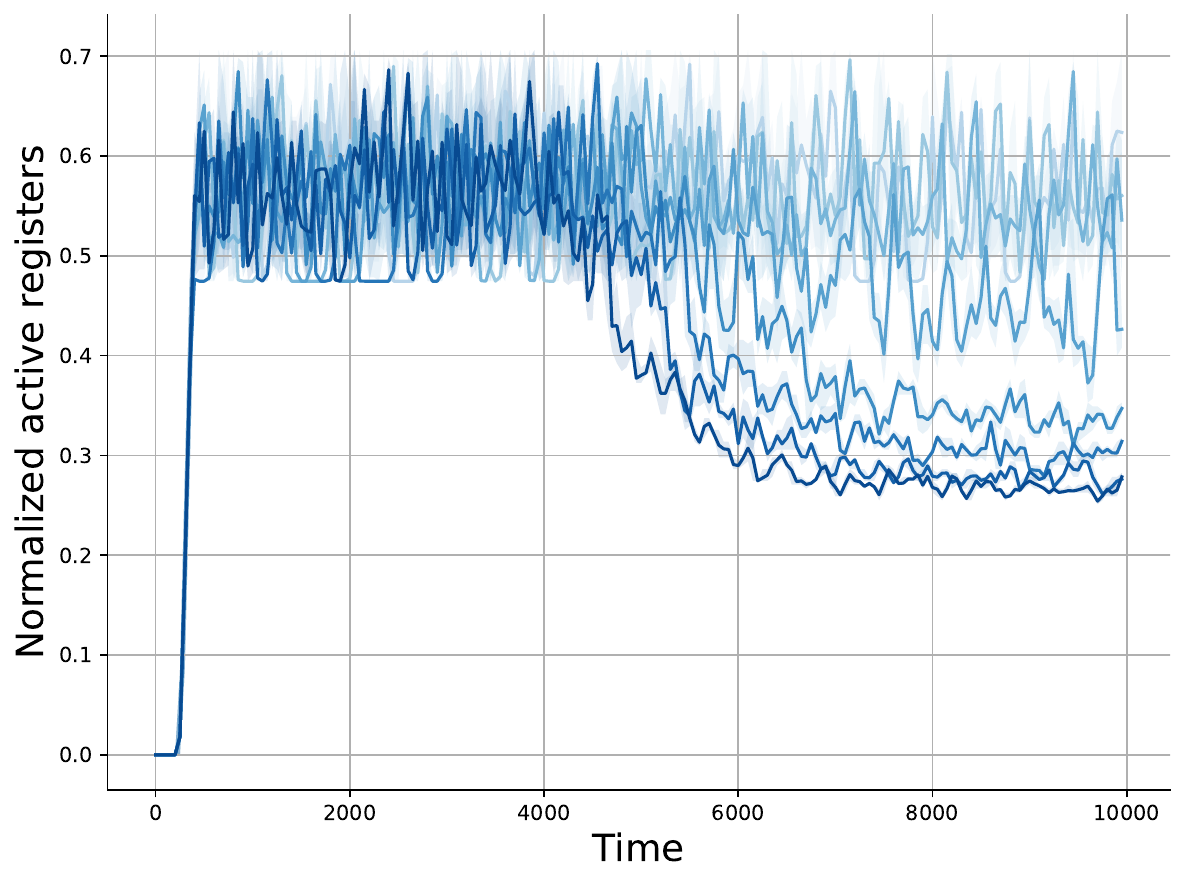}
         \caption{}
         \label{fig:mnist-whole-execution}
     \end{subfigure}
     \begin{subfigure}{0.32\textwidth}
         \centering
         \includegraphics[width=\textwidth]{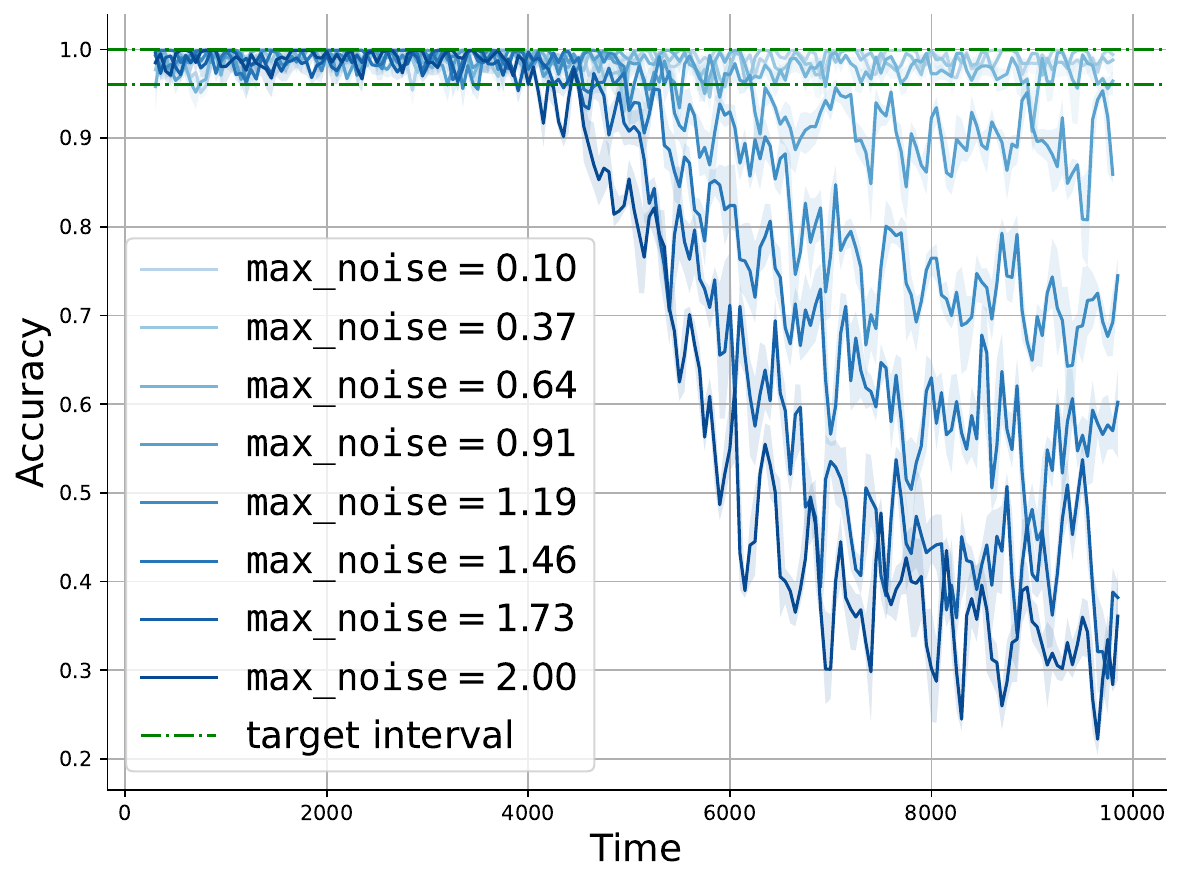}
         \caption{}
         \label{fig:mnist-whole-execution-accuracy}
     \end{subfigure}
        \caption{Normalized active registers and discounted values, \texttt{MNIST} dataset.}
        \label{fig:RQ1-mnist}
\end{figure}

For MNIST, we train a neural network on the training set and generate a trace of length $10000$ by sequentially sampling test instances, recording whether each prediction is correct.
Thus, $\mathcal R=\{0,1\}$, with discount factors fixed to $r=s=0.95$, target interval $\mathcal I=[0.96,1]$, and tolerance $\varepsilon=0.001$.

We induce distribution shift by adding Gaussian noise with time-increasing variance $\sigma(t)$ to the test inputs; the parameter \texttt{max\_noise} controls the final corruption level.
As shown in Fig.~\ref{fig:RQ1-mnist}(a), increasing noise leads to reduced register usage.
Figures~\ref{fig:RQ1-mnist}(b) and (c) illustrate a representative execution, plotting the number of active registers and the monitored discounted accuracy, respectively.
As accuracy degrades over time, the monitor can more quickly conclude violation, requiring fewer active registers.
Consistent with the \texttt{PowerData} results, using only $70\%$ of the theoretical register budget would still be mostly sound.

\subsection{RQ2: Monitoring Demographic Parity}

To further investigate RQ2, we monitor demographic parity on the \texttt{Adult} dataset with a fixed tolerance $\varepsilon = 5\cdot 10^{-5}$ and decreasing interval width. As before, the input set is $\mathcal R=\{0,1\}$ and we again fix $r=s=0.95$.
We report the results in Fig.~\ref{fig:RQ2-adult-appendix}, with the same format Fig.~\ref{fig:RQ2-adult}, for the demographic parity property (Ex.~\ref{ex:dem-parity}) with the synchronous (blue) and asynchronous (green) interpretation.
As expected, and validating the results on Fig.~\ref{fig:RQ2-adult}, synchronous interpretations are significantly more efficient than asynchronous ones, since discounting is applied uniformly across streams rather than independently.

\begin{figure}[t]
     \begin{subfigure}{0.45\textwidth}
         \centering
         \includegraphics[width=\textwidth]{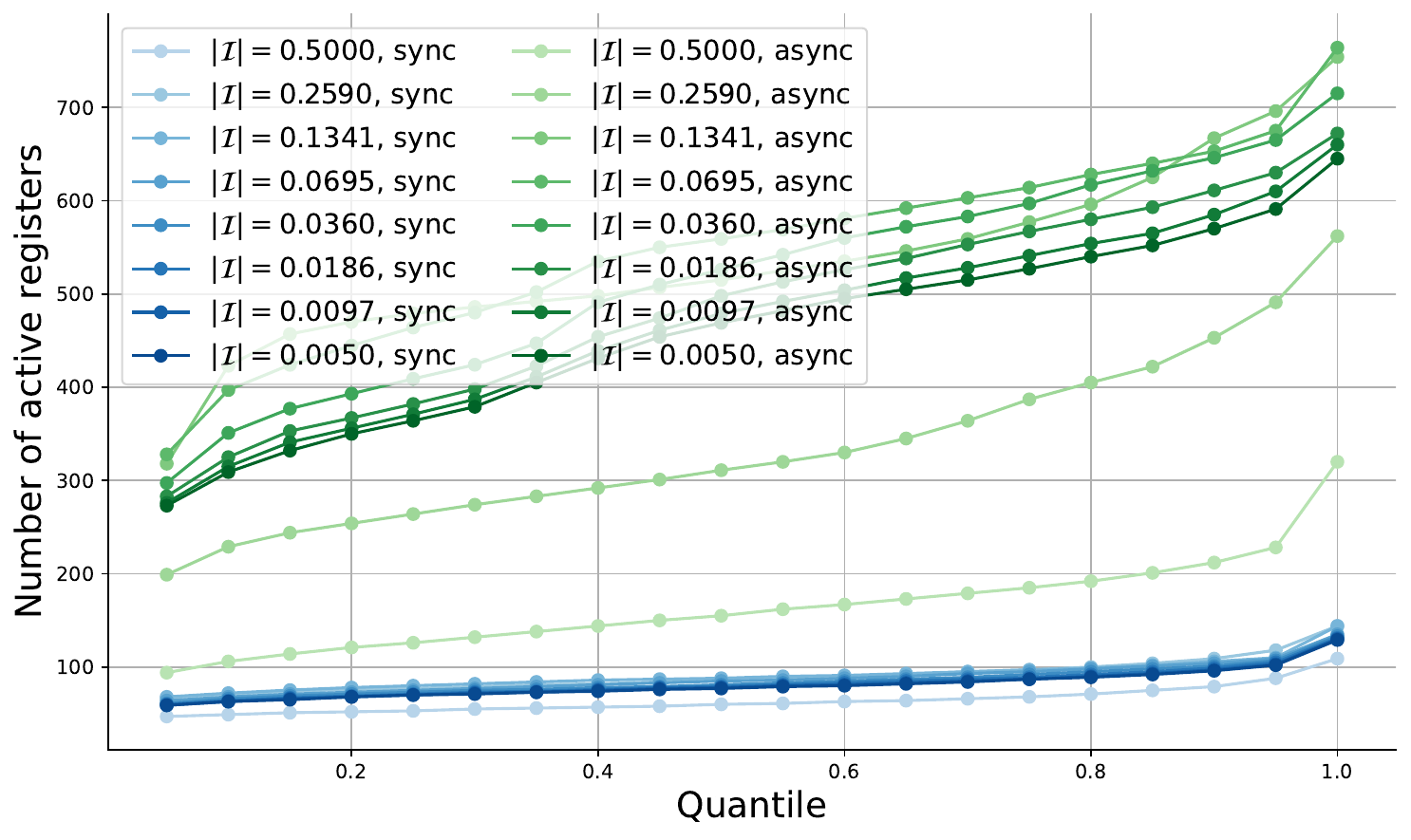}
         % \caption{}
     \end{subfigure}
     \begin{subfigure}{0.45\textwidth}
         \centering
         \includegraphics[width=\textwidth]{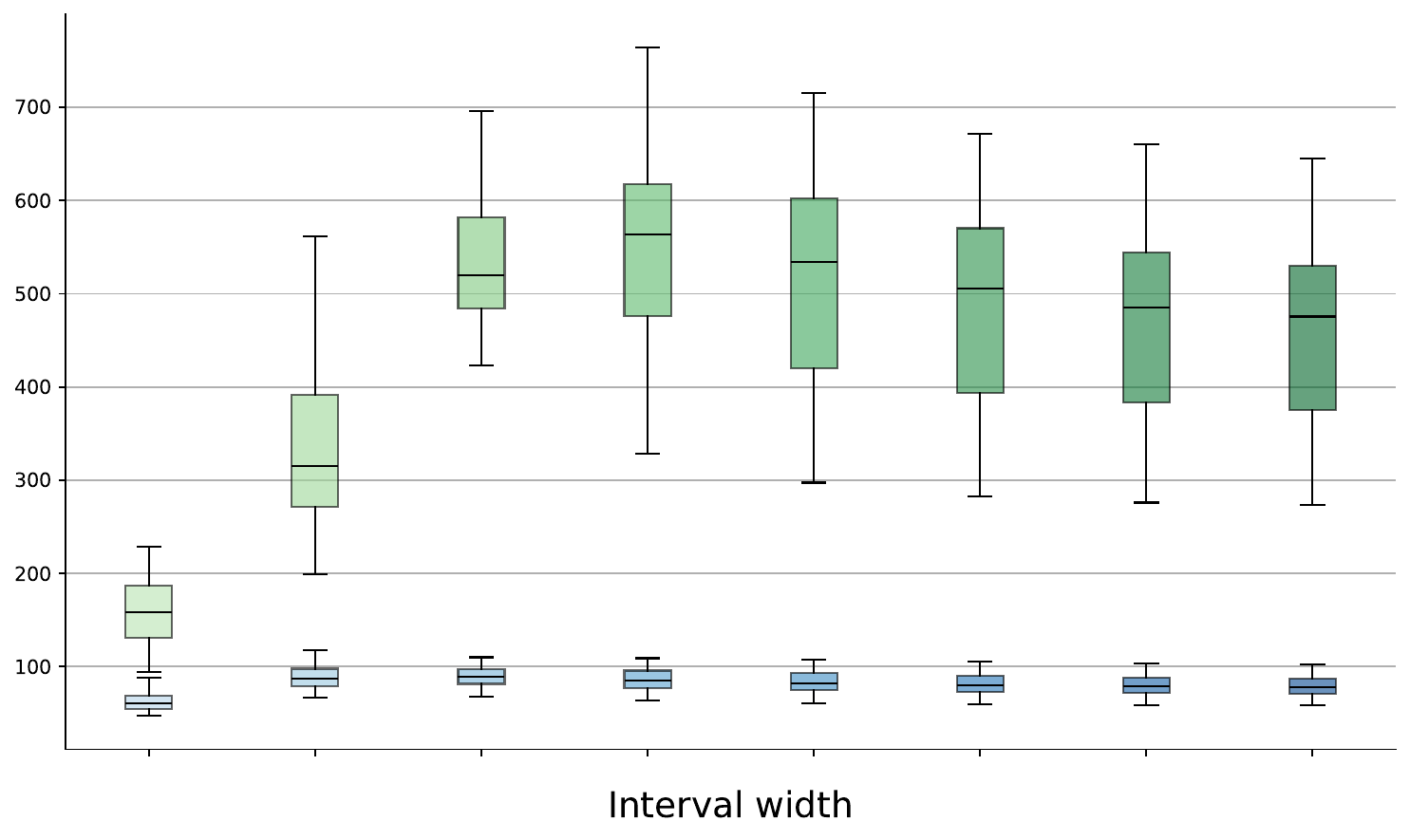}
         % \caption{}
     \end{subfigure}
        \caption{Active registers when monitoring demographic parity with variable target interval width, \texttt{Adult} dataset.}
        \label{fig:RQ2-adult-appendix}
\end{figure}

\subsection{RQ3: Statistical Monitoring Behaviour}

Our objective is to assess the behaviour of the statistical monitor during deployment.
Since the statistical uncertainty intervals do not converge to width zero, the main question is power, i.e., whether the monitor eventually produces a decisive verdict.
To evaluate this question, we use a synthetic piecewise-stationary Beta process.
We use synthetic data because the evaluation requires access to the ground-truth expected discounted average.

\subparagraph{Setup.}
The Beta process consists of four phases, each spanning $150$ time steps.
At each time index, the observation $X_t$ is sampled independently from a Beta distribution parametrised by $(a,b)$, where $(a,b)= (1,9),\; (5,5),\; (8,2),\; (4,6)$.
The corresponding phase means are approximately $0.10$, $0.50$, $0.80$, and $0.40$.
We use a uniform upper bound on the conditional sub-Gaussian norm of approximately $\sgn=0.15$.
The monitor observes only the realised samples $X_t$, while the latent process
$P_t=\expe_{t-1}(X_t)$ is used only for evaluation.

We monitor the expected discounted average $\edsum_t^{r,s}(W)/\lambda^{r,s}$ where $\lambda^{r,s}=1+\frac{r}{1-r}+\frac{s}{1-s}$. In the experiment, $r=s=0.95$, so $\lambda^{r,s}\approx39$.
The target interval is $I=[0.4,0.6]$, the tolerance is $\varepsilon=0.05$, and the error probability is $\delta=0.01$.
Thus, positive verdicts are sound with respect to
$I_{+\varepsilon}=[0.35,0.65]$, while negative verdicts are sound with respect to the complement of
$I_{-\varepsilon}=[0.45,0.55]$.

We use the process depicted in Fig.~\ref{fig:stat:process} to evaluate the pointwise, local, and uniform statistical error bounds.
Since $r=s=0.95$, the discounted average is strongly smoothed: near phase boundaries, it blends past and future phases instead of following the latent mean instantaneously.
The figure therefore illustrates the main purpose of discounted monitoring: it suppresses single-sample noise while still reacting to changes in the local behaviour.

\begin{quote}
    \emph{Important:} In the experiments below, we also use flexible release with the pointwise bound.
    This is for demonstration purposes only: the goal is to visualise how the statistical uncertainty intervals evolve over time, independently of the choice of a fixed release horizon.
\end{quote}

\begin{figure}
    \centering
    \includegraphics[width=1.0\linewidth]{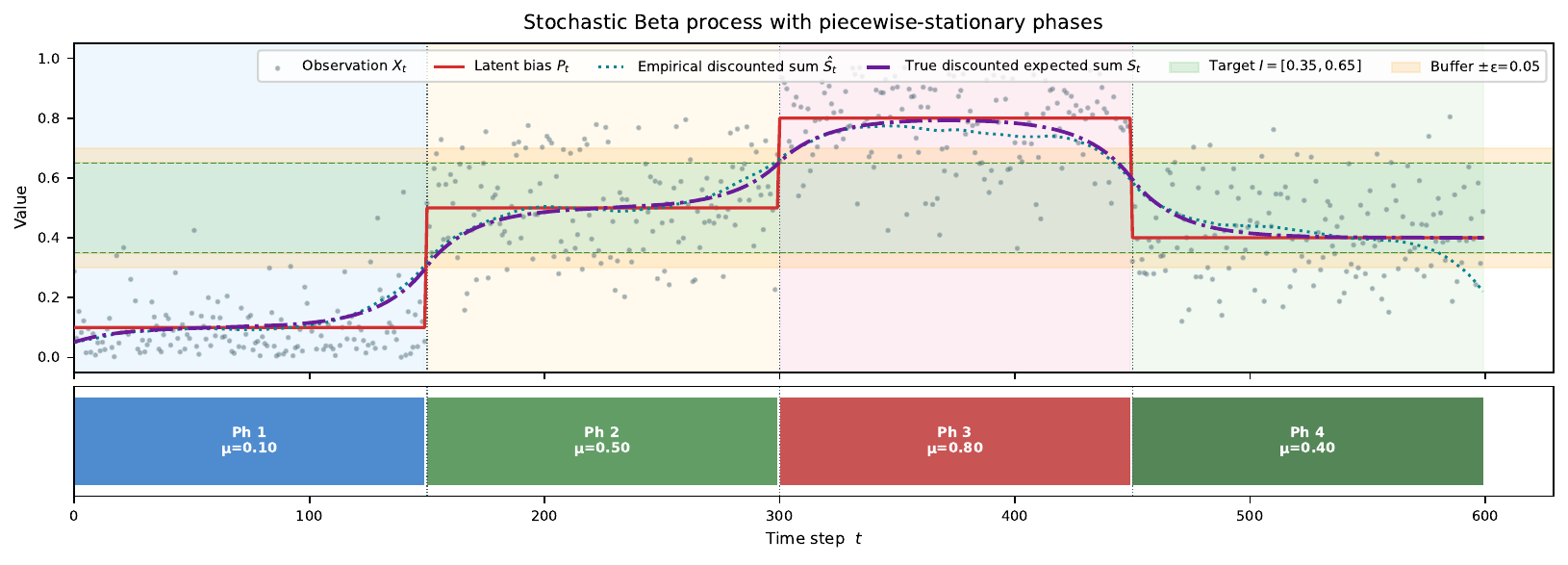}
    \caption{Four-phase Beta process $\mathrm{Beta}(a,b)$.
    The figure shows the realised observations $X_t$, the latent mean process
    $P_t$, the observed discounted average $\dsum_t^{r,s}(W)/\lambda^{r,s}$, and the oracle expected discounted average $\edsum_t^{r,s}(W)/\lambda^{r,s}$.
    The shaded background indicates the four Beta phases.}
    \label{fig:stat:process}
\end{figure}

\subparagraph{Monitor verdicts.}
We first assess how many observations are required before each monitored time index $t$ receives a decisive verdict.
Fig.~\ref{fig:stat:heat} shows the resulting verdicts for the pointwise, local, and uniform statistical error bounds.
The observed behaviour matches the phase structure of the process.
Time indices centred in the low and high phases tend to receive negative verdicts, while time indices centred in the middle phases tend to receive positive verdicts.
Around phase boundaries, verdicts are delayed or remain inconclusive because the discounted average mixes information from neighbouring phases.

The number of inconclusive verdicts increases with the strength of the soundness guarantee.
This is also visible in Fig.~\ref{fig:stat:evolution}, which overlays the statistical uncertainty intervals with the resulting verdicts.
Pointwise intervals are the narrowest, local intervals are wider because they permit flexible release over all horizons $n\geq t$, and uniform intervals are widest because they protect all monitored time indices simultaneously.
Consequently, pointwise verdicts tend to appear earliest, while uniform verdicts are the most conservative.

\begin{figure}
    \centering
    \includegraphics[width=1.0\linewidth]{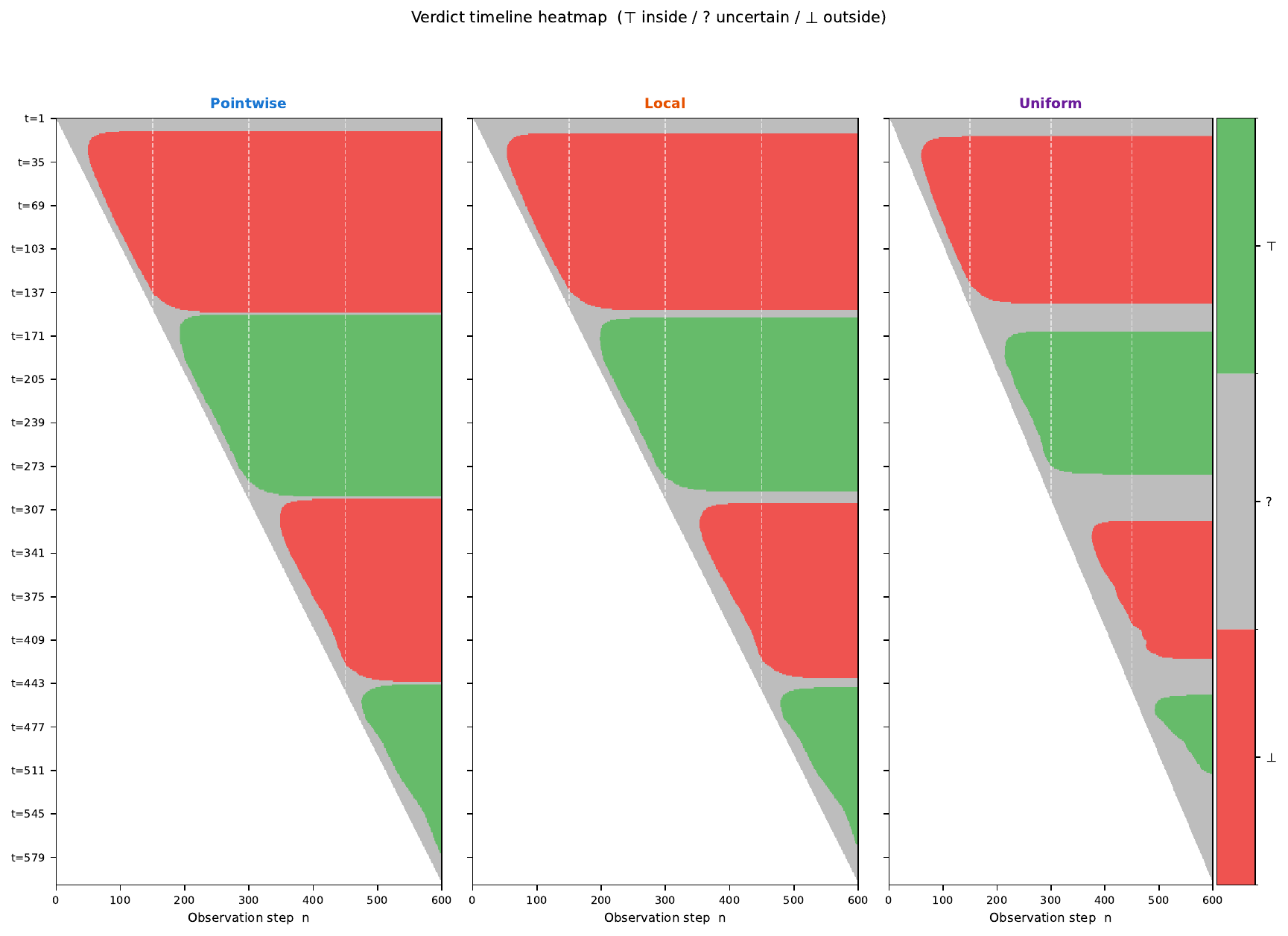}
    \caption{Monitor verdicts.
    Rows correspond to monitored time indices $t$, columns to observation horizons $n$, and colours encode the verdict:
    $\top$ for inside, $\bot$ for outside, and $?$ for inconclusive.}
    \label{fig:stat:heat}
\end{figure}

\begin{figure}
    \centering
    \includegraphics[width=1.0\linewidth]{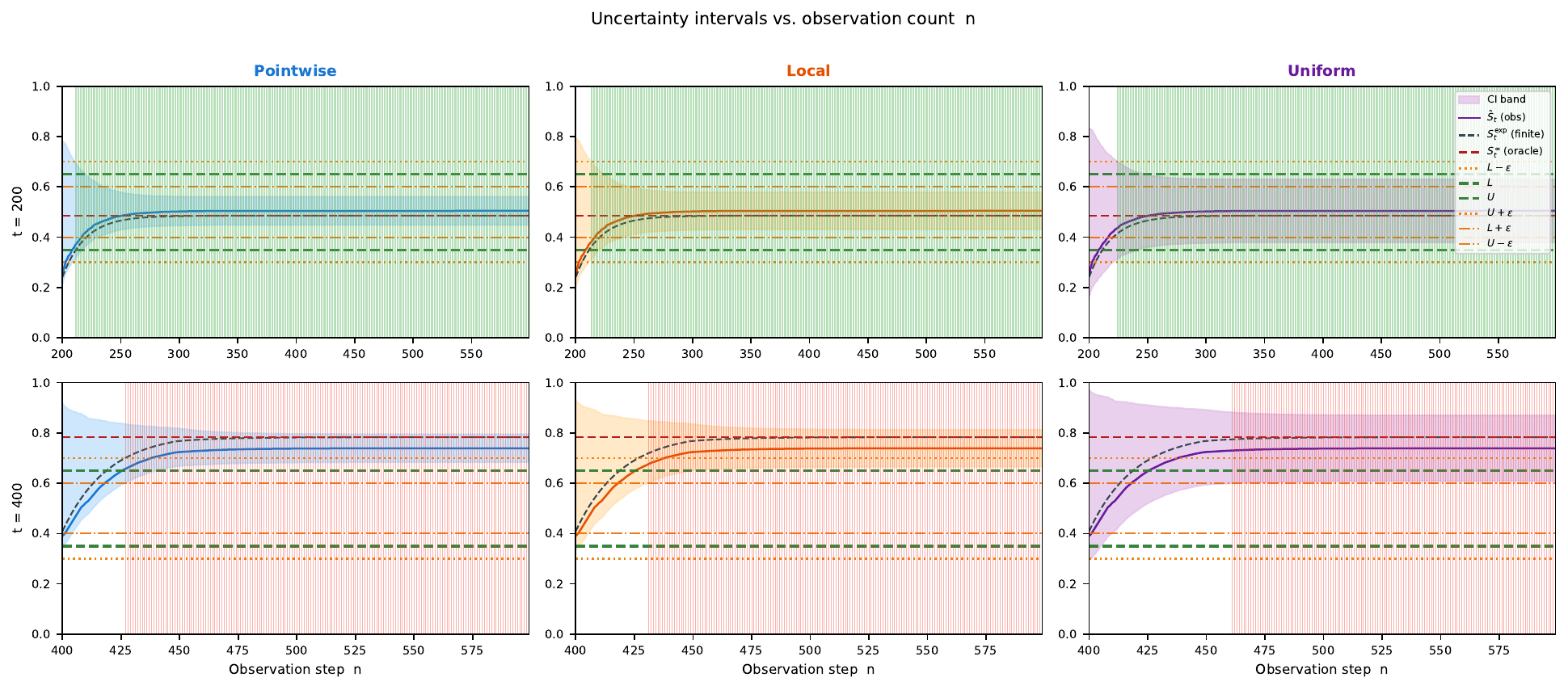}
    \caption{Evolution of pointwise, local, and uniform statistical uncertainty intervals.
    The figure fixes several monitored time indices $t$ and shows how the uncertainty interval evolves as more observations $n\geq t$ become available.
    The solid line is the observed discounted average, the dashed line is the finite expected discounted average, and the horizontal red line is the oracle expected discounted average.}
    \label{fig:stat:evolution}
\end{figure}

\subparagraph{Statistical uncertainty interval evolution.}
We next decompose the half-width of the statistical uncertainty interval.
Fig.~\ref{fig:stat:decomp} shows that the deterministic tail error decreases as the observation horizon $n$ grows, because more future observations have been revealed.
The statistical error term, in contrast, increases toward its limiting value, since more noisy observations enter the discounted estimate.
Thus, the total uncertainty is governed by a tradeoff: waiting reduces deterministic tail error, but exposes the estimate to its limiting statistical error.

Fig.~\ref{fig:stat:process_ci} shows the statistical uncertainty interval at the time of the first decisive verdict and at the end of the run, assuming that the register is never released.
This highlights the tension between statistical certainty and resource consumption: delaying release may reduce the deterministic tail error, but it also keeps registers active for longer and cannot remove the limiting statistical error.

\begin{figure}
    \centering
    \includegraphics[width=1.0\linewidth]{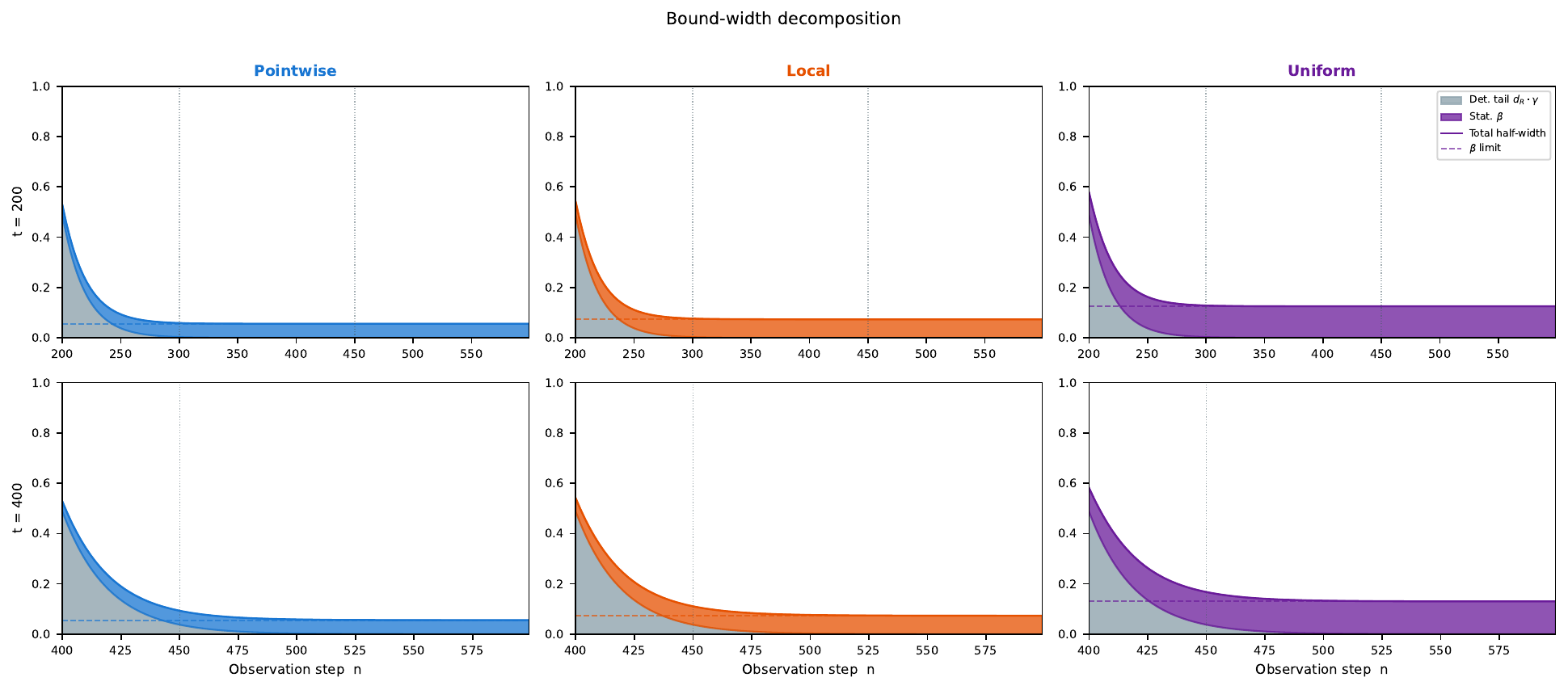}
    \caption{Uncertainty half-width decomposition.
    The figure decomposes the half-width of the statistical uncertainty interval into the deterministic tail term $d_{\mathcal R}\gamma_{t,n}^{r,s}$ and the statistical error term $\beta_{t,n}^{r,s}(\delta)$.}
    \label{fig:stat:decomp}
\end{figure}

\begin{figure}[p]
    \centering

    \includegraphics[width=0.9\linewidth]{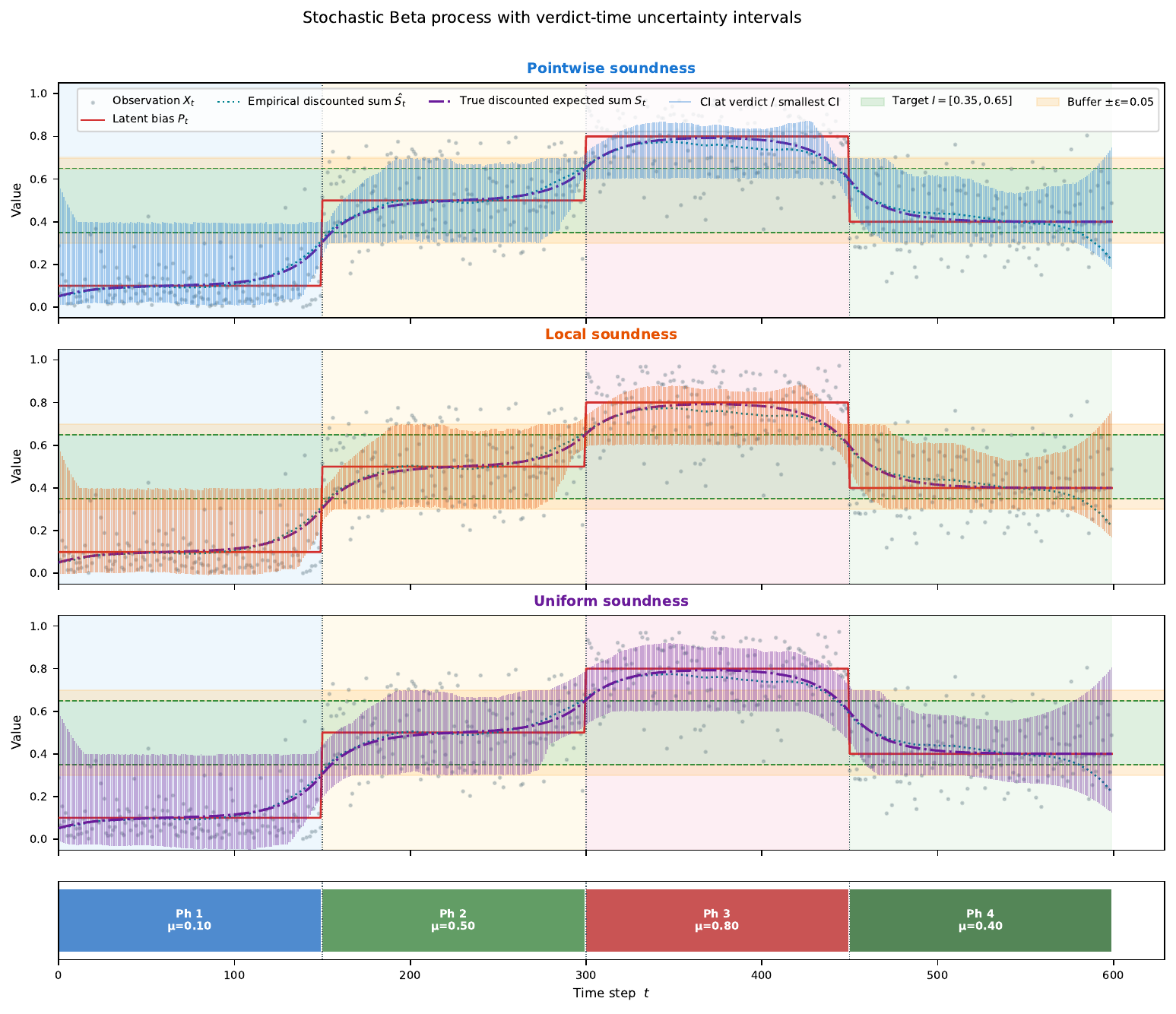}
    \vspace{0.5em}

    \includegraphics[width=0.9\linewidth]{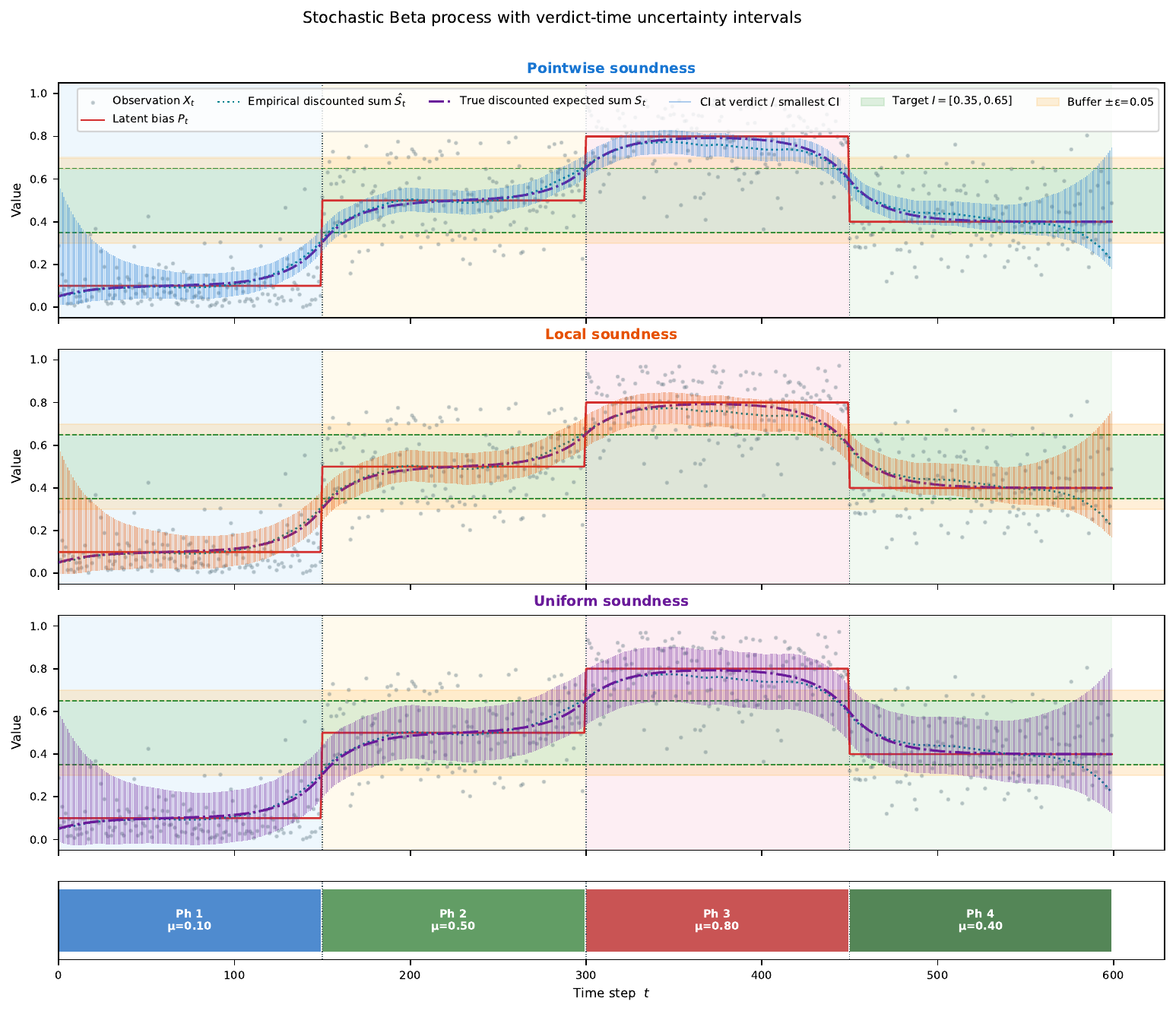}

    \caption{Statistical uncertainty intervals at the point of first decisive verdict and at the end of the run.}
    \label{fig:stat:process_ci}
\end{figure}

\subparagraph{Monte Carlo violation rates.}
We use a Monte Carlo simulation with $1000$ repetitions to evaluate the coverage of the statistical uncertainty intervals, the verdict frequency, the verdict delay, and the verdict correctness.
Table~\ref{tab:mc_metrics} reports Monte Carlo averages with standard deviations over independent runs.
The first block uses the setting of this section.
The second block uses a lower-variance setting, where all Beta parameters are multiplied by $10$, i.e., $\mathrm{Beta}(10a,10b)$, giving a sub-Gaussian norm upper bound of approximately $\sgn=0.05$.
We report the average statistical uncertainty interval violation rate, both pointwise over all $(t,n)$ pairs and run-wise as the fraction of runs containing at least one interval violation.
We also report the fraction of monitored time indices that eventually receive a decisive verdict, the average release delay $n-t$, the fraction of incorrect first verdicts among released verdicts, and the fraction of runs containing at least one incorrect verdict.
The statistical uncertainty intervals are conservative in these experiments.
Only the pointwise bound exceeds the nominal error probability in the run-wise interval violation metric, which is expected because pointwise soundness does not protect flexible release or whole-run events.
Compared with the lower-variance setting, the higher-variance setting issues fewer verdicts and has larger release delays.
Across all configurations, incorrect verdicts are rare.

\begin{table}[t]
\centering
\caption{Aggregate statistics of the Monte Carlo experiments with $1000$ repetitions.
\emph{Coverage:}
Interval viol.\ is the fraction of statistical uncertainty interval violations;
Any interval is the fraction of runs with at least one statistical uncertainty interval violation.
\emph{Release:}
Released is the fraction of monitored time indices that receive a decisive verdict;
Delay is the average release delay.
\emph{Verdicts:}
Wrong rate is the fraction of incorrect first verdicts among released verdicts;
Any wrong is the fraction of runs with at least one incorrect verdict.}
\label{tab:mc_metrics}
\begin{tabular}{lcccccc}
\toprule
$(a,b)$ & \multicolumn{2}{c}{Coverage} & \multicolumn{2}{c}{Release} & \multicolumn{2}{c}{Verdicts} \\
Sound. & Interval viol. & Any interval & Released & Delay & Wrong rate & Any wrong \\
\midrule
Pointwise & 0.000 $\pm$ 0.003 & 0.021 $\pm$ 0.143 & 0.924 $\pm$ 0.014 & 21.279 $\pm$ 0.920 & 0.000 $\pm$ 0.000 & 0.002 $\pm$ 0.045 \\
Local & 0.000 $\pm$ 0.000 & 0.000 $\pm$ 0.000 & 0.877 $\pm$ 0.025 & 24.204 $\pm$ 1.067 & 0.000 $\pm$ 0.000 & 0.000 $\pm$ 0.000 \\
Uniform & 0.000 $\pm$ 0.000 & 0.000 $\pm$ 0.000 & 0.662 $\pm$ 0.024 & 33.304 $\pm$ 1.436 & 0.000 $\pm$ 0.000 & 0.000 $\pm$ 0.000 \\
\bottomrule
\toprule
$(10a,10b)$ & \multicolumn{2}{c}{Coverage} & \multicolumn{2}{c}{Release} & \multicolumn{2}{c}{Verdicts} \\
Sound. & Interval viol. & Any interval & Released & Delay & Wrong rate & Any wrong \\
\midrule
Pointwise & 0.000 $\pm$ 0.002 & 0.014 $\pm$ 0.118 & 0.958 $\pm$ 0.002 & 14.372 $\pm$ 0.178 & 0.000 $\pm$ 0.000 & 0.000 $\pm$ 0.000 \\
Local & 0.000 $\pm$ 0.000 & 0.000 $\pm$ 0.000 & 0.955 $\pm$ 0.002 & 15.277 $\pm$ 0.191 & 0.000 $\pm$ 0.000 & 0.000 $\pm$ 0.000 \\
Uniform & 0.000 $\pm$ 0.000 & 0.000 $\pm$ 0.000 & 0.944 $\pm$ 0.003 & 18.335 $\pm$ 0.243 & 0.000 $\pm$ 0.000 & 0.000 $\pm$ 0.000 \\
\bottomrule
\end{tabular}
\end{table}

\subsection{RQ4: Statistical Uncertainty Intervals}

Our objective is to assess how the parameter choices affect the half-width of the statistical uncertainty interval.
For all experiments in this subsection, we set the past discount factor to zero, i.e., $r=0$, and monitor the expected discounted average. If not explicitly state we set the error probability to $\delta=0.01$.

\subparagraph{Convergence behaviour.}
In Fig.~\ref{fig:stat:parameter_convergence}, we study how the statistical error term and the deterministic tail error converge to their limiting values.
We vary the future discount factor, the sub-Gaussian norm, and the soundness notion.
Higher sub-Gaussian norms lead to larger statistical error terms and therefore wider statistical uncertainty intervals.
Larger future discount factors spread the normalised weights over a longer effective window, which reduces the limiting statistical error for the discounted average.
At the same time, larger future discount factors increase the deterministic tail error for finite observation horizons, so more observations are required before the tail error becomes negligible.

\begin{figure}
    \centering
    \includegraphics[width=1.0\linewidth]{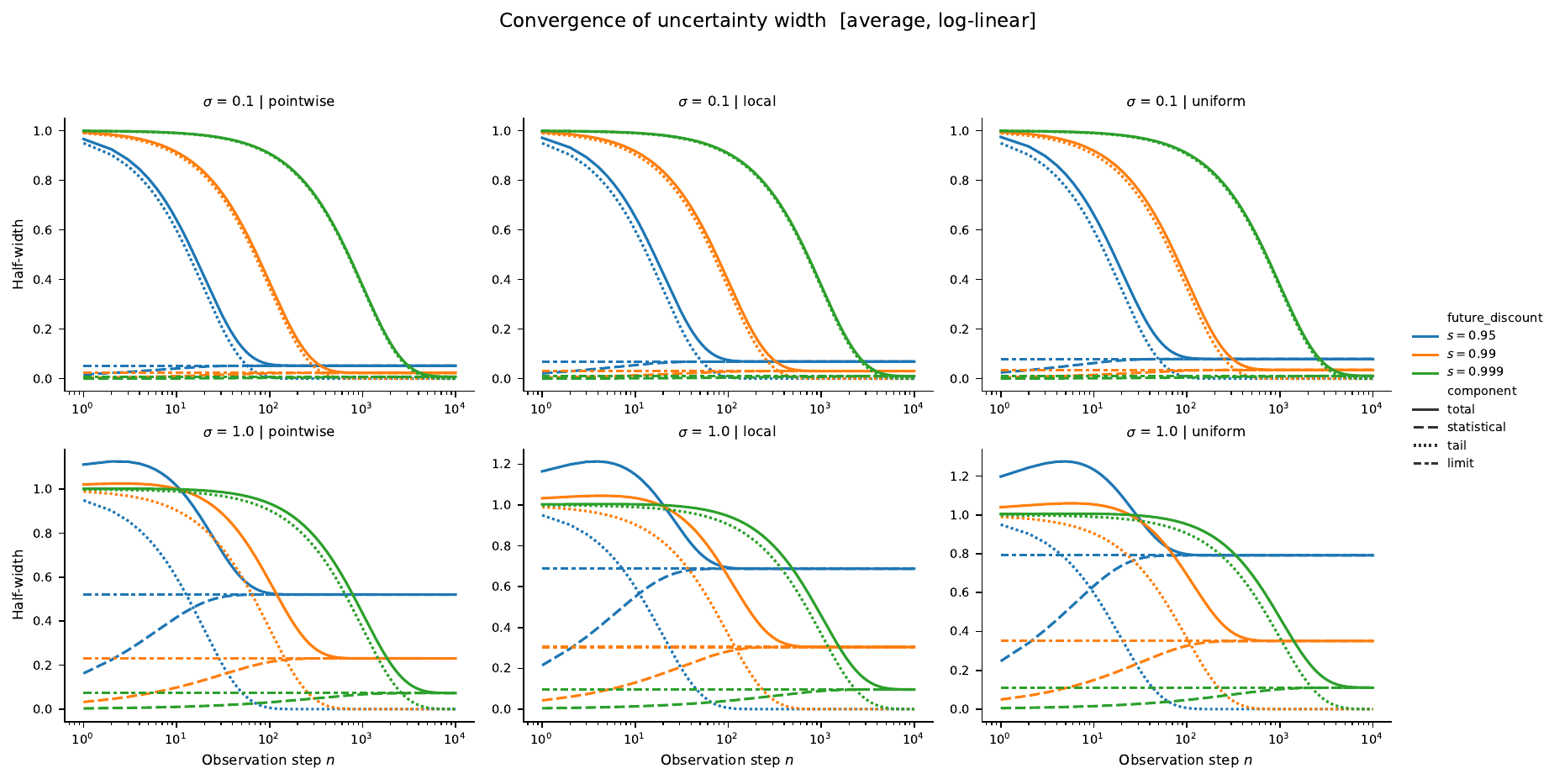}
    \caption{Convergence of the error components to their limiting values.
    The figure shows different sub-Gaussian norms (rows) and future discount factors (hue) for each soundness notion (columns).}
    \label{fig:stat:parameter_convergence}
\end{figure}

\subparagraph{Limit behaviour.}
In Fig.~\ref{fig:stat:parameter_limit}, we study the limiting statistical error bound as a function of the future discount factor.
The error probability has a comparatively small effect relative to the sub-Gaussian norm.
Decreasing the error probability increases the bound logarithmically, while increasing the sub-Gaussian norm increases the bound directly.
The bounds become practically useful mainly for large discount factors close to $1$, where the discounted average aggregates over a longer effective window.

\begin{figure}
    \centering
    \includegraphics[width=1.0\linewidth]{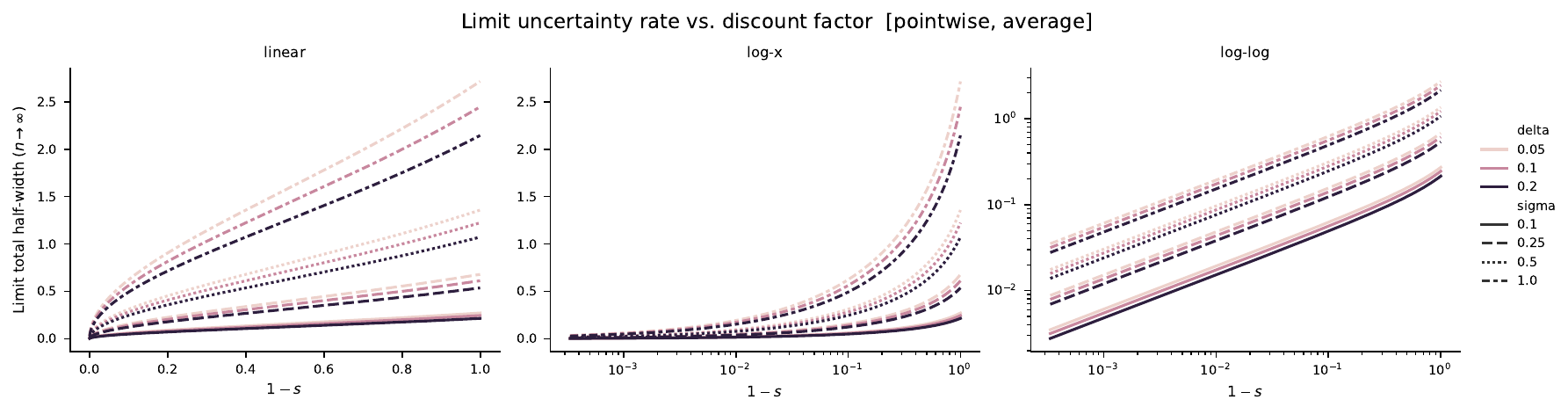}
    \caption{Effect of the future discount factor on the limiting statistical error bound for different error probabilities $\delta$ (hue) and sub-Gaussian norms $\sgn$ (style).
    To visualise the effects, the axis scales vary between columns.}
    \label{fig:stat:parameter_limit}
\end{figure}

\subparagraph{Uniform soundness.}
In Fig.~\ref{fig:stat:uniform}, we study the effect of the monitored time index $t$ on the uniform statistical error bound.
The uniform bound uses a time-dependent error budget, and therefore increases with $t$.
As expected, the increase is logarithmic in the monitored time index.

\begin{figure}
    \centering
    \begin{subfigure}{0.49\linewidth}
        \centering
        \includegraphics[width=\linewidth]{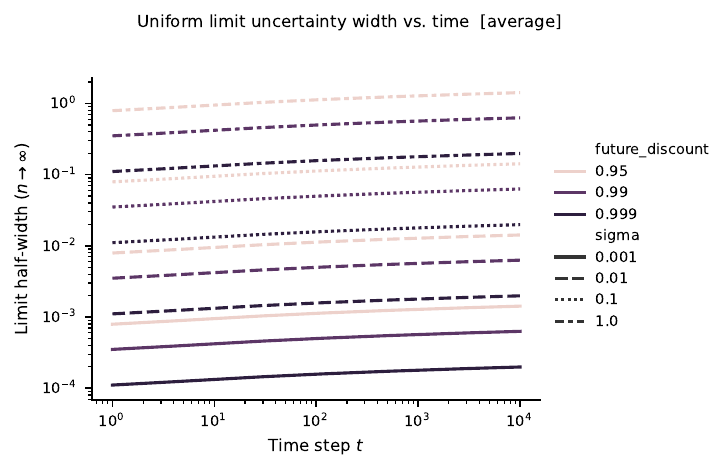}
    \end{subfigure}
    \hfill
    \begin{subfigure}{0.45\linewidth}
        \centering
        \includegraphics[width=\linewidth]{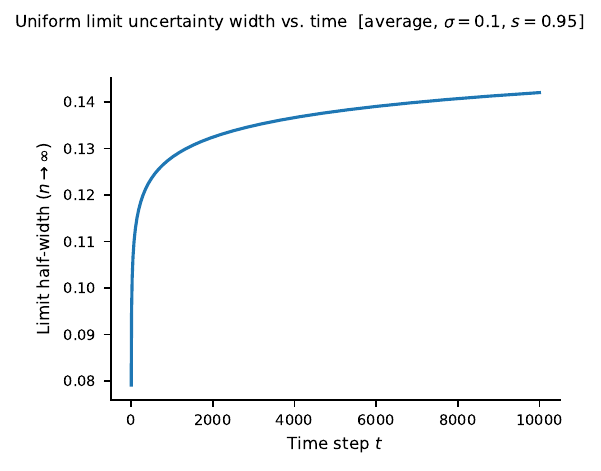}
    \end{subfigure}
    \caption{Effect of the monitored time index on the uniform statistical error bound.
    The left plot shows the effect on a log-log scale for different discount factors (hue) and sub-Gaussian norms (style).
    The right plot shows the same effect on a linear scale for one representative parameter choice.}
    \label{fig:stat:uniform}
\end{figure}

\end{document}